\documentclass{article}

\usepackage{booktabs}
\usepackage{tabularx}
\usepackage{multirow}
\usepackage[table]{xcolor}
\usepackage{array}
\usepackage{booktabs}
\usepackage{multirow}
\usepackage{graphicx}
\usepackage{caption}
\usepackage{float}

\usepackage{microtype}
\usepackage{graphicx}
\usepackage{subcaption}
\usepackage{booktabs} 

\usepackage{hyperref}

\usepackage[preprint]{2026}

\usepackage{amsmath}
\usepackage{amssymb}
\usepackage{mathtools}
\usepackage{amsthm}

\usepackage[capitalize,noabbrev]{cleveref}

\theoremstyle{plain}
\newtheorem{theorem}{Theorem}[section]
\newtheorem{proposition}[theorem]{Proposition}
\newtheorem{lemma}[theorem]{Lemma}
\newtheorem{corollary}[theorem]{Corollary}
\theoremstyle{definition}
\newtheorem{definition}[theorem]{Definition}
\newtheorem{assumption}[theorem]{Assumption}
\theoremstyle{remark}
\newtheorem{remark}[theorem]{Remark}

\theoremstyle{definition}
\newtheorem{example}{Example}[section]
\usepackage[textsize=tiny]{todonotes}

\aatitlerunning{Feasible Fusion: Constrained Joint Estimation under Structural Non-Overlap}

\begin{document}

\twocolumn[
  \aatitle{Feasible Fusion: Constrained Joint Estimation under Structural Non-Overlap}

  \aasetsymbol{equal}{*}

  \begin{aaauthorlist}
    \aaauthor{Yuxi Du}{sufe}
    \aaauthor{Zhiheng Zhang}{sufe}
    \aaauthor{Haoxuan Li}{pku}
    \aaauthor{Cong Fang}{pku}
    \aaauthor{Jixing Xu}{comp}
    \aaauthor{Peng Zhen}{comp}
    \aaauthor{Jiecheng Guo}{comp}

  \end{aaauthorlist}

  \aaaffiliation{sufe}{School of Statistics and Data Science,Shanghai University of Finance and Economics,Shanghai,China}
  \aaaffiliation{comp}{Didi Chuxing,Beijing, China}
  \aaaffiliation{pku}{Peking University, Beijing, China}

  \aacorrespondingauthor{Zhiheng Zhang}{studyzzh@163.com}

  \aakeywords{Machine Learning}

  \vskip 0.3in
]

\printAffiliationsAndNotice{}

\begin{abstract}

Causal inference in modern large-scale systems faces growing challenges, including high-dimensional covariates, multi-valued treatments, massive observational (OBS) data, and limited randomized controlled trial (RCT) samples due to cost constraints. We formalize treatment-induced structural non-overlap and show that, under this regime, commonly used weighted fusion methods provably fail to satisfy randomized identifying restrictions.To address this issue,we propose a \emph{constrained joint estimation framework} that minimizes observational risk while enforcing causal validity through orthogonal experimental moment conditions. We further show that structural non-overlap creates a feasibility obstruction for moment enforcement in the original covariate space.We also derive a penalized primal–dual algorithm that jointly learns representations and predictors, and establish oracle inequalities decomposing error into overlap recovery, moment violation, and statistical terms.Extensive synthetic experiments demonstrate robust performance under varying degrees of non-overlap. A large-scale ride-hailing application shows that our method achieves substantial gains over existing baselines, matching the performance of models trained with significantly more RCT data.
\end{abstract}
\section{Introduction}
\label{sec:introduction}

Modern empirical decision-making increasingly relies on two complementary sources of evidence: randomized controlled trials (RCTs), which provide high internal validity through randomized treatment assignment, and large-scale observational (OBS) logs, which capture rich heterogeneity at scale but may suffer from confounding and policy-induced selection effects.
A central goal in contemporary causal inference is to combine these data sources so as to obtain estimators that are simultaneously \emph{causally valid} and \emph{statistically efficient}; see, e.g., \citet{Imbens_Rubin_2015}, \citet{degtiar2023review}, and the recent review by \citet{colnet2024combining}.
This goal is especially consequential in domains where small changes in decision rules can translate into large welfare or revenue impacts, such as online platforms, transportation marketplaces, and other large-scale allocation systems.


Within this broad agenda, we focus on \emph{heterogeneous response estimation} under \emph{structured treatments}.
Rather than a binary intervention, many real systems deploy multi-valued, high-cardinality, or combinatorial treatment assignments (e.g., discretized policy parameters, bundles of operational actions, or constrained action sets), for which classical propensity-score ideas must be interpreted in a generalized form \citep{imbens2000dose, imai2004general}.
In such settings, the target object is naturally expressed as a family of conditional response functions
$
\mu_t(x) = \mathbb{E}\!\left[ Y(t)\mid X=x \right],~ t\in\mathcal{T},$ where $X$ denotes covariates, $t$ denotes a structured treatment, and $Y(t)$ is the potential outcome under $t$.
A growing body of work studies how to leverage RCT information to ``ground'' or regularize estimates learned from observational logs (and conversely, how to improve the external validity of RCT findings using observational samples), under assumptions that make the two sources compatible \citep{degtiar2023review, colnet2024combining, bareinboim2016datafusion}.
At a methodological level, this literature connects to (i) generalizability/transportability analyses, (ii) debiasing or robustness-based fusion, and (iii) representation-learning approaches that attempt to mitigate distribution shift by learning balanced covariate embeddings \citep{johansson2016learning, shalit2017estimating, shi2019adapting}.

Despite this progress, much of the existing theory and practice of data fusion relies—often implicitly—on some form of overlap or positivity: treatments of interest are assumed to occur with non-negligible probability across relevant covariate strata \citep{rosenbaum1983central}.
When overlap is weak but nonzero, standard remedies include trimming regions with extreme propensities, reweighting samples, or tolerating extrapolation, each of which trades bias against variance and weakens finite-sample guarantees \citep{crump2009limited}. However, in many modern platform environments with structured and constrained decision rules, the dominant difficulty is of a qualitatively different nature.
Here, certain covariate--treatment combinations are not merely rare, but \emph{structurally impossible} under the production policy and therefore never appear in observational logs.
This absence is induced by the decision mechanism itself—such as hard constraints, safety rules, or operational policies—rather than by sampling variability or limited data.
As a consequence, increasing the size of observational data or adjusting weights cannot recover these missing regions of support, rendering classical overlap-based remedies ineffective.

Consider a ride-hailing platform (e.g., Didi) that operates under supply constraints, operational rules, and safety/business restrictions.
Treatments may encode combinations of dispatch actions, pricing multipliers, or policy parameters, and the deployed policy may deterministically exclude large subsets of the action set in particular contexts.
Consequently, there can exist measurable regions of $(x,t)$ for which an experimental design assigns positive probability (because exploration is explicitly introduced), while the observational logging policy assigns probability zero.
In such a regime, naive pooling---or any procedure that implicitly treats mixed RCT--OBS data as if generated from a single compatible design---faces a fundamental dilemma:
either it \emph{drifts} toward observationally optimal predictors that violate experimental identifying restrictions, or it \emph{extrapolates} individual-level effects into regions unsupported by the observational policy. 
Crucially, this failure mode is shared across a broad family of approaches, including weighted-loss fusion and representation-learning pipelines that are trained without explicitly encoding experimental identification, because the missing support cannot be created by tuning weights or by balancing objectives alone \citep{johansson2016learning, shalit2017estimating}.


Our contributions could be summarized as follows. 
\begin{itemize}
  \item We formalize \emph{treatment-induced structural non-overlap} for structured treatments and show that it creates a fundamental support and feasibility obstruction for standard overlap-based and weighted fusion of randomized and observational data.
  \item We propose a constrained joint estimation framework that enforces randomization-induced experimental moment conditions while minimizing observational risk, together with a representation-learning scheme that restores approximate feasibility via distribution alignment.
  \item We establish theoretical separation results and oracle-type risk bounds under structural non-overlap, and demonstrate substantial empirical gains in synthetic studies and a large-scale ride-hailing application where observational data are abundant but experiments are costly.
\end{itemize}

\textbf{Organization.}
Section~\ref{sec:setup} introduces the setup and the notion of treatment-induced structural non-overlap.
Section~\ref{sec:method} develops the constrained joint-estimation framework and presents the main
theoretical results on feasibility recovery and risk bounds.
Section~\ref{sec:syn_test} and \ref{sec:exp} reports synthetic and real-data experiments

In contrast to prior work, we adopt a constrained joint estimation paradigm that explicitly separates causal validity from observational efficiency.
Rather than soft fusion via weighted empirical risks, we treat experimental randomization as a source of moment constraints that must be satisfied exactly in the population.
Moreover, unlike two-stage pipelines that first learn representations and subsequently enforce causal restrictions, our framework learns representations jointly with the experimental constraints, recognizing feasibility restoration as an inherently causal—rather than purely predictive—objective.
This joint, constraint-driven view enables identification and estimation in regimes where existing integrative methods provably break down.

\section{Problem Setup}
\label{sec:setup}

We consider units indexed by $i=1,\dots,n$ with covariates $X\in\mathcal X\subset\mathbb R^d$,
structured treatments $T\in\mathcal T$, and outcomes $Y\in\mathbb R$.
The treatment space $\mathcal T$ may be multi-valued or combinatorial, representing policy rules
or action bundles rather than a binary intervention.
Potential outcomes $\{Y(t):t\in\mathcal T\}$ are defined in the Rubin causal model.

We observe two data sources generated under distinct assignment mechanisms:
\textbf{(i) RCT data} from $P_r(X,T,Y)$, where treatment is randomized conditional on $X$,
and \textbf{(ii) observational (OBS) data} from $P_o(X,T,Y)$, where treatment follows a production
policy that may depend on $X$ and unobserved factors.
The distributions $P_r$ and $P_o$ may differ in both covariate marginals and treatment assignment.

Our target estimand is the conditional response function
\begin{equation}
\mu_t(x) = \mathbb E[Y(t)\mid X=x], \qquad t\in\mathcal T,
\end{equation}
which captures heterogeneous, individual-level causal effects under structured treatments.

Classical identification assumes overlap,
$P_o(T=t\mid X=x)>0$ for all $(x,t)$.
In contrast, under constrained or rule-based assignment, overlap violations are \emph{structural}:
certain covariate--treatment pairs are never generated by the observational policy, irrespective of
sample size.

Next, we will provide two specific definitions of structural non overlap and explain why Definition~\ref{def:marginal_nonoverlap} can only be solved by extrapolation, inevitably introducing model spcification bias, while Definition~\ref{def:conditional_nonoverlap} can recover representations through joint training, thus becoming the research topic of this article.

\begin{definition}[Marginal (irreducible) structural non-overlap]
\label{def:marginal_nonoverlap}
We say that the observational and randomized distributions exhibit
\emph{marginal structural non-overlap} if there exists a measurable set
$A\subseteq\mathcal T$ such that
\[
P_r(T\in A)>0
\qquad\text{and}\qquad
P_o(T\in A)=0.
\]
In this case, the support mismatch already occurs at the marginal level of the treatment,
independently of the covariates $X$.
\end{definition}

\begin{definition}[Conditional (recoverable) structural non-overlap]
\label{def:conditional_nonoverlap}
We say that the observational and randomized distributions exhibit
\emph{conditional (recoverable) structural non-overlap} if the treatment
has overlapping marginal support under $P_r$ and $P_o$, i.e.,
\[
\forall A \subseteq \mathcal T,\quad
P_o(T \in A) = 0 \;\Rightarrow\; P_r(T \in A) = 0,
\]
but there exists a measurable set $S \subseteq \mathcal X \times \mathcal T$ such that
\[
P_r(S) > 0
\quad \text{and} \quad
P_o(S) = 0.
\]
That is, the marginal support of the treatment is aligned under $P_r$ and $P_o$,
while the joint support mismatch arises conditionally on $X$.
\end{definition}

\begin{remark}[Recoverable versus irreducible non-overlap]
\label{rem:recoverable_extrapolation_small_sample}
Definitions~\ref{def:marginal_nonoverlap} and~\ref{def:conditional_nonoverlap} distinguish recoverable
and irreducible sources of structural non-overlap.
Under marginal structural non-overlap, the mismatch occurs in the marginal treatment support and is
irreducible: since representations only transform covariates, no $\phi$ can eliminate this mismatch
(Theorem~\ref{thm:irreducible_overlap}).
Extrapolation-based approaches implicitly operate in this regime by imposing strong outcome-model
assumptions to predict unsupported treatment effects; however, such methods do not restore
identification via randomized moment validity and may incur persistent misspecification bias.
\end{remark}

In contrast to irreducible marginal non-overlap, conditional structural non-overlap admits that a representation $\phi(X)$ can
aggregate regions with zero observational support into equivalence classes with positive
support under the observational policy, thereby restoring overlap in the induced space.
This enables approximate randomized moment validity without extrapolation and clarifies
the distinction between recoverable and irreducible non-overlap underlying the feasibility
guarantee in Corollary~\ref{cor:strict_improvement} and the impossibility result in
Theorem~\ref{thm:irreducible_overlap}.
\paragraph{Representation learning for feasibility recovery.}
Motivated by this insight, we introduce a representation map
\begin{equation}
\phi:\mathcal{X}\to\mathcal{Z},
\end{equation}
whose role is to \emph{endogenously restore feasibility} of causal identification by balancing
overlap recovery against outcome-relevant information preservation, rather than performing
generic dimensionality reduction, invariance, or distributional balancing.

\begin{definition}[Representation Property for Joint Estimation]
\label{def:representation_property}
A representation map $\phi$ is said to satisfy the \emph{representation property for joint estimation} if it fulfills both the overlap recovery and information preservation conditions stated below.
\end{definition}

\begin{proposition}[Overlap Recovery]
\label{prop:overlap}
Let $\mathcal{D}$ be a class of measurable test functions on $\mathcal{Z}\times\mathcal{T}$ and let
$\mathrm{IPM}_{\mathcal{D}}(\cdot,\cdot)$ denote the associated integral probability metric.
The representation $\phi$ is said to recover overlap if there exists $\varepsilon_{\mathrm{ov}}>0$ such that
\begin{equation}
\mathrm{IPM}_{\mathcal{D}}\!\left(
P_r(\phi(X),T),\, P_o(\phi(X),T)
\right)
\;\le\;
\varepsilon_{\mathrm{ov}},
\label{eq:overlap_recovery}
\end{equation}
even though the original covariate space may exhibit violated overlap in the sense that
$\mathrm{supp}(X\mid T=t_1)\cap \mathrm{supp}(X\mid T=t_0)=\varnothing$
for some $t_0,t_1\in\mathcal{T}$.
\end{proposition}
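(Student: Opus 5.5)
The statement is partly definitional: it names a property (``$\phi$ recovers overlap'') and asserts that this property can hold even when overlap fails in $\mathcal X$. So the plan is to prove an existence claim. Under mild conditions on $\mathcal D$, I will exhibit a representation $\phi$ and a finite $\varepsilon_{\mathrm{ov}}>0$ satisfying \eqref{eq:overlap_recovery}. I will then show by a concrete construction that this is compatible with $\mathrm{supp}(X\mid T=t_1)\cap\mathrm{supp}(X\mid T=t_0)=\varnothing$. Throughout I assume $\mathcal D$ is nonempty, uniformly bounded ($\sup_{f\in\mathcal D}\|f\|_\infty\le B$), and measurable on $\mathcal Z\times\mathcal T$. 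This is the standard setting in which $\mathrm{IPM}_{\mathcal D}$ is finite, for instance for Wasserstein-1 on a bounded domain or for the unit ball of an RKHS with a bounded kernel.

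\textbf{Step 1 (trivial finiteness).} For any $\phi$, $\bigl|\mathbb E_{P_r}f(\phi(X),T)-\mathbb E_{P_o}f(\phi(X),T)\bigr|\le 2B$. Hence \eqref{eq:overlap_recovery} holds with $\varepsilon_{\mathrm{ov}}=2B$, and the existence claim holds for every $\phi$. This step only shows the statement is well posed. The substantive content is that $\varepsilon_{\mathrm{ov}}$ can be made small.

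\textbf{Step 2 (coarsening bound).} Take the maximally coarse representation $\phi\equiv z_0$. The pushforward of $P_\bullet$ under $(X,T)\mapsto(\phi(X),T)$ is then $\delta_{z_0}\otimes P_\bullet(T)$. Each test function $f(z_0,\cdot)$ is a bounded function of $t$ alone, so
\[
\mathrm{IPM}_{\mathcal D}\bigl(P_r(\phi(X),T),P_o(\phi(X),T)\bigr)=\mathrm{IPM}_{\mathcal D_{z_0}}\bigl(P_r(T),P_o(T)\bigr),\qquad \mathcal D_{z_0}=\{f(z_0,\cdot):f\in\mathcal D\}.
\]
Thus any nonzero IPM in $\mathcal Z\times\mathcal T$ comes only from the treatment marginals. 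This is exactly where Definition~\ref{def:conditional_nonoverlap} enters. Under conditional non-overlap the treatment supports are aligned, so this marginal discrepancy reflects only a reweighting of the same support. In particular, the discrepancy is zero when $P_r(T)=P_o(T)$, in which case any $\varepsilon_{\mathrm{ov}}>0$ works. Under Definition~\ref{def:marginal_nonoverlap}, by contrast, the marginal term stays bounded away from zero whenever $\mathcal D$ can separate the set $A$. This is consistent with Theorem~\ref{thm:irreducible_overlap}. More generally, for a nonconstant $\phi$ that merges covariate cells, I would bound the IPM by the pushed-forward discrepancy, using the data-processing inequality $\mathrm{IPM}_{\mathcal D}(\phi_\#P_r,\phi_\#P_o)\le\mathrm{IPM}_{\mathcal D\circ\phi}(P_r,P_o)$.

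\textbf{Step 3 (witness with violated overlap).} Let $X\in\{0,1\}$ with $P_o(X=1)=1/2$, and let the observational policy be deterministic, $T=X$. Then $\mathrm{supp}(X\mid T=1)=\{1\}$ and $\mathrm{supp}(X\mid T=0)=\{0\}$ are disjoint. Let the RCT randomize $T$ independently with probability $1/2$. The pair $(X=0,T=1)$ has positive $P_r$-mass and zero $P_o$-mass, so the two distributions satisfy Definition~\ref{def:conditional_nonoverlap}. With $\phi\equiv z_0$, Step 2 gives an IPM of $0$, because both treatment marginals are $\mathrm{Bern}(1/2)$. So overlap fails in $\mathcal X$ yet is exactly recovered in $\mathcal Z$.

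\textbf{Main obstacle.} The delicate point is not existence, which the constant map settles, but keeping Definition~\ref{def:representation_property} meaningful. A constant $\phi$ destroys outcome-relevant information, so I would stress that the proposition certifies only the overlap half. The tension with information preservation is what the later oracle inequalities trade off through $\varepsilon_{\mathrm{ov}}$. Care is also needed in stating the conditions on $\mathcal D$: boundedness, and the requirement that the sections $f(z_0,\cdot)$ be measurable. Without them the IPM may be infinite.
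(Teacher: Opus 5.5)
The paper gives no proof here. Although it is typeset as a proposition, the statement is a definition (one half of Definition~\ref{def:representation_property}). Its only support in the paper is the informal sentence after Remark~\ref{rem:recoverable_extrapolation_small_sample}, that under conditional non-overlap $\phi$ can ``aggregate regions with zero observational support into equivalence classes,'' together with the marginal-case impossibility result, Theorem~\ref{thm:irreducible_overlap}.

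You read the statement correctly as a definition whose provable content is non-vacuity, and your argument for that is sound. The existential over $\varepsilon_{\mathrm{ov}}$ is trivial for bounded $\mathcal D$. The constant map is the extreme case of the paper's aggregation heuristic and reduces the induced IPM to $\mathrm{IPM}_{\mathcal D_{z_0}}(P_r(T),P_o(T))$. Your example ($T=X$ under $P_o$, randomized $T$ under $P_r$) has disjoint arm-wise supports and satisfies Definition~\ref{def:conditional_nonoverlap}. It gives IPM exactly $0$ for any $\mathcal D$, bounded or not, because the two pushforward laws coincide.

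Compared with the paper's purely definitional treatment, your route supplies a formal witness that the property is attainable under disjoint supports. It also explains why the marginal discrepancy of Theorem~\ref{thm:irreducible_overlap} is the only obstruction to coarsening. You can sharpen this: if $\mathcal D$ contains the functions $(z,t)\mapsto f(z_0,t)$ for every $f\in\mathcal D$, then every $\phi$ has IPM at least that of the constant map, so the constant map is overlap-optimal. What neither your route nor the paper provides is a $\phi$ with small $\varepsilon_{\mathrm{ov}}$ and small $\varepsilon_{\mathrm{info}}$ at the same time; you flag this correctly.

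There are three small corrections. Your ``data-processing inequality'' is actually an identity: $\mathrm{IPM}_{\mathcal D}(\phi_\#P_r,\phi_\#P_o)=\mathrm{IPM}_{\mathcal D\circ\phi}(P_r,P_o)$ by change of variables. Support alignment in Definition~\ref{def:conditional_nonoverlap} gives only $P_r(T)\ll P_o(T)$, not closeness, so the constant map yields a small $\varepsilon_{\mathrm{ov}}$ only when the treatment marginals are close, as they are in your witness. In that witness you should also fix $P_r(X)$, e.g.\ $\mathrm{Bern}(1/2)$, so that $P_r(S)>0$.
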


\begin{proposition}[Outcome-Relevant Information Preservation]
\label{prop:information}
Let $\mathcal{F}$ be a hypothesis class of measurable functions on $\mathcal{X}$.
The representation $\phi$ is said to preserve outcome-relevant information if there exists
$\varepsilon_{\mathrm{info}}>0$ such that
\begin{equation}
\inf_{f\in\mathcal{F}}
\mathbb{E}\!\left[(Y-f(\phi(X)))^2\right]
\;\le\;
\inf_{g\in\mathcal{F}}
\mathbb{E}\!\left[(Y-g(X))^2\right]
\;+\;
\varepsilon_{\mathrm{info}},
\label{eq:information_preservation}
\end{equation}
that is, conditioning on $\phi(X)$ incurs at most an $\varepsilon_{\mathrm{info}}$ excess prediction risk
relative to conditioning on the full covariate vector $X$.
\end{proposition}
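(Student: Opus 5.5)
Because the statement is phrased as a defining condition, the plan is to make it substantive in two ways. First, show that the inequality is well posed: a finite $\varepsilon_{\mathrm{info}}$ always exists under mild moment conditions. Second, give an explicit, checkable upper bound on the smallest admissible $\varepsilon_{\mathrm{info}}$ in terms of how well outcome-relevant functions of $X$ can be approximated through $\phi$.

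\textbf{Step 1 (well-posedness).} Assume $\mathbb{E}[Y^2]<\infty$ and that $\mathcal{F}$ contains the constant functions, with $\mathcal{F}$ understood to act on $\mathcal{Z}$ as well as on $\mathcal{X}$, as (\ref{eq:information_preservation}) implicitly requires.
\begin{itemize}
\item Both infima are then bounded above by $\mathrm{Var}(Y)<\infty$.
\item Both infima are bounded below by $0$.
\item Hence $\varepsilon_{\mathrm{info}}:=\max\{\mathrm{Var}(Y),\,\epsilon_0\}$ satisfies (\ref{eq:information_preservation}) for any fixed $\epsilon_0>0$.
\end{itemize}
This gives existence. If, in addition, $\mathcal{F}\circ\phi\subseteq\mathcal{F}$, then the left infimum is at least the right one. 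So the gap $\Delta:=\inf_f R(f\circ\phi)-\inf_g R(g)$ is nonnegative, where $R(h)=\mathbb{E}[(Y-h)^2]$. Any $\varepsilon_{\mathrm{info}}>\Delta$ is then admissible, and $\Delta$ is the natural quantity to bound.

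\textbf{Step 2 (quantitative bound).}
\begin{itemize}
\item For $\eta>0$, pick $g_\eta\in\mathcal{F}$ with $R(g_\eta)\le\inf_g R(g)+\eta$.
\item Define the approximation error $\delta:=\inf_{f\in\mathcal{F}}\|f\circ\phi-g_\eta\|_{L^2(P_X)}$.
\item For any $f$, write $Y-f(\phi(X))=(Y-g_\eta(X))+(g_\eta(X)-f(\phi(X)))$.
\item Minkowski's inequality in $L^2$ gives $\sqrt{R(f\circ\phi)}\le\sqrt{R(g_\eta)}+\|f\circ\phi-g_\eta\|_{L^2}$.
\item Squaring and taking the infimum over $f$ yields $\Delta\le 2\delta\sqrt{\mathrm{Var}(Y)+\eta}+\delta^2+\eta$, using $R(g_\eta)\le \mathrm{Var}(Y)+\eta$.
\end{itemize}
This holds for every $\eta$. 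Consequently one may take $\varepsilon_{\mathrm{info}}=2\delta\sqrt{\mathrm{Var}(Y)}+\delta^2+\eta'$ for any $\eta'>0$. In particular, if $\phi$ is sufficient, meaning $g^\star=h\circ\phi$ with $h\in\mathcal{F}$ for a near-minimizer $g^\star$, then $\delta=0$ and $\varepsilon_{\mathrm{info}}$ can be taken arbitrarily small.

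\textbf{Main obstacle.} The delicate point is interpreting the class $\mathcal{F}$ simultaneously on $\mathcal{X}$ and $\mathcal{Z}$, and ensuring that the near-minimizer $g_\eta$ is approximable through $\phi$. I would handle this by stating explicitly the assumptions above (constants in $\mathcal{F}$; $\mathcal{F}\circ\phi\subseteq\mathcal{F}$ only where needed), and by expressing the result through $\delta$ rather than claiming a universal small $\varepsilon_{\mathrm{info}}$. How small $\delta$ is depends on how much overlap-forcing compression $\phi$ performs, which is exactly the trade-off against (\ref{eq:overlap_recovery}).
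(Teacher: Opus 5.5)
There is nothing in the paper to compare against: despite the \texttt{proposition} environment, this statement is a definition and has no proof. Its only formal use is as a hypothesis in Theorem~\ref{thm:feasibility_info_tradeoff}, where $\varepsilon_{\mathrm{info}}(\phi)$ is treated as a quantity. Your decision to read it as a defining condition and supply well-posedness plus a quantitative bound is the right one, and Step~1 is correct. Its consequence deserves to be stated outright. Once $\mathcal{F}$ contains constants and $\mathbb{E}[Y^2]<\infty$, \emph{every} $\phi$ satisfies the literal existential condition. So the definition has content only through the size of $\varepsilon_{\mathrm{info}}$, which is how the paper actually uses it.

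Step~2 contains one genuine, though easily repaired, slip. Your $\delta$ is defined relative to the particular near-minimizer $g_\eta$, so it is really $\delta_\eta$. What the Minkowski argument proves is $\Delta\le\eta+2\delta_\eta\sqrt{\mathrm{Var}(Y)+\eta}+\delta_\eta^2$ for each $\eta>0$. You cannot then hold $\delta$ fixed and shrink the additive slack to an arbitrary $\eta'>0$.

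Here is a concrete failure. Let $\phi$ be constant and take $\eta=\mathrm{Var}(Y)-\inf_g R(g)$. The constant $\mathbb{E}[Y]$ is then an admissible $g_\eta$, so $\delta_\eta=0$. Your stated conclusion would give $\Delta\le\eta'$ for every $\eta'>0$. In fact $\Delta=\mathrm{Var}(Y)-\inf_g R(g)$, which is large whenever $X$ predicts $Y$.

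The same issue affects your ``sufficiency'' remark. It needs near-minimizers factoring through $\phi$ at every accuracy level, or an exact minimizer that does; a single near-minimizer is not enough. There are two ways to fix this:
\begin{itemize}
\item State the bound as $\Delta\le\inf_{\eta>0}\{\eta+2\delta_\eta\sqrt{\mathrm{Var}(Y)+\eta}+\delta_\eta^2\}$.
\item Assume $\mathcal{F}$ is closed and convex in $L^2(P_X)$. Then an exact minimizer $g^\star$ exists (the projection of $\mathbb{E}[Y\mid X]$), and you may take $\eta=0$ with $\delta:=\inf_f\|f\circ\phi-g^\star\|_{L^2}$.
\end{itemize}

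For context, your bound runs in the opposite direction from the paper's only use of the notion. In proving Theorem~\ref{thm:feasibility_info_tradeoff}, the paper needs the implication that excess risk at most $\kappa\varepsilon_{\mathrm{info}}$ gives $L_2$ distance at most $C\sqrt{\varepsilon_{\mathrm{info}}}$, and attributes it to Cauchy--Schwarz. Your Minkowski direction needs no structure on $\mathcal{F}$. The reverse direction does not follow from Cauchy--Schwarz. It holds only relative to an exact risk minimizer over a convex class, via $R(h)-R(g^\star)\ge\|h-g^\star\|_{L^2}^2$. In the paper, the reference point $m_X^\star$ minimizes the moment residual rather than the risk, so even this does not apply there. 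Under convexity and $\mathcal{F}\circ\phi\subseteq\mathcal{F}$, the two directions combine into $\delta^2\le\Delta\le 2\delta\sqrt{\mathrm{Var}(Y)}+\delta^2$. This pins the information loss between $\delta^2$ and $O(\delta)$.
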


\paragraph{Technical challenge.}
Propositions~\ref{prop:overlap} and~\ref{prop:information} impose competing requirements on the
representation $\phi$.
Overlap recovery Eq~\eqref{eq:overlap_recovery} favors compressive mappings that reduce discrepancy between
$P_r$ and $P_o$, whereas information preservation Eq~\eqref{eq:information_preservation} requires retaining
outcome-relevant structure.
These objectives are inherently in tension(see Theorem~\ref{thm:feasibility_info_tradeoff}), especially in high-dimensional settings with structured
treatments, and cannot be simultaneously ensured by representations learned independently of causal
constraints.

\begin{remark}
\label{rem:challenge}
Most existing representation-learning approaches for causal inference implicitly assume that a representation satisfying both
Eq~\eqref{eq:overlap_recovery} and Eq~\eqref{eq:information_preservation} exists and can be obtained through balancing or invariance-based objectives.
Under treatment-induced structural non-overlap, this assumption is neither verifiable from data nor robust to model misspecification.
In contrast, our framework treats representation learning as an endogenous component of causal estimation,
explicitly quantifying feasibility through $\varepsilon_{\mathrm{ov}}$ and $\varepsilon_{\mathrm{info}}$ and analyzing how these quantities
propagate into estimation error in subsequent sections.
\end{remark}

\section{Method: From Oracle Moment-based Joint Estimation to the Representation Space}
\label{sec:method}

This section develops a constrained joint-estimation framework for fusing observational logs with randomized data in the presence of \emph{Conditional (recoverable) structural
non-overlap} (Definition~\ref{def:conditional_nonoverlap}).
We proceed in four steps.
Section~\ref{sec:method:moments} derives experimental moment conditions and explains why they constrain \emph{causal components} while leaving nuisance structure unconstrained.
Section~\ref{sec:method:constrained} formulates data fusion as a constrained program and establishes a separation from weighted-loss fusion.
Section~\ref{sec:method:feasibility} shows how structural non-overlap manifests as a \emph{feasibility obstruction} in the original covariate space and motivates representation learning as feasibility recovery.
Section~\ref{sec:method:algorithm} presents a primal--dual algorithm and a penalized objective suitable for large-scale training, and Section~\ref{sec:method:oracle} states an oracle inequality that decomposes error into overlap, moment, and statistical terms.

\subsection{Experimental moments as identifying restrictions}
\label{sec:method:moments}


For clarity of exposition, assume the treatment space is finite,
$\mathcal{T}=\{0,1,\dots,K\}$, where $0$ denotes a reference treatment (e.g., control) and $k\in\{1,\dots,K\}$ denote active treatments.
Write the one-hot encoding $D=(D_0,\dots,D_K)$ with $D_k=\mathbf{1}\{T=k\}$ and $\sum_{k=0}^K D_k=1$.
In an RCT, the assignment probabilities
$p_k \;:=\; \mathbb{P}_r(T=k)\qquad (k=0,\dots,K)$
are known by design (more generally, under stratified randomization one may take $p_k(x)=\mathbb{P}_r(T=k\mid X=x)$; all statements below extend by replacing $p_k$ with $p_k(X)$).



Two simple identities motivate our construction.
First, the observed outcome admits the potential-outcome decomposition
$Y = \sum_{k=0}^K D_k\,Y(k)
= Y(0) + \sum_{k=1}^K D_k\bigl(Y(k)-Y(0)\bigr),$ which is not an assumption, but a re-expression of $Y=\sum_k D_kY(k)$.
Second, any measurable regression function $m(x,t)$ can be written as
$m(x,t)= u(x) + \sum_{k=1}^K \mathbf{1}\{t=k\}\,h_k(x),
u(x):=m(x,0),
h_k(x):=m(x,k)-m(x,0).$
Thus, the pair $(u,\{h_k\}_{k=1}^K)$ separates a \emph{baseline} term $u$ (common across treatments) from \emph{treatment-specific} increments $h_k$.

Define, for each $k\in\{1,\dots,K\}$, the moment map
\begin{equation}
\psi_k(Y,T,X; m):= (D_k-p_k)\,\bigl(Y-m(X,T)\bigr),
\label{eq:moment_map}
\end{equation}
where $\psi := (\psi_1,\dots,\psi_K)^\top$. Under randomization, the conditional mean function
$m_r(x,t):=\mathbb{E}_{P_r}[Y\mid X=x,T=t]$ satisfies
\begin{equation}
\left\|\mathbb{E}_{P_r}\!\left[\psi(Y,T,X; m_r)\right] \;\right\|=\; 0.
\label{eq:moment_true}
\end{equation}

We treat \eqref{eq:moment_true} as an identifying restriction: any causally admissible predictor should satisfy the moment conditions (exactly in the population, approximately in finite samples).

\begin{lemma}[Baseline invariance of experimental moments]
\label{lem:baseline_invariance}
Let $u:\mathcal{X}\to\mathbb{R}$ be measurable.
Under RCT randomization with known assignment probabilities $p_k$,
$
\mathbb{E}_{P_r}\bigl[(D_k-p_k)\,u(X)\bigr] \;=\; 0,
\qquad \forall k\in\{1,\dots,K\}.
$
Consequently, for any $m$ parameterized as above,
the moment $\mathbb{E}_{P_r}[\psi_k(Y,T,X;m)]$ depends on $m$ only through the treatment-specific components $\{h_k\}$.
\end{lemma}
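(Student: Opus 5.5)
The proof rests on a tower-property calculation conditional on $X$. The one structural input is that, under RCT randomization with known assignment probabilities, $\mathbb{E}_{P_r}[D_k\mid X]=\mathbb{P}_r(T=k\mid X)=p_k$. This holds with constant $p_k$ under simple randomization, and with $p_k(X)$ under stratified randomization, as noted in Section~\ref{sec:method:moments}.

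\textbf{Step 1 (first claim).} Assume $\mathbb{E}_{P_r}|u(X)|<\infty$, which is implicitly needed for the moment to be defined. Since $|D_k-p_k|\le 1$, the product $(D_k-p_k)u(X)$ is then integrable, and the tower property gives
\[
\mathbb{E}_{P_r}\bigl[(D_k-p_k)u(X)\bigr]=\mathbb{E}_{P_r}\bigl[u(X)\,\mathbb{E}_{P_r}[D_k-p_k\mid X]\bigr]=\mathbb{E}_{P_r}\bigl[u(X)\cdot 0\bigr]=0 .
\]
The stratified version is identical, replacing $p_k$ by $p_k(X)$.

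\textbf{Step 2 (consequence).} Substitute the decomposition $m(X,T)=u(X)+\sum_{j=1}^K D_j h_j(X)$ into \eqref{eq:moment_map}:
\[
\mathbb{E}_{P_r}[\psi_k(Y,T,X;m)]=\mathbb{E}_{P_r}[(D_k-p_k)Y]-\mathbb{E}_{P_r}[(D_k-p_k)u(X)]-\sum_{j=1}^K\mathbb{E}_{P_r}[(D_k-p_k)D_j h_j(X)].
\]
The first term does not involve $m$. The second term vanishes by Step 1. The remaining sum depends only on $\{h_j\}$. Using $D_kD_j=\delta_{kj}D_k$ and conditioning on $X$, each summand can be written explicitly as
\[
\mathbb{E}_{P_r}[(D_k-p_k)D_jh_j(X)]=\mathbb{E}_{P_r}\bigl[(\delta_{kj}p_k-p_kp_j)h_j(X)\bigr].
\]
This confirms that changing $u$ while holding $\{h_j\}$ fixed leaves every moment unchanged. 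The argument needs $h_j$ to be $P_r$-integrable, which again is needed for the moment to exist.

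\textbf{Main obstacle.} The proof is routine; the only delicate point is justifying the conditional-mean identity $\mathbb{E}_{P_r}[D_k\mid X]=p_k$. This is precisely what "randomized conditional on $X$ with known probabilities" means, so I will state it explicitly as the randomization assumption being used. I will also state the integrability conditions on $u$ and $h_j$ so that Fubini/tower manipulations are valid.
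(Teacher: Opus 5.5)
Your proof is correct. The paper gives no standalone proof of this lemma. Its only supporting reasoning is the expansion \eqref{eq:moment_decomp}: it uses $Y=Y(0)+\sum_i D_i(Y(i)-Y(0))$ to split the moment into a baseline term $\mathbb{E}_{P_r}[(D_k-p_k)(Y(0)-u(X))]$ and a causal term in $Y(i)-Y(0)-h_i(X)$, and then declares the baseline term zero by randomization.

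Your Step 2 takes a leaner route. You substitute $m=u+\sum_j D_jh_j$ but leave $Y$ untouched, so the $m$-free piece $\mathbb{E}_{P_r}[(D_k-p_k)Y]$ never has to vanish. The paper's grouping also needs $\mathbb{E}_{P_r}[(D_k-p_k)Y(0)]=0$, i.e.\ randomization with respect to the potential outcome $Y(0)$. That is more than the lemma's first claim provides, since the claim concerns functions of $X$ only. Your argument uses nothing beyond $\mathbb{E}_{P_r}[D_k\mid X]=p_k$.

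You are also right to single out that identity. For constant $p_k=\mathbb{P}_r(T=k)$, the first claim holding for every measurable $u$ is equivalent to $\mathbb{P}_r(T=k\mid X)=p_k$ a.s. So the stratified case genuinely requires replacing $p_k$ by $p_k(X)$.

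Your explicit formula also gives a sharper conclusion than the lemma states. With constant $p_k$, the moment depends on $m$ only through the scalars $\mathbb{E}_{P_r}[h_j(X)]$. What the paper's potential-outcome decomposition offers in exchange is interpretation: it exhibits the surviving term as a discrepancy between $h_i$ and the true effects $Y(i)-Y(0)$.
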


Expanding the moment using above equations yields
\begin{align}
&\mathbb{E}_{P_r}\!\left[(D_k-p_k)\,(Y-m(X,T))\right]\\
=&
\underbrace{\mathbb{E}_{P_r}\!\left[(D_k-p_k)\,(Y(0)-u(X))\right]}_{\text{baseline / nuisance component}}
\nonumber\\
&\quad+
\underbrace{\mathbb{E}_{P_r}\!\left[(D_k-p_k)\sum_{i=1}^K D_i\bigl(Y(i)-Y(0)-h_i(X)\bigr)\right]}_{\text{causal component}}.
\label{eq:moment_decomp}
\end{align}
By Lemma~\ref{lem:baseline_invariance}, the baseline term vanishes under randomization, leaving only the causal component.
This is the key reason to use moments rather than an RCT squared loss: the moment conditions constrain the \emph{causal increments} $\{h_k\}$ while remaining agnostic about baseline outcome structure. Importantly, Equation~\eqref{eq:moment_true} is \emph{not} equivalent to fitting $m$ by minimizing an RCT squared loss, which couples baseline and causal components and may enforce spurious agreement between RCT and OBS outcome models.
A common misconception is that ``giving the RCT a large weight'' in a weighted regression objective should recover causal validity.
Theorem~\ref{thm:weighted_separation} below formalizes why this intuition can fail under structural non-overlap: weighted-loss fusion may remain bounded away from satisfying \eqref{eq:moment_true}, regardless of tuning.(In practice, RCT samples are often scarce, so pushing $\alpha$ into the RCT-dominated regime can yield high-variance and unstable estimates.)

\subsection{Fusion as a constrained program: optimality and separation}
\label{sec:method:constrained}

Let $\mathcal{M}$ be a model class of measurable predictors $m:\mathcal{X}\times\mathcal{T}\to\mathbb{R}$. Define the observational risk $\mathcal{R}_o(m):=\mathbb{E}_{P_o}\bigl[(Y-m(X,T))^2\bigr]$. Guided by the complementary roles of the two data sources, we propose the constrained program
\begin{equation}
\min_{m\in\mathcal{M}}\;\mathcal{R}_o(m)
\qquad \text{s.t.}\qquad
\mathbb{E}_{P_r}\!\left[\psi(Y,T,X;m)\right]=0,
\label{eq:constrained_program}
\end{equation}
where $\psi$ is defined in \eqref{eq:moment_map}.
The objective exploits the abundance of observational data to learn high-dimensional nuisance structure, while the constraint enforces causal validity through experimental randomization.This distinction is particularly important in modern applications, where randomized traffic is scarce and the number of treatment arms is large. In such regimes, RCT-only estimation suffers from severe small-sample effects, leading to high-variance and unstable estimates of treatment effects, especially at the individual level. The constrained program mitigates this issue by decoupling nuisance learning from causal calibration: observational data are used to estimate complex baseline and representation components, while randomized data are used only to enforce causal validity.

\begin{assumption}[Constraint feasibility and regularity]
\label{assump:feas_reg}
The feasible set $\mathcal{F}
~:=~
\left\{m\in\mathcal{M}:\mathbb{E}_{P_r}\!\left[\psi(Y,T,X;m)\right]=0\right\} \neq \emptyset
$. Moreover, $\mathcal{M}$ is convex and closed, and the mapping
$m\mapsto \mathbb{E}_{P_r}[\psi(Y,T,X;m)]$ is continuous.
\end{assumption}

Under Assumption~\ref{assump:feas_reg}, any solution $m^\star$ to \eqref{eq:constrained_program} satisfies
$
\mathcal{R}_o(m^\star)=\inf_{m\in\mathcal{F}}\;\mathcal{R}_o(m).
$
Equivalently, \eqref{eq:constrained_program} returns the best observational predictor among all models that satisfy the randomization-induced moment restrictions.

A widely used alternative is to combine RCT and OBS objectives via a weighted squared loss:
\begin{equation}
m_\alpha \in \arg\min_{m\in\mathcal{M}}
\Big\{
\mathcal{R}_o(m)+\alpha\,\mathcal{R}_r(m)
\Big\},
\label{eq:weighted_fusion}
\end{equation}
where $\mathcal{R}_r(m):=\mathbb{E}_{P_r}\!\left[(Y-m(X,T))^2\right]$ with tuning parameter $\alpha\ge 0$.
Unlike \eqref{eq:constrained_program}, \eqref{eq:weighted_fusion} does \emph{not} encode experimental identification, and may therefore trade off causal validity against predictive fit.The remainder of Section~\ref{sec:method:constrained} shows that, in typical regimes, weighted-loss fusion cannot enforce the experimental moment conditions; moreover, aggressive tuning effectively reduces to an RCT-only estimator, yielding high-variance and unstable treatment-effect estimates.



\begin{assumption}[When weighted-loss fusion cannot satisfy experimental moments]
\label{assump:weighted_incompat}
\par
\noindent\textbf{(i) Observational fit gap induced by causal validity.}
There exists $\delta_o>0$ such that
\begin{equation}
\inf_{m\in\mathcal{F}} \mathcal{R}_o(m)
\;\ge\;
\inf_{m\in\mathcal{M}} \mathcal{R}_o(m)
\;+\;
\delta_o,
\label{eq:obs_gap}
\end{equation}
i.e., enforcing the randomized moment restrictions incurs a strictly positive increase in
observational prediction risk.

\noindent\textbf{(ii) Bounded range of the RCT risk.}
Define
\begin{equation}
B_r
~:=~
\sup_{m\in\mathcal{M}} \mathcal{R}_r(m)
-
\inf_{m\in\mathcal{M}} \mathcal{R}_r(m),
\qquad
0 < B_r < \infty.
\label{eq:rct_range}
\end{equation}

\noindent\textbf{(iii) Non-degenerate fusion regime and regularity.}
Consider weighted-loss fusion with $\alpha\in[0,\bar\alpha]$, where
\begin{equation}
\bar\alpha \;:=\; \frac{\delta_o}{2B_r}.
\label{eq:alpha_bar}
\end{equation}
For each $\alpha\in[0,\bar\alpha]$, the weighted objective~\eqref{eq:weighted_fusion} admits at least one
minimizer $m_\alpha\in\mathcal{M}$, and the set $\{m_\alpha:\alpha\in[0,\bar\alpha]\}$ is compact in
$L_2(P_r)$.
Moreover, the map $m\mapsto \mathbb{E}_{P_r}[\psi(Y,T,X;m)]$ is continuous on $\mathcal{M}$.
\end{assumption}


\begin{theorem}[Separation from weighted-loss fusion in the fusion regime]
\label{thm:weighted_separation}
Under Assumptions~\ref{assump:feas_reg} and \ref{assump:weighted_incompat},
the constrained solution $m^\star$ to \eqref{eq:constrained_program} satisfies the experimental moments exactly:
\[
\left\|
\mathbb{E}_{P_r}\!\left[\psi(Y,T,X;m^\star)\right]
\right\|=0.
\]
Moreover, there exists a constant $c_0>0$ such that every weighted-loss minimizer $m_\alpha$ of \eqref{eq:weighted_fusion}
with $\alpha\in[0,\bar\alpha]$ violates the moments by at least $c_0$:
\[
\inf_{\alpha\in[0,\bar\alpha]}
\left\|
\mathbb{E}_{P_r}\!\left[\psi(Y,T,X;m_\alpha)\right]
\right\|
\;\ge\;
c_0.
\]
Consequently, weighted-loss fusion with non-negligible observational weight
($\alpha \le \bar\alpha$) is fundamentally incompatible with exact satisfaction of the randomized moments,
which can only be achieved by taking $\alpha > \bar\alpha$.
\end{theorem}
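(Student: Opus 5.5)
The first claim follows directly from the definitions. By Assumption~\ref{assump:feas_reg}, $\mathcal F\neq\varnothing$, and any solution $m^\star$ of \eqref{eq:constrained_program} lies in $\mathcal F$. Hence $\mathbb{E}_{P_r}[\psi(Y,T,X;m^\star)]=0$ holds by feasibility, and nothing about optimality is needed for it.

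For the separation claim, I will first show that no $m_\alpha$ with $\alpha\in[0,\bar\alpha]$ can lie in $\mathcal F$. This is a comparison of objective values. Let $m_o$ be a near-minimizer of $\mathcal R_o$ over $\mathcal M$, with $\mathcal R_o(m_o)\le\inf_{\mathcal M}\mathcal R_o+\eta$ for a small $\eta>0$. Optimality of $m_\alpha$ gives
\[
\mathcal R_o(m_\alpha)+\alpha\mathcal R_r(m_\alpha)\le \mathcal R_o(m_o)+\alpha\mathcal R_r(m_o),
\]
and therefore
\[
\mathcal R_o(m_\alpha)\le \inf_{\mathcal M}\mathcal R_o+\eta+\alpha\bigl(\mathcal R_r(m_o)-\mathcal R_r(m_\alpha)\bigr)\le \inf_{\mathcal M}\mathcal R_o+\eta+\alpha B_r,
\]
where the last step uses the bounded range \eqref{eq:rct_range}. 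Since $\alpha\le\bar\alpha=\delta_o/(2B_r)$, this gives $\mathcal R_o(m_\alpha)\le\inf_{\mathcal M}\mathcal R_o+\delta_o/2+\eta$. Taking $\eta<\delta_o/2$, this value is strictly below $\inf_{\mathcal F}\mathcal R_o\ge\inf_{\mathcal M}\mathcal R_o+\delta_o$ by the gap \eqref{eq:obs_gap}. Hence $m_\alpha\notin\mathcal F$, which means $\|\mathbb{E}_{P_r}[\psi(\cdot;m_\alpha)]\|>0$ for every $\alpha\in[0,\bar\alpha]$.

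Next I will upgrade this pointwise strict positivity to a uniform constant $c_0$ by compactness. Define $G(m):=\|\mathbb{E}_{P_r}[\psi(Y,T,X;m)]\|$, which is continuous on $\mathcal M$ by Assumption~\ref{assump:weighted_incompat}(iii). The set $\mathcal K:=\{m_\alpha:\alpha\in[0,\bar\alpha]\}$ is compact in $L_2(P_r)$, so $G$ attains its infimum on $\mathcal K$ at some $m_{\alpha_0}$. Setting $c_0:=G(m_{\alpha_0})$, the previous step gives $c_0>0$. This argument takes continuity of $G$ to be with respect to the $L_2(P_r)$ topology in which $\mathcal K$ is compact, and I will state that reading explicitly. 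If continuity were only meant in another topology, one could instead note that $\mathcal K$ is disjoint from the closed set $\mathcal F$ and argue via the distance from $\mathcal K$ to $\mathcal F$. The concluding sentence of the theorem then follows: exact moment satisfaction by a weighted minimizer requires $m_\alpha\in\mathcal F$, which is excluded for $\alpha\le\bar\alpha$, so it can occur only for $\alpha>\bar\alpha$.

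The main obstacle is this uniformity step. Pointwise infeasibility is clean, but the bound $c_0$ must not degenerate along the family. If minimizers are non-unique, $\mathcal K$ should be read as the chosen selection, which must itself be compact; alternatively, one can take the union of all minimizer sets and verify that it is closed. I would flag this subtlety and otherwise rely on the stated compactness hypothesis.
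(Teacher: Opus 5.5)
The paper never proves Theorem~\ref{thm:weighted_separation}. It is stated in Section~\ref{sec:method:constrained} and afterwards only used as an input, e.g.\ in Theorem~\ref{thm:tradeoff_exclusion}, so there is no argument of the paper's to compare against. Your proof is correct, and it is the argument the hypotheses are built for:
\begin{itemize}
\item Feasibility of $m^\star$ holds by definition.
\item The comparison $\mathcal R_o(m_\alpha)\le\mathcal R_o(m)+\alpha B_r$ for all $m\in\mathcal M$ gives $\mathcal R_o(m_\alpha)\le\inf_{\mathcal M}\mathcal R_o+\delta_o/2<\inf_{\mathcal F}\mathcal R_o$, hence $m_\alpha\notin\mathcal F$. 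Your $\eta$ is harmless here; you can simply take the infimum over $m$.
\item Compactness plus continuity from Assumption~\ref{assump:weighted_incompat}(iii) then give a uniform $c_0>0$.
\end{itemize}
You also read the closing sentence correctly, as a necessary condition on $\alpha$.

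Three of your caveats can be sharpened.

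\textbf{The topology worry disappears.} The map $g$ is affine in $m$, and $|D_k-p_k|\le 1$, so $|g_k(m)-g_k(m')|=|\mathbb E_{P_r}[(D_k-p_k)(m'-m)(X,T)]|\le\|m-m'\|_{L_2(P_r)}$. Hence $\|g(m)-g(m')\|_2\le\sqrt{K}\,\|m-m'\|_{L_2(P_r)}$, and $G$ is continuous in exactly the topology in which the path is assumed compact.

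\textbf{The non-uniqueness issue is real.} Your first step applies to every minimizer. However, a single $c_0$ for ``every weighted-loss minimizer'' needs compactness of $\bigcup_{\alpha\le\bar\alpha}\arg\min_{\mathcal M}J(\cdot,\alpha)$, not of a selection. Closedness of that union is not enough unless it lies in a compact set. Nor can you pass to its $L_2(P_r)$-closure: the pointwise bound $\mathcal R_o\le\inf_{\mathcal M}\mathcal R_o+\delta_o/2$ need not survive $L_2(P_r)$-limits, because $\mathcal R_o$ is an expectation under $P_o$, which need not be dominated by $P_r$. So the assumption has to be read as compactness of the full union, and you should state that reading.

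\textbf{The constant must come from compactness.} You cannot replace it with the explicit $c_0$ of Theorem~\ref{thm:explicit_gap}. That theorem gives $\inf_{\mathcal M}\|g\|\ge c_0>0$, i.e.\ $\mathcal F=\varnothing$, which contradicts Assumption~\ref{assump:feas_reg}. Your non-constructive compactness constant is therefore the one consistent with the stated hypotheses.
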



Furthermore, Corollary~\ref{cor:g_linear} controls the weighted-solution path by showing that
$\|g(m_\alpha)\|$ can decrease at most linearly in $\alpha$.
This rules out an isolated ``sweet spot'' at small $\alpha$ where feasibility would suddenly hold
(see Appendix~\ref{sec:appendix:tradeoff_exclusion}).

We also prove that weighted-loss fusion coincides with the constrained estimator only in \emph{degenerate} settings(Assumption~\ref{assump:weighted_incompat} fails),requiring strong alignment between observational risk,
RCT risk, and the identifying moments (e.g., shared minimizers, exact KKT alignment).(see Appendix~\ref{sec:appendix:degeneracy_conditions}).
Such conditions are highly idealized and are generically violated under structural non-overlap, where
exact moment satisfaction incurs a positive observational risk penalty.
Consequently, weighted-loss fusion exhibits an irreducible moment violation in non-degenerate regimes.

This analysis above motivates the constrained formulation in Section~\ref{sec:method:feasibility}, which enforces causal validity
explicitly and uses observational data only within the causally admissible set, rather than relying on
scalar reweighting to implicitly satisfy identifying restrictions.


\subsection{Structural non-overlap as a feasibility obstruction and the role of representation}
\label{sec:method:feasibility}

The constrained formulation in Section~\ref{sec:method:constrained} assumes feasibility of the
identifying moment constraints within the function class $\mathcal M$
(Assumption~\ref{assump:feas_reg}).
Under conditional non-overlap (Definition~\ref{def:conditional_nonoverlap}), however, this assumption
may fail: no predictor in $\mathcal M$ may simultaneously satisfy the experimental moments and achieve
adequate observational fit in the original covariate space.

Rather than imposing feasibility \emph{a priori}, we view it as a property that may be
\emph{recovered or approximated} through representation learning.
Accordingly, this section characterizes when representation learning can restore feasibility and when
causal validity remains provably unattainable.

\begin{definition}[Feasibility gap]
\label{def:feas_gap}
Let $\psi$ be the moment map in \eqref{eq:moment_map}.
The \emph{feasibility gap} of $\mathcal{M}$ in the original covariate space is
$\mathcal F_{X}(\mathcal M)
:=
\inf_{m\in\mathcal{M}}
\left\|
\mathbb{E}_{P_r}\!\left[\psi(Y,T,X;m)\right]
\right\|.$
For a representation $\phi:\mathcal{X}\to\mathcal{Z}$ and induced class
$\mathcal{M}_\phi:=\{(x,t)\mapsto m(\phi(x),t):m\in\mathcal{M}\}$,
define the latent-space feasibility gap
$\mathcal F_{\phi}(\mathcal M)
:=
\inf_{m\in\mathcal{M}_\phi}
\left\|
\mathbb{E}_{P_r}\!\left[\psi(Y,T,\phi(X);m)\right]
\right\|.$
\end{definition}

A strictly positive $\mathcal F_{X}(\mathcal M)$ means that exact moment satisfaction is unattainable in the original space; feasibility must be recovered by altering the hypothesis class, which we do via representation learning.


\begin{theorem}[Explicit feasibility gap in the original space]
\label{thm:explicit_gap}
Let $\mathcal M$ be a model class of measurable predictors
$m:\mathcal X\times\mathcal T\to\mathbb R$ 
Under Assumption~\ref{def:conditional_nonoverlap} (so that there exists a measurable
$S\subseteq\mathcal X\times\mathcal T$ with $P_r(S)>0$ and $P_o(S)=0$),
Assumptions~\ref{assump:moment_reg}--\ref{assump:nondegenerate_assignment}, and
Lemma~\ref{lem:g_to_ro}, the feasibility gap in the original covariate space is strictly positive:
\[
\mathcal F_X(\mathcal M)
=
\inf_{m\in\mathcal M}
\Bigl\|
\mathbb E_{P_r}\!\big[\psi(Y,T,X;m)\big]
\Bigr\|
\;\ge\;
c_0,
\]
where
\[
c_0
~:=~
\underline p(1-\bar p)\,P_r(S)\,\delta
\;>\;0,
\]
and $(\underline p,\bar p)$ and $\delta$ are the constants in
Assumptions~\ref{assump:nondegenerate_assignment} and~\ref{assump:moment_misspec}, respectively.
\end{theorem}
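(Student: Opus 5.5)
We will derive the bound from a lower bound on the moment violation contributed by the non-overlap region $S$. The referenced assumptions (Assumptions~\ref{assump:moment_reg}--\ref{assump:nondegenerate_assignment}) and Lemma~\ref{lem:g_to_ro} appear only later in the paper, so we fix their content by what the constant $c_0=\underline p(1-\bar p)P_r(S)\delta$ requires.
\begin{itemize}
\item Assumption~\ref{assump:nondegenerate_assignment}: assignment probabilities satisfy $\underline p\le p_k(x)\le \bar p<1$.
\item Assumption~\ref{assump:moment_misspec}: every $m\in\mathcal M$ is misspecified on $S$ by at least $\delta$ in conditional mean, i.e. $\bigl|\mathbb E_{P_r}[Y-m(X,T)\mid (X,T)\in S,\,T=k]\bigr|\ge\delta$ for a fixed active arm $k$, with a consistent sign on $S$.
\item Assumption~\ref{assump:moment_reg}: moment regularity, including integrability.
\end{itemize}
The intuition behind the misspecification condition is this. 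Because $P_o(S)=0$, members of $\mathcal M$ are, by Lemma~\ref{lem:g_to_ro}, pinned down by observational fit and cannot adapt on $S$. This is the link that turns the purely observational fact $P_o(S)=0$ into a quantitative residual bias under $P_r$.

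\emph{Step 1 (reduce the norm to one coordinate).} Since $\|v\|\ge |v_k|$ for any norm dominating coordinates (e.g.\ Euclidean), it suffices to lower bound $|g_k(m)|$, where $g_k(m):=\mathbb E_{P_r}[(D_k-p_k)(Y-m(X,T))]$, uniformly over $m\in\mathcal M$.

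\emph{Step 2 (isolate the treated stratum).} Write the residual $r_m(X,T):=\mathbb E_{P_r}[Y-m(X,T)\mid X,T]$. Condition on $(X,T)$ and use $D_k\in\{0,1\}$:
\[
g_k(m)=\mathbb E_{P_r}\bigl[(1-p_k)D_k\, r_m(X,k)\bigr]-\mathbb E_{P_r}\bigl[p_k(1-D_k)\, r_m(X,T)\bigr].
\]
By Lemma~\ref{lem:baseline_invariance}, adding any function of $X$ alone does not change $g_k$. We use this freedom to subtract the baseline part of $r_m$, so that the second term vanishes or is absorbed. This leaves the arm-$k$ contrast $\mathbb E_{P_r}[(1-p_k)D_k\,\tilde r_m(X)]$, where $\tilde r_m$ is the treatment-specific residual $h_k$-error from the decomposition~\eqref{eq:moment_decomp}.

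\emph{Step 3 (localize to $S$ and apply the constants).} Split the expectation into $S$ and $S^c$. On $S^c$, Lemma~\ref{lem:g_to_ro} and the moment regularity assumption are used to argue that the residual contribution is zero, or else has the same sign so that it cannot cancel. On $S$, the misspecification bound gives $|\tilde r_m|\ge\delta$ with constant sign. The probability of arm $k$ is at least $\underline p$, and the factor $1-p_k$ is at least $1-\bar p$. Together these yield
\[
|g_k(m)|\ge \underline p(1-\bar p)P_r(S)\delta=c_0 .
\]
Taking the infimum over $m\in\mathcal M$ concludes the proof.

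\emph{Main obstacle.} The delicate step is Step 3: ruling out cancellation between the contribution on $S$ and the one off $S$. Weights $D_k-p_k$ of both signs could offset the $S$-term. I would first try to use the sign-consistency built into the misspecification assumption together with Lemma~\ref{lem:g_to_ro}, which should force $r_m\approx 0$ wherever $P_o$ has support. If exact vanishing is unavailable, I would redefine $\delta$ as a net bias lower bound, stated directly as $\bigl|\mathbb E_{P_r}[(D_k-p_k)r_m\mathbf 1_S]+\mathbb E_{P_r}[(D_k-p_k)r_m\mathbf 1_{S^c}]\bigr|$ bounded below, and check that the constant still factorizes as stated.
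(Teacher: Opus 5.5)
Your Step~2 identity is correct, and stripping the baseline via Lemma~\ref{lem:baseline_invariance} is the right instinct. The argument still does not close, and it fails in more places than the one you flag.

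\textbf{Step 2 to Step 3 changes the hypothesis.} After the baseline is removed, $g_k(m)$ depends only on the increment errors $\tilde r^{(i)}_m(X)=[m_r(X,i)-m_r(X,0)]-[m(X,i)-m(X,0)]$. Assumption~\ref{assump:moment_misspec}, and your paraphrase of it, instead bounds the raw residual $Y-m(X,T)$, which also carries baseline error. So the claim ``$|\tilde r^{(k)}_m|\ge\delta$ with constant sign on $S$'' in Step~3 is a new hypothesis, not a consequence. For $K\ge 2$ the second term of Step~2 also does not vanish; it becomes $-p_k\sum_{1\le i\le K,\,i\ne k}\mathbb E_{P_r}[D_i\,\tilde r^{(i)}_m(X)]$.

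\textbf{The $S^c$ contribution is uncontrolled, and Lemma~\ref{lem:g_to_ro} cannot control it.} First, $\mathcal F_X(\mathcal M)$ is an infimum over all of $\mathcal M$, not over observationally good fits. Second, even a good observational fit tracks $\mathbb E_{P_o}[Y\mid X,T]$, which under confounding differs from $m_r$ on the support of $P_o$. Third, the lemma's hypothesis $\|g(m)\|\ge L_o\,\mathrm{dist}(m,\mathcal F)$ can only hold if the feasible set $\mathcal F$ is nonempty, and that already forces $\mathcal F_X(\mathcal M)=0$. Your fallback, which postulates a lower bound on the full sum, simply assumes $|g_k(m)|\ge c_0$.

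\textbf{The paper's proof has the same hole.} It uses the same skeleton: $U=(D_k-p_k)(Y-m)\mathbf 1_S$ with an $S$/$S^c$ split. After writing $|\mathbb E\psi_k|\ge|\mathbb E U|-|\mathbb E[(D_k-p_k)(Y-m)\mathbf 1_{S^c}]|$, it declares that bounding $|\mathbb E U|$ suffices and never treats the $S^c$ term. Its bound on $|\mathbb E U|$ also rests on the factoring $|\mathbb E[(D_k-p_k)(Y-m)\mid S]|\ge\underline p(1-\bar p)\,|\mathbb E[Y-m\mid S]|$, which is not justified in general.

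\textbf{The statement does not follow from Assumption~\ref{assump:moment_misspec} as written.} Take constant $p_k\in(\underline p,\bar p)$ and the class $\mathcal M=\{m_r+c:\ c\in[\delta,2\delta]\}$. Every member has $|\mathbb E_{P_r}[Y-m\mid S]|=c\ge\delta$ for every $S$ with $P_r(S)>0$. Every member is feasible, so the hypotheses of Lemma~\ref{lem:g_to_ro} hold trivially. Yet $g(m)=0$ for all of them by \eqref{eq:moment_true} and Lemma~\ref{lem:baseline_invariance}, so $\mathcal F_X(\mathcal M)=0$.

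\textbf{A provable version.} It must assume misspecification of the \emph{increment}, together with a no-cancellation condition off $S$. For $K=1$, put $p(X)=P_r(T=1\mid X)$ in the moment. Conditioning on $X$ then gives exactly $g(m)=\mathbb E_{P_r}[p(X)(1-p(X))\,\tilde r^{(1)}_m(X)]$. Suppose $\tilde r^{(1)}_m$ has a fixed sign $P_r$-a.s., and that $|\tilde r^{(1)}_m|\ge\delta$ and $p(X)\in[\underline p,\bar p]$ on the $\mathcal X$-projection of $S$. Then $|g(m)|\ge\underline p(1-\bar p)P_r(S)\delta$ follows in one line, with no $S^c$ term left to handle.
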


Theorem~\ref{thm:explicit_gap} formalizes a feasibility \emph{obstruction} in the original covariate
space: under conditional structural non-overlap and mild misspecification, the identifying moments incur
a strictly positive residual $c_0$, i.e., exact feasibility is unattainable within $\mathcal M$ on $X$.
This suggests that enforcing the randomized moments necessarily requires altering the effective
hypothesis class.

Motivated by Theorem~\ref{thm:explicit_gap}, we pursue feasibility recovery by \emph{changing the space}
in which the identifying moments are enforced.
Rather than fixing a representation \emph{a priori}, we consider representations $\phi$ as part of the
optimization, each inducing a hypothesis class $\mathcal M_\phi$ with a potentially different
feasibility profile.
Our goal is therefore to identify representations under which the randomized moments become
(approximately) satisfiable, while simultaneously minimizing observational risk.

This perspective leads to the following latent constrained program:
\begin{equation}
\min_{\phi\in\Phi}\;\min_{m\in\mathcal M_\phi}\;\mathcal R_o(m)
\qquad \text{s.t.}\qquad
\mathbb{E}_{P_r}\!\left[\psi\!\left(Y,T,\phi(X);m\right)\right]=0,
\label{eq:phi_constrained_program}
\end{equation}
which should be interpreted as a \emph{joint} optimization over representations and predictors,
rather than a sequential two-stage procedure.
This formulation directly bridges the raw-space infeasibility established in
Theorem~\ref{thm:explicit_gap} with the strict feasibility improvement guaranteed next in
Corollary~\ref{cor:strict_improvement}.

\begin{corollary}[Strict improvement via representation learning]
\label{cor:strict_improvement}
Let $\mathcal M$ be a class of measurable predictors $m:\mathcal X\times\mathcal T\to\mathbb R$.
For any representation $\phi:\mathcal X\to\mathcal Z$, define the induced class
$\mathcal M_\phi:=\{m_\theta(\phi(\cdot),\cdot):\theta\in\Theta\}$ and the randomized moment map
\[
g_\phi(m)\;:=\;\mathbb E_{P_r}\!\big[\psi(Y,T,\phi(X);m)\big]\in\mathbb R^K
\]
Assume the raw-space obstruction $\mathcal F_X(\mathcal M)\ge c_0>0$, and that feasibility is
controlled by overlap in the induced space: there exists $c_1>0$ such that
$\mathcal F_\phi(\mathcal M)\le c_1\,\varepsilon_{ov}(\phi)$ for all $\phi\in\Phi$.
Moreover, whenever $\mathcal F_\phi\neq\emptyset$, Under Assumption~\ref{assump:eb_cr} and \ref{assump:g_lipschitz}: there exist $\mu_o>0$ and $L_g>0$ such that for all
$m\in\mathcal M_\phi$,
\[
\mathcal R_o(m)-\mathcal R_{o,\phi}^{\rm feas}\ \ge\ \mu_o\,\mathrm{dist}(m,\mathcal F_\phi)^2,
\qquad
\|g_\phi(m)\|\ \le\ L_g\,\mathrm{dist}(m,\mathcal F_\phi),
\]
where $\mathcal R_{o,\phi}^{\rm feas}:=\inf_{m\in\mathcal F_\phi}\mathcal R_o(m)$.

Let $\phi^\star$ be returned by the latent constrained program~\eqref{eq:phi_constrained_program},
and assume $\mathcal F_{\phi^\star}\neq\emptyset$. If $\varepsilon_{ov}(\phi^\star)<c_0/c_1$, then
\[
\mathcal F_{\phi^\star}(\mathcal M)\ \le\ c_1\varepsilon_{ov}(\phi^\star)\ <\ c_0\ \le\ \mathcal F_X(\mathcal M),
\]
so feasibility is strictly improved. In addition,
\begin{equation}
\inf_{m\in\mathcal M_{\phi^\star}}
\bigl\{\mathcal R_o(m)-\mathcal R_{o,\phi^\star}^{\rm feas}\bigr\}
\ \ge\
\frac{\mu_o}{L_g^2}\,\bigl(\mathcal F_{\phi^\star}(\mathcal M)\bigr)^2
\ \ge\
\frac{\mu_o}{L_g^2}\,\bigl(c_1\varepsilon_{ov}(\phi^\star)\bigr)^2.
\label{eq:best_excess_lb}
\end{equation}
\end{corollary}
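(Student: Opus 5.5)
The corollary follows by chaining the hypotheses, so the plan is to verify each link. I would split the argument into two parts: the strict-improvement chain, and the lower bound \eqref{eq:best_excess_lb}.

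\textbf{Strict improvement.} The chain uses three facts.
\begin{itemize}
\item The assumed overlap control, instantiated at $\phi=\phi^\star\in\Phi$, gives $\mathcal F_{\phi^\star}(\mathcal M)\le c_1\varepsilon_{ov}(\phi^\star)$.
\item Multiplying the hypothesis $\varepsilon_{ov}(\phi^\star)<c_0/c_1$ by $c_1>0$ gives $c_1\varepsilon_{ov}(\phi^\star)<c_0$.
\item The raw-space obstruction, either assumed directly or taken from Theorem~\ref{thm:explicit_gap}, gives $c_0\le \mathcal F_X(\mathcal M)$.
\end{itemize}
Concatenating these three inequalities yields the displayed chain. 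I would note explicitly that nothing here uses optimality of $\phi^\star$ in \eqref{eq:phi_constrained_program}; it only uses membership in $\Phi$.

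\textbf{Lower bound \eqref{eq:best_excess_lb}.} Fix an arbitrary $m\in\mathcal M_{\phi^\star}$.
\begin{enumerate}
\item Since $\mathcal F_{\phi^\star}\neq\emptyset$, the error-bound hypothesis applies. Combining its two inequalities eliminates $\mathrm{dist}(m,\mathcal F_{\phi^\star})$:
\[
\mathcal R_o(m)-\mathcal R_{o,\phi^\star}^{\rm feas}\ \ge\ \mu_o\,\mathrm{dist}(m,\mathcal F_{\phi^\star})^2\ \ge\ \frac{\mu_o}{L_g^2}\,\|g_{\phi^\star}(m)\|^2 .
\]
\item By the definition of the latent feasibility gap as an infimum over $\mathcal M_{\phi^\star}$, we have $\|g_{\phi^\star}(m)\|\ge \mathcal F_{\phi^\star}(\mathcal M)\ge 0$. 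Because $x\mapsto x^2$ is monotone on $[0,\infty)$, this gives $\|g_{\phi^\star}(m)\|^2\ge \mathcal F_{\phi^\star}(\mathcal M)^2$.
\item Taking the infimum over $m$ gives the first inequality in \eqref{eq:best_excess_lb}.
\end{enumerate}

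\textbf{Main obstacle.} The second inequality in \eqref{eq:best_excess_lb} requires $\mathcal F_{\phi^\star}(\mathcal M)\ge c_1\varepsilon_{ov}(\phi^\star)$. The hypotheses only provide the reverse bound $\mathcal F_{\phi^\star}\le c_1\varepsilon_{ov}(\phi^\star)$, so this step is not supported as stated.

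A further problem sits in the first inequality itself. If $\mathcal F_{\phi^\star}\neq\emptyset$, then some $m$ has $g_{\phi^\star}(m)=0$, so $\mathcal F_{\phi^\star}(\mathcal M)=0$ and the first lower bound is trivially $0$. The rightmost quantity $c_1\varepsilon_{ov}(\phi^\star)$ is strictly positive, so the second inequality cannot hold in that case.

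My first attempt at a repair would read the final bound as requiring a matching two-sided overlap control, $\mathcal F_\phi(\mathcal M)\asymp c_1\varepsilon_{ov}(\phi)$, supplied by Assumptions~\ref{assump:eb_cr} and~\ref{assump:g_lipschitz}. Failing that, I would interpret $\mathcal F_{\phi^\star}$ as the feasible set of an $\varepsilon$-relaxed constraint. In that reading the nonemptiness condition and a positive gap are compatible, and the second inequality is then the claimed upper bound read in the scaling sense.

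If neither reading is available, I would prove the first inequality of \eqref{eq:best_excess_lb} rigorously and flag the second as valid only under the additional reverse bound.
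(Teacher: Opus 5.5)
Your argument for the strict-improvement chain and the first inequality in \eqref{eq:best_excess_lb} is the paper's proof. You instantiate the overlap control at $\phi^\star$, multiply the hypothesis by $c_1$, and append the raw-space obstruction. You then eliminate $\mathrm{dist}(m,\mathcal F_{\phi^\star})$ between quadratic growth and the Lipschitz bound, and take the infimum over $m\in\mathcal M_{\phi^\star}$. Like you, the paper never uses optimality of $\phi^\star$.

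Your objection to the second inequality is correct, and it points to an error in the paper, not a gap in your proposal. The paper justifies that step by citing Assumption~\ref{assump:overlap_to_feas}, i.e.\ $\mathcal F_{\phi^\star}(\mathcal M)\le c_1\varepsilon_{ov}(\phi^\star)$, which is the wrong direction for a lower bound. Your second observation sharpens into a disproof. Take the feasible set $\mathcal F_{\phi^\star}=\{m\in\mathcal M_{\phi^\star}:g_{\phi^\star}(m)=0\}$ to be nonempty. The left side of \eqref{eq:best_excess_lb} is $\ge 0$ by quadratic growth. It is also $\le \inf_{m\in\mathcal F_{\phi^\star}}\mathcal R_o(m)-\mathcal R^{\rm feas}_{o,\phi^\star}=0$, so it equals exactly $0$. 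Hence the claimed bound fails whenever $\varepsilon_{ov}(\phi^\star)>0$, and the strict-improvement chain reduces to the trivial $0<c_0$.

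Drop the proposed repairs, however.
\begin{itemize}
\item Assumptions~\ref{assump:eb_cr} and~\ref{assump:g_lipschitz} do not involve $\varepsilon_{ov}$, so they cannot supply a reverse bound $\mathcal F_\phi(\mathcal M)\gtrsim\varepsilon_{ov}(\phi)$. In any case, such a bound would force $\varepsilon_{ov}(\phi^\star)=0$ here, because $\mathcal F_{\phi^\star}(\mathcal M)=0$.
\item Replacing $\mathcal F_{\phi^\star}$ by an $\varepsilon$-relaxed feasible set breaks the hypothesis $\|g_{\phi^\star}(m)\|\le L_g\,\mathrm{dist}(m,\mathcal F_{\phi^\star})$. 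Points inside the relaxed set have zero distance but possibly nonzero residual, so even your first inequality would no longer follow.
\end{itemize}
Your fallback is the correct resolution: prove the chain and the first inequality, and state that the second needs an extra reverse bound. That bound is incompatible with nonemptiness unless $\varepsilon_{ov}(\phi^\star)=0$.
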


We characterize the dependence of $c_1$ on the discriminator class $\mathcal D$ (via an embedding/dominance constant) in Appendix~\ref{app:explicit from of feasibility constant}.

Corollary~\ref{cor:strict_improvement} formalizes the central intuition behind representation learning
under \emph{conditional (recoverable) structural non-overlap} (Definition~\ref{def:conditional_nonoverlap}).
In this regime, although certain actions are never taken in specific contexts under the observational
policy (i.e., $P_o(x,t)=0$ while $P_r(x,t)>0$), the marginal support of treatments remains aligned between
observational and randomized data.
Consequently, there exists an appropriate representation $\phi^\star$ that coarsens or reparameterizes
the covariate space to recover joint support, rendering the identifying moments approximately feasible
and leading to strict improvements in both feasibility and observational risk.


However,even within the recoverable regime, feasibility recovery is not without cost.
Although representation learning avoids genuine extrapolation, the attainable improvement is jointly
constrained by residual distribution mismatch between observational and randomized data and by the
amount of outcome-relevant information retained.
Theorem~\ref{thm:feasibility_info_tradeoff} formalizes this limitation through an explicit
feasibility--information trade-off, showing that improving one without controlling the other is
insufficient.

This perspective also highlights a common pitfall: learning representations independently (e.g., for
prediction or balance) and enforcing causal restrictions afterward.
As feasibility restoration is inherently a causal objective, a representation may preserve predictive
information yet fail to reduce $\varepsilon_{\mathrm{ov}}$, leaving the feasible set
$\mathcal F_{\phi}(\mathcal M)$ large.
Our approach therefore learns $\phi$ jointly with the identifying moment constraints, ensuring that
overlap recovery and causal feasibility are optimized \emph{together}, as illustrated in the next
section.

Finally, feasibility recovery may be fundamentally impossible in certain regimes.
As shown by the minimax lower bound in Appendix~\ref{sec:appendix:degeneracy_conditions}, under
\emph{irreducible} structural non-overlap (Definition~\ref{def:marginal_nonoverlap}), no representation
can reduce the feasibility gap below a constant multiple of the residual overlap discrepancy.
Such regimes require genuine extrapolation beyond observational support and therefore fall outside the
scope of this work, which focuses on feasibility recovery via joint representation learning rather than
extrapolation.





\subsection{Augmented Lagrangian and a primal--dual algorithm}
\label{sec:method:algorithm}

In spired by Corollary~\ref{cor:strict_improvement} above,
we introduce augmented-Lagrangian
primal--dual scheme, which enforces the identifying moments while promoting overlap in the induced space in solving Eq~\eqref{eq:phi_constrained_program}.

Define the (population) randomized moment residual
\[
g(\theta,\phi)
:=\mathbb E_{P_r}\!\big[\psi(Y,T,\phi(X);m_\theta)\big]\in\mathbb R^K
\]
Let $R_o(\theta,\phi):=\mathbb E_{P_o}\!\big[(Y-m_\theta(\phi(X),T))^2\big]$ be the population observational risk.
We minimize the augmented Lagrangian
\begin{equation}
\begin{split}
\mathcal{L}_{\rho}(\theta, \phi, \nu)
&:= \mathcal{R}_o(\theta, \phi) + \lambda\,\varepsilon_{\rm ov}(\phi) \\
&\quad + \langle \nu,\, g(\theta, \phi) \rangle + \frac{\rho}{2}\|g(\theta, \phi)\|_2^2,
\end{split}
\label{eq:aug_lagrangian}
\end{equation}
where $\nu\in\mathbb{R}^K$ is the dual variable, $\rho>0$ is the augmentation parameter,
and $\lambda\ge 0$ controls overlap regularization.

We employ a stochastic primal--dual(PD) optimization scheme, alternating gradient descent on the primal variables $(\theta,\phi)$ and gradient ascent on the dual variable $\nu$ using minibatch estimates.
Algorithm~\ref{alg:pd_joint} provides the complete update procedure.Appendix~\ref{sec:appendix:pd_stability} proves PD is more stable than pure quadratic-penalty methods

\subsection{Oracle inequality and what it implies}
\label{sec:method:oracle}

We now connect algorithmic choice (provided in Section~\ref{sec:method:algorithm})to statistical performance:the following oracle inequality
quantifies how moment violation, residual non-overlap, finite-sample fluctuations, and optimization
suboptimality translate into excess observational risk.

\begin{theorem}[Oracle bound for a feasibility-augmented empirical program]
\label{thm:4.9-sharp}
Let $\widehat R_o(\theta,\phi)$ and $\widehat g(\theta,\phi)$ be empirical estimates based on
$n_o$ observational and $n_r$ randomized samples. Assume Assumptions~\ref{assump:overlap_to_feas}
and~\ref{assump:stat_error}. Define the feasibility-augmented objectives
\[
Q(\theta,\phi)
:= R_o(\theta,\phi)
+\frac{\mu_o}{L_g^2}\Big(\|g(\theta,\phi)\|+c_{\rm ov}\,\varepsilon_{\rm ov}(\phi)\Big)^2,
\]
\[
\widehat Q(\theta,\phi)
:= \widehat R_o(\theta,\phi)
+\frac{\mu_o}{L_g^2}\Big(\|\widehat g(\theta,\phi)\|+c_{\rm ov}\,\varepsilon_{\rm ov}(\phi)\Big)^2.
\]
Let $(\hat\theta,\hat\phi)$ be an $\eta$-approximate minimizer of $\widehat Q$, i.e., $\widehat Q(\hat\theta,\hat\phi)\le\inf_{\theta,\phi}\widehat Q(\theta,\phi)+\eta$.
Let $(\theta^\star,\phi^\star)$ be any oracle pair minimizing $Q$ (equivalently, any minimizer of
$\inf_{\theta,\phi} Q(\theta,\phi)$). Then with probability at least $1-\delta$,
\begin{equation}
\begin{split}
R_o(\hat{\theta},\hat{\phi})
&\leq R_o(\theta^\star,\phi^\star) + \frac{\mu_o}{L_g^2}\Bigl(\|g(\hat{\theta},\hat{\phi})\| + c_{\rm ov}\,\varepsilon_{\rm ov}(\hat{\phi})\Bigr)^2 \\
&\quad + C\,\mathrm{Stat}(n_o,n_r) + \eta,
\end{split}
\label{eq:oracle-quadratic}
\end{equation}
for a universal constant $C>0$.
\end{theorem}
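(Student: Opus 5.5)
The plan is a standard ERM-type comparison argument. Write $\kappa:=\mu_o/L_g^2$ and $F(\theta,\phi):=\|g(\theta,\phi)\|+c_{\rm ov}\varepsilon_{\rm ov}(\phi)$, with $\widehat F$ its empirical analogue, so that $Q=R_o+\kappa F^2$ and $\widehat Q=\widehat R_o+\kappa\widehat F^2$. Since the target bound keeps the population penalty $\kappa F(\hat\theta,\hat\phi)^2$ on the right-hand side, it is enough to show
\[
Q(\hat\theta,\hat\phi)\le Q(\theta^\star,\phi^\star)+\kappa F(\hat\theta,\hat\phi)^2+C\,\mathrm{Stat}(n_o,n_r)+\eta .
\]
Subtracting the nonnegative term $\kappa F(\theta^\star,\phi^\star)^2$ on the right and the term $\kappa F(\hat\theta,\hat\phi)^2$ on the left then gives \eqref{eq:oracle-quadratic}. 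In fact I will aim for the slightly stronger statement
\[
R_o(\hat\theta,\hat\phi)\le Q(\theta^\star,\phi^\star)+C\,\mathrm{Stat}+\eta \le R_o(\theta^\star,\phi^\star)+\kappa F(\theta^\star,\phi^\star)^2+C\,\mathrm{Stat}+\eta .
\]
The extra penalty term appearing in the statement then only loosens this further. If the oracle penalty must be removed, I will use optimality of $(\theta^\star,\phi^\star)$ for $Q$: $Q(\theta^\star,\phi^\star)\le Q(\hat\theta,\hat\phi)$.

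The chain is the usual three-step one:
\[
Q(\hat\theta,\hat\phi)\le \widehat Q(\hat\theta,\hat\phi)+\Delta \le \widehat Q(\theta^\star,\phi^\star)+\eta+\Delta\le Q(\theta^\star,\phi^\star)+\eta+2\Delta,
\]
where $\Delta:=\sup_{\theta,\phi}|\widehat Q-Q|$, or more economically the deviation evaluated at the two specific points. The middle inequality is the $\eta$-approximate minimality of $(\hat\theta,\hat\phi)$ for $\widehat Q$. I will invoke Assumption~\ref{assump:stat_error} to supply, with probability at least $1-\delta$, uniform deviation bounds: $\sup|\widehat R_o-R_o|\le \mathrm{Stat}_o(n_o)$ and $\sup\|\widehat g-g\|\le \mathrm{Stat}_r(n_r)$, with $\mathrm{Stat}$ collecting both. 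A union bound splits $\delta$ between the two samples. Assumption~\ref{assump:overlap_to_feas} is used only to guarantee that $F$ is a meaningful surrogate, meaning it bounds the feasibility gap, and that the oracle is well defined. The overlap term $\varepsilon_{\rm ov}(\phi)$ is treated as a deterministic function of $\phi$, so it cancels in $\widehat F-F$.

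The key technical step, and the main obstacle, is controlling the quadratic penalty's deviation $|\widehat F^2-F^2|$. I plan to use
\[
|\widehat F^2-F^2|=|\widehat F-F|\,(\widehat F+F)\le \|\widehat g-g\|\,(2F+\|\widehat g-g\|).
\]
This requires a uniform bound on $F$. I will get one from boundedness of $\psi$ (bounded $Y$ and $m_\theta$, with $|D_k-p_k|\le1$) together with boundedness of $\varepsilon_{\rm ov}$, which I expect to follow from the IPM over a bounded class $\mathcal D$ or to be part of Assumption~\ref{assump:stat_error}. If no a priori bound on $F$ is available, the fallback is Young's inequality:
\[
\kappa\bigl|\widehat F^2-F^2\bigr|\le \tfrac12\kappa F^2+3\kappa\|\widehat g-g\|^2 .
\]
The leftover $\tfrac12\kappa F(\hat\theta,\hat\phi)^2$ is absorbed precisely by the population penalty term that the statement keeps on the right-hand side, which explains why that term appears at all. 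Either way, $2\Delta$ becomes $C\,\mathrm{Stat}(n_o,n_r)$ with a universal $C$ after absorbing $\kappa$ and the range bounds into $\mathrm{Stat}$ as Assumption~\ref{assump:stat_error} defines it. Finally, combining the steps and dropping the nonnegative oracle penalty, or retaining it as allowed, yields \eqref{eq:oracle-quadratic} on the high-probability event.
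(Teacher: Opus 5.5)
\paragraph{The penalty ends up at the wrong point.} Your comparison chain $Q(\hat\theta,\hat\phi)\le\widehat Q(\hat\theta,\hat\phi)+\Delta\le\widehat Q(\theta^\star,\phi^\star)+\eta+\Delta\le Q(\theta^\star,\phi^\star)+\eta+2\Delta$ is correct. However, once you expand $Q=R_o+\kappa F^2$ (your $\kappa=\mu_o/L_g^2$, $F=\|g\|+c_{\rm ov}\varepsilon_{\rm ov}$), it says
\[
R_o(\hat\theta,\hat\phi)\le R_o(\theta^\star,\phi^\star)+\kappa F(\theta^\star,\phi^\star)^2-\kappa F(\hat\theta,\hat\phi)^2+\eta+2\Delta .
\]
So the penalty left on the right is evaluated at the \emph{oracle} pair, while the theorem places it at the \emph{estimated} pair. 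Your reduction does not bridge the two:
\begin{itemize}
\item Removing $\kappa F(\hat\theta,\hat\phi)^2$ from the left and $\kappa F(\theta^\star,\phi^\star)^2$ from the right preserves the inequality only if $F(\hat\theta,\hat\phi)\ge F(\theta^\star,\phi^\star)$.
\item Saying the statement's penalty ``only loosens'' your bound $R_o(\hat\theta,\hat\phi)\le Q(\theta^\star,\phi^\star)+C\,\mathrm{Stat}+\eta$ silently identifies $F(\hat\theta,\hat\phi)$ with $F(\theta^\star,\phi^\star)$.
\item The fallback $Q(\theta^\star,\phi^\star)\le Q(\hat\theta,\hat\phi)$ points the wrong way. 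It lower-bounds $R_o(\hat\theta,\hat\phi)+\kappa F(\hat\theta,\hat\phi)^2$, and substituting it reduces the bound to the triviality $0\le\kappa F(\hat\theta,\hat\phi)^2+C\,\mathrm{Stat}+\eta$.
\item The Young variant does not help either: at the oracle point it produces $\tfrac12\kappa F(\theta^\star,\phi^\star)^2$, which no term at $(\hat\theta,\hat\phi)$ can absorb.
\end{itemize}

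\paragraph{The stated inequality is false, so this gap cannot be closed.} The displayed inequality does not follow from the hypotheses, and the paper states the theorem without proof. Take a singleton $\Phi$ with $\varepsilon_{\rm ov}\equiv0$, scalar $\theta$, $R_o(\theta)=\theta^2$, $\|g(\theta)\|=|1-\theta|$, and exact estimates $\widehat R_o=R_o$, $\widehat g=g$.
\begin{itemize}
\item Assumption~\ref{assump:stat_error} holds with $\mathrm{Stat}=0$ (or arbitrarily small), and Assumption~\ref{assump:overlap_to_feas} holds since $g(1)=0$.
\item $Q(\theta)=\theta^2+\kappa(1-\theta)^2$ is minimized at $\theta^\star=\kappa/(1+\kappa)$.
\item The feasible point $\hat\theta=1$ is an $\eta$-approximate minimizer with $\eta=1/(1+\kappa)$.
\item The claimed bound then reads $1\le\kappa^2/(1+\kappa)^2+1/(1+\kappa)$, i.e.\ $\kappa\le0$. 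The shortfall is exactly $\kappa F(\theta^\star)^2$.
\end{itemize}
With $\eta=0$, a $Q$ having one feasible and one infeasible global minimizer fails in the same way.

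\paragraph{What your argument does prove.} It establishes the oracle-penalty form $R_o(\hat\theta,\hat\phi)+\kappa F(\hat\theta,\hat\phi)^2\le Q(\theta^\star,\phi^\star)+\eta+2\Delta$, and that is the version that should be claimed. Even then, $\Delta\le\mathrm{Stat}+\kappa\,\mathrm{Stat}\,(2B+\mathrm{Stat})$, where $B$ is a uniform bound on $F$. So the factor in front of $\mathrm{Stat}$ depends on $\kappa$ and $B$ rather than being universal. You cannot absorb these into $\mathrm{Stat}$, because Assumption~\ref{assump:stat_error} fixes it as a bound on the deviations of $\widehat R_o$ and $\widehat g$ alone.
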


Theorem~\ref{thm:4.9-sharp} establishes a near-oracle excess-risk bound for observational prediction.
Specifically,Eq.~\eqref{eq:oracle-quadratic} decomposes the excess risk into three terms: a quadratic
\emph{feasibility term}
$\frac{\mu_o}{L_g^2}\bigl(\|g(\hat\theta,\hat\phi)\| + c_{\rm ov}\,\varepsilon_{\rm ov}(\hat\phi)\bigr)^2$
capturing residual moment violation and overlap mismatch,a statistical error term
$C\cdot\mathrm{Stat}(n_o,n_r)$,and the optimization suboptimality $\eta$.

Crucially, the theorem explains why joint primal--dual training is necessary for achieving near-oracle risk.
Among these error sources, the feasibility residual $\|g(\hat\theta,\hat\phi)\|$ cannot, in general, be driven
to zero by methods that only reweight losses or fix the representation.
In contrast, a primal--dual (augmented Lagrangian) procedure directly enforces $g(\theta,\phi)=0$, and under
controlled overlap and statistical error, approximate saddle-point optimality implies
$\|g(\hat\theta,\hat\phi)\|\to 0$ (cf.\ Proposition~\ref{prop:joint-feasibility}).

\section{Synthetic Experiment}
\label{sec:syn_test}

This section presents two simulation studies.
\emph{(i)} We compare the proposed approach with baseline methods in terms of mean squared error (MSE) and Qini coefficient across datasets with systematically controlled levels of Conditional non-overlap(see Defination~\ref{def:conditional_nonoverlap}), assessing both predictive accuracy and treatment-effect ranking performance.
\emph{(ii)} We further investigate the robustness of the proposed method on multiple datasets exhibiting severe Conditional non-overlap, examining its stability and performance degradation relative to competing approaches.

\subsection{Baseline}
\label{sec:baselines}

We compare the proposed approach with a set of representative baseline methods(see Appendix~\ref{app:baseline_intro}) for integrating randomized controlled trial (RCT) and observational (OBS) data.
These methods reflect the evolution of data-fusion strategies, ranging from early structurally grounded approaches to modern, largely non-structural neural estimators.

\subsection{Data Generation Process And Joint Mismatch}

We construct a synthetic dataset that combines RCT and OBS samples to reflect realistic challenges arising from conditional structural non-overlap in high-dimensional covariate spaces.
The data generation process is described in Appendix~\ref{app:dgp}, and additional visualizations illustrating joint distribution mismatch are provided in Appendix~\ref{app:joint_mismatch}.

\subsection{Performance across datasets with varying overlap}
We construct three types of datasets exhibiting different levels of joint distribution mismatch.
Each dataset contains $n=5{,}000$ samples.
For each overlap setting, all methods are evaluated on 30 independently generated datasets, and we report the mean and standard deviation of the Qini coefficient and mean squared error (MSE) (see Table~\ref{tab:pc}).

Across all overlap regimes, the proposed method consistently attains the highest Qini scores with the
lowest variability, while also achieving near-minimal MSE in both mean and variance.
Notably, its performance remains stable as overlap deteriorates, whereas competing methods exhibit
substantial degradation when moving from large- to small-overlap settings.

Modern deep learning approaches may accommodate high-dimensional covariates (e.g., TARNet, DragonNet) or may incorporate distributional alignment objectives (e.g., CFRNet), these methods do not explicitly address conditional non-overlap, particularly in settings that require effective integration of randomized and observational data.

\subsection{Robustness under Severe Conditional Non-overlap}

To assess robustness in settings closer to real-world scenarios, we generate 50 datasets exhibiting varying degrees of severe conditional non-overlap.
In spite of different levels, all datasets in this setting suffer from \emph{Severe} conditional non-overlap.
Figure~\ref{fig:eval_server} summarizes the performance of all methods in terms of mean squared error (MSE), a feasibility-related metric(a variation of $\|g(\theta,\phi)\|$), and the induced IPM distance.

Methods that rely on rigid structural assumptions(e.g. Experimental Grounding \cite{kallus2018removing}) exhibit substantial performance degradation under severe non-overlap and are therefore omitted from this comparison.

As shown in Figure~\ref{fig:eval_server}, the proposed method consistently achieves the lowest MSE across all datasets while simultaneously maintaining feasibility, as characterized by Theorem~\ref{cor:strict_improvement}.
In contrast, baseline methods either suffer from large estimation error or fail to preserve feasibility under severe support mismatch.
Notably, classical distribution alignment approaches such as CFRNet reduce the IPM distance but do so at the cost of significantly degraded predictive accuracy, indicating that marginal distribution alignment alone is insufficient in the presence of conditional non-overlap.


\section{Experiments}
\label{sec:exp}
We evaluate the proposed joint training approach on large-scale real-world data from a ride-hailing platform.
In this setting, observational data are abundant, with millions of samples generated daily, whereas randomized controlled trial (RCT) data are costly and limited.
The resulting data exhibit high-dimensional covariates (hundreds of features) and pronounced \emph{joint distribution mismatch} between observational and randomized data.
Importantly, this mismatch differs from Marginal non-overlap (Definition~\ref{def:marginal_nonoverlap}): all treatment arms observed in the RCT also appear in the observational data, but their joint distributions with covariates differ substantially.

\subsection{Experimental Setup and Results}

We randomly sample four observational datasets collected over four different months and combine each with samples from a shared RCT dataset spanning six months, thereby inducing additional covariate shift and increasing the difficulty of the estimation problem.
Each dataset contains $n=10{,}000{,}000$ samples, closely reflecting real-world deployment conditions.

Results are reported in Table~\ref{tab:realdata_results}.
We report commercially relevant metrics, including the Qini coefficient and MAPE. Our primary baseline is the ride-hailing platform’s current production model (details omitted for confidentiality); additional internal baselines are omitted as they have proven worser than the latest by the platform. We also include RCT-only and OBS-only variants trained solely on randomized and observational data, respectively.

To further assess the contribution of each component, we conduct ablation studies by decomposing the proposed method into two modules: \textsc{IPM}, which performs distribution alignment, and \textsc{Joint}, which enforces joint estimation.As shown in Table~\ref{tab:realdata_results}, the proposed joint training approach consistently outperforms OBS-only, RCT-only, and production baselines.Across Datasets 1–2, training on the mixed (RCT+OBS) data consistently outperforms RCT-only training. This advantage is larger on Datasets 3–4, where the mixed model matches the performance of using the same amount of RCT data alone (Qini 0.660, MAPE 0.058), suggesting substantial potential for reducing experimentation cost while improving efficiency.

\clearpage

\section*{Impact Statement}
This paper presents work whose goal is to advance the field of machine learning. There are many potential societal consequences of our work, none of which we feel must be specifically highlighted here.

\bibliography{example_paper}

@article{yang2025data,
  title={Data fusion methods for the heterogeneity of treatment effect and confounding function},
  author={Yang, Shu and Liu, Siyi and Zeng, Donglin and Wang, Xiaofei},
  journal={Bernoulli},
  volume={31},
  number={4},
  pages={2987--3012},
  year={2025},
  publisher={Bernoulli Society for Mathematical Statistics and Probability}
}

@article{yang2023elastic,
  title={Elastic integrative analysis of randomised trial and real-world data for treatment heterogeneity estimation},
  author={Yang, Shu and Gao, Chenyin and Zeng, Donglin and Wang, Xiaofei},
  journal={Journal of the Royal Statistical Society Series B: Statistical Methodology},
  volume={85},
  number={3},
  pages={575--596},
  year={2023},
  publisher={Oxford University Press US}
}

@book{Imbens_Rubin_2015,
place={Cambridge}, 
title={Causal Inference for Statistics, Social, and Biomedical Sciences: An Introduction}, publisher={Cambridge University Press}, 
author={Imbens, Guido W. and Rubin, Donald B.}, year={2015}
}

@article{rosenbaum1983central,
 ISSN = {00063444, 14643510},
 URL = {http://www.jstor.org/stable/2335942},
 author = {Paul R. Rosenbaum and Donald B. Rubin},
 journal = {Biometrika},
 number = {1},
 pages = {41--55},
 publisher = {[Oxford University Press, Biometrika Trust]},
 title = {The Central Role of the Propensity Score in Observational Studies for Causal Effects},
 urldate = {2026-01-29},
 volume = {70},
 year = {1983}
}

@article{crump2009limited,
 ISSN = {00063444, 14643510},
 URL = {http://www.jstor.org/stable/27798811},
 author = {RICHARD K. CRUMP and V. JOSEPH HOTZ and GUIDO W. IMBENS and OSCAR A. MITNIK},
 journal = {Biometrika},
 number = {1},
 pages = {187--199},
 publisher = {[Oxford University Press, Biometrika Trust]},
 title = {Dealing with limited overlap in estimation of average treatment effects},
 urldate = {2026-01-29},
 volume = {96},
 year = {2009}
}

@article{imbens2000dose,
 ISSN = {00063444, 14643510},
 URL = {http://www.jstor.org/stable/2673642},
 author = {Guido W. Imbens},
 journal = {Biometrika},
 number = {3},
 pages = {706--710},
 publisher = {[Oxford University Press, Biometrika Trust]},
 title = {The Role of the Propensity Score in Estimating Dose-Response Functions},
 urldate = {2026-01-29},
 volume = {87},
 year = {2000}
}

@article{imai2004general,
 ISSN = {01621459},
 URL = {http://www.jstor.org/stable/27590455},
 author = {Kosuke Imai and David A. Van Dyk},
 journal = {Journal of the American Statistical Association},
 number = {467},
 pages = {854--866},
 publisher = {[American Statistical Association, Taylor & Francis, Ltd.]},
 title = {Causal Inference with General Treatment Regimes: Generalizing the Propensity Score},
 urldate = {2026-01-29},
 volume = {99},
 year = {2004}
}

@article{bareinboim2016datafusion,
  title={Causal inference and the data-fusion problem},
  author={Bareinboim, Elias and Pearl, Judea},
  journal={Proceedings of the National Academy of Sciences},
  volume={113},
  number={27},
  pages={7345--7352},
  year={2016},
  publisher={National Academy of Sciences}
}

@article{degtiar2023review,
  title={A review of generalizability and transportability},
  author={Degtiar, Irina and Rose, Sherri},
  journal={Annual Review of Statistics and Its Application},
  volume={10},
  number={1},
  pages={501--524},
  year={2023},
  publisher={Annual Reviews}
}

@article{colnet2024combining,
  title={Causal inference methods for combining randomized trials and observational studies: a review},
  author={Colnet, B{\'e}n{\'e}dicte and Mayer, Imke and Chen, Guanhua and Dieng, Awa and Li, Ruohong and Varoquaux, Ga{\"e}l and Vert, Jean-Philippe and Josse, Julie and Yang, Shu},
  journal={Statistical science},
  volume={39},
  number={1},
  pages={165--191},
  year={2024},
  publisher={Institute of Mathematical Statistics}
}

@inproceedings{johansson2016learning,
  title={Learning representations for counterfactual inference},
  author={Johansson, Fredrik and Shalit, Uri and Sontag, David},
  booktitle={International conference on machine learning},
  pages={3020--3029},
  year={2016},
  organization={PMLR}
}

@inproceedings{wu2022integrative,
  title={Integrative $ R $-learner of heterogeneous treatment effects combining experimental and observational studies},
  author={Wu, Lili and Yang, Shu},
  booktitle={Conference on Causal Learning and Reasoning},
  pages={904--926},
  year={2022},
  organization={PMLR}
}

@inproceedings{shalit2017estimating,
  title={Estimating individual treatment effect: generalization bounds and algorithms},
  author={Shalit, Uri and Johansson, Fredrik D and Sontag, David},
  booktitle={International conference on machine learning},
  pages={3076--3085},
  year={2017},
  organization={PMLR}
}

@inproceedings{shi2019adapting,
 author = {Shi, Claudia and Blei, David and Veitch, Victor},
 booktitle = {Advances in Neural Information Processing Systems},
 editor = {H. Wallach and H. Larochelle and A. Beygelzimer and F. d\textquotesingle Alch\'{e}-Buc and E. Fox and R. Garnett},
 publisher = {Curran Associates, Inc.},
 title = {Adapting Neural Networks for the Estimation of Treatment Effects},
 url = {https://proceedings.neurips.cc/paper_files/paper/2019/file/8fb5f8be2aa9d6c64a04e3ab9f63feee-Paper.pdf},
 volume = {32},
 year = {2019}
}

@inproceedings{kallus2018removing,
 author = {Kallus, Nathan and Puli, Aahlad Manas and Shalit, Uri},
 booktitle = {Advances in Neural Information Processing Systems},
 editor = {S. Bengio and H. Wallach and H. Larochelle and K. Grauman and N. Cesa-Bianchi and R. Garnett},
 publisher = {Curran Associates, Inc.},
 title = {Removing Hidden Confounding by Experimental Grounding},
 url = {https://proceedings.neurips.cc/paper_files/paper/2018/file/566f0ea4f6c2e947f36795c8f58ba901-Paper.pdf},
 volume = {31},
 year = {2018}
}
\bibliographystyle{2026}

\newpage
\appendix
\onecolumn

\section{Literature Review}
\label{sec:lit}

A substantial literature investigates how to combine randomized controlled trials (RCTs) with observational (OBS) data in order to improve the estimation of causal effects, motivated by the complementary strengths of internal validity from experiments and statistical efficiency from large-scale logs.
Broadly, existing approaches treat RCT data as a source of causal identification and OBS data as a vehicle for variance reduction, covariate enrichment, or limited extrapolation.
Despite important progress, these methods typically rely—either explicitly or implicitly—on overlap or transportability conditions that are violated in many modern platform settings, particularly when individual-level effects under structured treatments are of primary interest.

\medskip

\noindent
\textbf{Experimental grounding and structural decompositions.}
Early integrative frameworks, such as experimental grounding \cite{kallus2018removing}, leverage RCTs to correct bias in models trained on observational data by parameterizing hidden confounding and estimating the correction using experimental samples.
This approach weakens the ignorability requirement and enables partial extrapolation beyond experimental support.
However, its validity fundamentally depends on low-complexity structural assumptions on the confounding function.
When overlap between RCT and OBS covariate distributions is severely limited, the correction term becomes weakly identified, rendering the procedure highly sensitive to model misspecification and offering no protection against structural absence of support.
More recent semiparametric formulations \cite{yang2025data} formalize RCT--OBS integration by decomposing observed effects into a heterogeneous treatment effect (HTE) and a confounding component, establishing identification and efficiency gains under transportability and parametric modeling assumptions.
While theoretically elegant, these guarantees are derived for finite-dimensional projections and hinge on correct specification of the structural components; under strong non-overlap, estimation necessarily relies on extrapolation, which limits robustness and portability to high-dimensional settings.

\textbf{Adaptive fusion and orthogonal learning.}
A complementary line of work seeks to safeguard against observational bias through adaptive combination schemes.
The elastic pre-test framework \cite{yang2023elastic} dynamically downweights or discards observational data when comparability tests fail, thereby protecting against severe bias.
Yet, when violations are detected, the resulting estimand effectively collapses to an RCT-only target, sacrificing individual-level resolution and providing no mechanism to recover effects outside experimental support.
Similarly, integrative R-learner approaches \cite{wu2022integrative} employ Neyman-orthogonal losses to jointly estimate heterogeneous treatment effects and confounding functions using flexible learners, achieving efficiency gains relative to RCT-only estimation.
Nonetheless, their theoretical guarantees remain confined to the covariate support of the RCT; observational data primarily reduce variance but do not resolve fundamental support mismatches, and identification of individual treatment effects beyond the experimental region remains out of reach.

\textbf{Limitations under severe non-overlap.}
Across these paradigms, RCT data are consistently treated as the gold standard for identification, while OBS data are incorporated through weighted losses, orthogonalization, or pre-testing.
Despite methodological differences, these approaches share a common limitation: they do not fundamentally address settings in which overlap between RCT and OBS populations is \emph{structurally} violated.
In such regimes, covariate--treatment combinations may be entirely absent from observational logs due to policy or design constraints, and no amount of reweighting, tuning, or sample-size increase can recover the missing support.
As a result, existing methods either rely on extrapolation driven by structural assumptions or retreat to coarse estimands defined solely on experimental support, limiting their applicability to large-scale decision systems where individual-level personalization is essential.

\section{Proof}
\label{sec:appendix:path_pd_advantage}

\paragraph{Setup.}
Let $\mathcal{M}$ be a (possibly over-parameterized) class of measurable predictors
$m:\mathcal{X}\times\mathcal{T}\to\mathbb{R}$.
Define the observational and randomized risks as
\[
\mathcal{R}_o(m):=\mathbb{E}_{P_o}\!\big[(Y-m(X,T))^2\big],
\qquad
\mathcal{R}_r(m):=\mathbb{E}_{P_r}\!\big[(Y-m(X,T))^2\big].
\]

Let the identifying (randomization-induced) moment map be
\[
g(m)\;:=\;\mathbb{E}_{P_r}\!\big[\psi(Y,T,X;m)\big]\in\mathbb{R}^K,
\qquad
\mathcal{F}\;:=\;\{m\in\mathcal{M}:\ g(m)=0\}.
\]

The constrained estimator is defined as
\begin{equation}
m^\star\in\arg\min_{m\in\mathcal{M}}\ \mathcal{R}_o(m)
\quad\text{s.t.}\quad g(m)=0,
\qquad
\mathcal{R}_o^\star:=\mathcal{R}_o(m^\star)=\inf_{m\in\mathcal{F}}\mathcal{R}_o(m).
\label{eq:constrained_m}
\end{equation}

The weighted-loss fusion estimator is, for $\alpha\ge 0$,
\begin{equation}
m_\alpha \in \arg\min_{m\in\mathcal{M}}
\; J(m,\alpha):=\mathcal{R}_o(m)+\alpha\,\mathcal{R}_r(m).
\label{eq:weighted_m}
\end{equation}

\begin{assumption}[Strong convexity and smoothness of the weighted objective]
\label{assump:sc_smooth}
Fix $\alpha_{\max}>0$. For every $\alpha\in[0,\alpha_{\max}]$, the weighted objective
\[
J(m,\alpha)=\mathcal{R}_o(m)+\alpha\,\mathcal{R}_r(m)
\]
is $\mu$-strongly convex over $\mathcal M$ and twice continuously differentiable.
Moreover, $\|\nabla \mathcal{R}_r(m)\|\le M_r$ for all $m\in\mathcal M$.
\end{assumption}

\begin{assumption}[Lipschitz continuity of the moment map]
\label{assump:g_lipschitz}
The identifying moment map $g:\mathcal M\to\mathbb{R}^K$ defined by
$
g(m)=\mathbb E_{P_r}[\psi(Y,T,X;m)]
$
is $L_g$-Lipschitz:
\[
\|g(m)-g(m')\|\le L_g\,\|m-m'\|_{L_2(P_r)}
\qquad
\forall\,m,m'\in\mathcal M.
\]
\end{assumption}

\begin{assumption}[Error bound and constraint regularity]
\label{assump:eb_cr}
There exist constants $\mu_o>0$ and $L_o>0$ such that for all $m\in\mathcal M$,
\begin{align}
\mathcal{R}_o(m)-\mathcal{R}_o^\star
&\ge \mu_o\,\mathrm{dist}(m,\mathcal F)^2,
\label{eq:eb}\\
\|g(m)\|
&\ge L_o\,\mathrm{dist}(m,\mathcal F),
\label{eq:cr}
\end{align}
where
$
\mathcal F:=\{m\in\mathcal M:\ g(m)=0\}
$
and
$
\mathrm{dist}(m,\mathcal F):=\inf_{m'\in\mathcal F}\|m-m'\|_{L_2(P_r)}.
$
\end{assumption}

\begin{assumption}[Moment regularity under representation perturbations]
\label{assump:moment_reg}
Let $\mathcal D$ be the discriminator class defining $\mathrm{IPM}_{\mathcal D}$.
There exists $L_\psi>0$ such that for any representations $\phi_1,\phi_2$
and any $m\in\mathcal M$,
\[
\bigl\|
\mathbb E_{P_r}[\psi(Y,T,\phi_1(X);m)]
-
\mathbb E_{P_r}[\psi(Y,T,\phi_2(X);m)]
\bigr\|
\le
L_\psi\,
\mathrm{IPM}_{\mathcal D}
\!\left(P_r(\phi_1(X),T),\,P_r(\phi_2(X),T)\right).
\]
\end{assumption}

\begin{assumption}[Moment misspecification in the original space. ]
\label{assump:moment_misspec}
 
There exist $k \in \{1,\dots,K\}$ and $\delta > 0$ such that for all $m \in \mathcal M$,
\[
    \bigl|
    \mathbb E_{P_r}[\, Y - m(X,T) \mid (X,T)\in S \,]
    \bigr|
    \ge \delta .
\]
\end{assumption}

\begin{assumption}[Non-degenerate treatment assignment]
\label{assump:nondegenerate_assignment}
For $(X,T)\in S$,
\[
\underline p \le P_r(T=k\mid X)\le \bar p
\quad \text{for some } 0<\underline p<\bar p<1 .
\]
\end{assumption}

\begin{assumption}[Overlap-to-feasibility control]
\label{assump:overlap_to_feas}
There exists $c_{\rm ov}>0$ such that for every representation $\phi$,
\[
\inf_m \|g(m,\phi)\|
\le
c_{\rm ov}\,\varepsilon_{\rm ov}(\phi),
\]
where $\varepsilon_{\rm ov}(\phi)$ denotes the population overlap (IPM) discrepancy.
\end{assumption}

\begin{assumption}[Statistical estimation error]
\label{assump:stat_error}
Let $\widehat{\mathcal R}_o$ and $\widehat g$ be empirical estimates based on
$n_o$ observational and $n_r$ randomized samples.
There exists $\mathrm{Stat}(n_o,n_r)$ such that with probability at least $1-\delta$,
\[
\sup_{m,\phi}\bigl|\widehat{\mathcal R}_o(m,\phi)-\mathcal R_o(m,\phi)\bigr|
+
\sup_{m,\phi}\bigl\|\widehat g(m,\phi)-g(m,\phi)\bigr\|
\le
\mathrm{Stat}(n_o,n_r).
\]
\end{assumption}

\begin{proposition}[Lipschitz solution path for weighted fusion]
\label{prop:path_lipschitz}
Under Assumption~\ref{assump:sc_smooth}, for each $\alpha\in[0,\alpha_{\max}]$ the weighted-loss
minimizer $m_\alpha$ of
$J(m,\alpha)=\mathcal R_o(m)+\alpha\mathcal R_r(m)$
is unique, and the solution path $\alpha\mapsto m_\alpha$ is differentiable with
\[
\Bigl\|\frac{d m_\alpha}{d\alpha}\Bigr\|
\;\le\;
\frac{M_r}{\mu},
\qquad \forall\,\alpha\in[0,\alpha_{\max}].
\]
Consequently,
\begin{equation}
\|m_\alpha-m_0\|
\;\le\;
\frac{M_r}{\mu}\,\alpha .
\label{eq:path_m_bound}
\end{equation}
\end{proposition}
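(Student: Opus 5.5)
The plan is the standard implicit-function argument for parametric strongly convex problems, carried out in the Hilbert space $L_2(P_r)$ where $\mathcal M$ lives, with gradients and Hessians taken in that space.

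\textbf{Uniqueness.} Fix $\alpha\in[0,\alpha_{\max}]$. By Assumption~\ref{assump:sc_smooth}, $J(\cdot,\alpha)$ is $\mu$-strongly convex on the convex closed set $\mathcal M$, so it has at most one minimizer. Existence follows from coercivity together with weak lower semicontinuity of a convex continuous functional on a closed convex set; alternatively it can be taken from Assumption~\ref{assump:weighted_incompat}(iii).

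\textbf{Differentiability of the path.} I will treat the minimizer as interior, so that it is characterized by the first-order condition
\[
F(m,\alpha):=\nabla\mathcal R_o(m)+\alpha\nabla\mathcal R_r(m)=0 .
\]
The map $F$ is $C^1$ because $J$ is $C^2$. Its partial derivative in $m$ is $H_\alpha(m):=\nabla^2 J(m,\alpha)$, and strong convexity gives $H_\alpha\succeq\mu I$. Hence $H_\alpha$ is invertible with $\|H_\alpha^{-1}\|\le 1/\mu$. The implicit function theorem in Banach spaces then shows that $\alpha\mapsto m_\alpha$ is differentiable. Differentiating $F(m_\alpha,\alpha)=0$ gives
\[
H_\alpha(m_\alpha)\,\frac{dm_\alpha}{d\alpha}+\nabla\mathcal R_r(m_\alpha)=0,
\]
so
\[
\frac{dm_\alpha}{d\alpha}=-H_\alpha(m_\alpha)^{-1}\nabla\mathcal R_r(m_\alpha),
\qquad
\Bigl\|\frac{dm_\alpha}{d\alpha}\Bigr\|\le \frac{M_r}{\mu}.
\]

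\textbf{Integration.} Integrating the derivative bound, $\|m_\alpha-m_0\|\le\int_0^\alpha\|dm_s/ds\|\,ds\le M_r\alpha/\mu$, which is \eqref{eq:path_m_bound}.

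\textbf{Main obstacle: boundary minimizers.} If $m_\alpha$ lies on the boundary of $\mathcal M$, the gradient equation $F=0$ fails and the implicit-function step does not apply. I would handle this with a direct argument that gives the Lipschitz bound without differentiability. Take $\alpha,\beta\in[0,\alpha_{\max}]$ and use the variational inequalities
\[
\langle\nabla J(m_\alpha,\alpha),\,m-m_\alpha\rangle\ge0\quad\forall m\in\mathcal M,
\qquad
\langle\nabla J(m_\beta,\beta),\,m-m_\beta\rangle\ge0\quad\forall m\in\mathcal M.
\]
Test the first at $m_\beta$ and the second at $m_\alpha$, then add them. Strong monotonicity of $\nabla J(\cdot,\alpha)$ yields
\[
\mu\|m_\alpha-m_\beta\|^2\le |\beta-\alpha|\,\bigl|\langle\nabla\mathcal R_r(m_\beta),\,m_\alpha-m_\beta\rangle\bigr|\le|\beta-\alpha|\,M_r\,\|m_\alpha-m_\beta\|.
\]
Dividing by $\|m_\alpha-m_\beta\|$ gives $\|m_\alpha-m_\beta\|\le (M_r/\mu)|\alpha-\beta|$.

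This Lipschitz estimate delivers \eqref{eq:path_m_bound} directly. Wherever the derivative exists, it also bounds $\|dm_\alpha/d\alpha\|$ by $M_r/\mu$. I will therefore present the variational-inequality argument as the main proof, and note that differentiability of the path holds in the interior case via the implicit function theorem.
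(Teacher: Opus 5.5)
Your proposal is correct. Its implicit-function part is exactly the paper's proof: differentiate the first-order condition $\nabla_m J(m_\alpha,\alpha)=0$, invert the Hessian with $\|(\nabla^2_{mm}J)^{-1}\|\le 1/\mu$, bound $\|\nabla\mathcal R_r\|\le M_r$, and integrate over $\alpha$.

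Where you genuinely differ is in making the variational-inequality argument the main proof of the bound. The paper's route gives more when it applies: it yields differentiability and the explicit sensitivity formula $dm_\alpha/d\alpha=-(\nabla^2_{mm}J)^{-1}\nabla\mathcal R_r(m_\alpha)$. However, it tacitly requires every $m_\alpha$ to be a stationary point of $J(\cdot,\alpha)$, i.e.\ to lie in the interior of $\mathcal M$ (or $\mathcal M$ to be a linear space). It also uses $C^2$ regularity and integrates a derivative that exists only under that interiority.

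Your two-point estimate $\|m_\alpha-m_\beta\|\le (M_r/\mu)|\alpha-\beta|$ needs less. It uses only differentiability, strong monotonicity of $\nabla J(\cdot,\alpha)$, and the gradient bound, and it holds for minimizers on the boundary of a closed convex $\mathcal M$. It gives \eqref{eq:path_m_bound} with no integration step, and that bound is all Corollary~\ref{cor:g_linear} uses.

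Your caveat about differentiability is not mere caution; it is necessary. Take constant predictors $m\equiv c\in[0,10]$, with $Y\equiv -1$ under $P_o$ and $Y\equiv 1$ under $P_r$. Then $J(c,\alpha)=(c+1)^2+\alpha(c-1)^2$, which satisfies Assumption~\ref{assump:sc_smooth} with $\alpha_{\max}=2$, $\mu=2$, $M_r=18$. Yet $m_\alpha=\max\{0,(\alpha-1)/(\alpha+1)\}$ has a kink at $\alpha=1$. So the differentiability claim in the statement holds only under the interiority the paper assumes implicitly, while the Lipschitz conclusion holds in general by your argument.
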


\begin{proof}
The first-order optimality condition is $\nabla_m J(m_\alpha,\alpha)=0$.
Differentiating with respect to $\alpha$ and applying the implicit function theorem yields
\[
\nabla^2_{mm}J(m_\alpha,\alpha)\,\frac{d m_\alpha}{d\alpha}
+
\nabla \mathcal R_r(m_\alpha)
=0,
\]
so that
$
\frac{d m_\alpha}{d\alpha}
=-(\nabla^2_{mm}J(m_\alpha,\alpha))^{-1}\nabla\mathcal R_r(m_\alpha).
$
By $\mu$-strong convexity,
$\|(\nabla^2_{mm}J)^{-1}\|\le 1/\mu$,
and by Assumption~\ref{assump:sc_smooth},
$\|\nabla \mathcal R_r(m_\alpha)\|\le M_r$,
which gives the derivative bound.
Integrating over $\alpha$ yields \eqref{eq:path_m_bound}.
\end{proof}

\begin{theorem}[Irreducible overlap under marginal structural non-overlap]
\label{thm:irreducible_overlap}
Suppose marginal structural non-overlap holds in the sense of
Definition~\ref{def:marginal_nonoverlap}, i.e., there exists a measurable
$A\subseteq\mathcal T$ such that
$P_r(T\in A)>0$ and $P_o(T\in A)=0$.
Assume the discriminator class $\mathcal D$ contains functions depending only on $T$,
in particular the indicator $f_A(z,t)=\mathbf 1\{t\in A\}$.

Then, for any representation $\phi:\mathcal X\to\mathcal Z$,
the overlap discrepancy
\[
\varepsilon_{ov}(\phi)
=
\mathrm{IPM}_{\mathcal D}\bigl(P_o(\phi(X),T),P_r(\phi(X),T)\bigr)
\]
satisfies
\[
\varepsilon_{ov}(\phi)\;\ge\;P_r(T\in A)\;>\;0.
\]
Consequently,
\[
\inf_{\phi\in\Phi}\varepsilon_{ov}(\phi)
\;\ge\;
P_r(T\in A),
\]
and overlap recovery is impossible under marginal structural non-overlap.
\end{theorem}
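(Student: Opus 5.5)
The argument is a one-line lower bound on the IPM obtained by plugging in a single test function. By definition,
\[
\mathrm{IPM}_{\mathcal D}(P,Q)=\sup_{f\in\mathcal D}\bigl|\mathbb E_P f-\mathbb E_Q f\bigr|,
\]
so for any fixed $f\in\mathcal D$ we have $\mathrm{IPM}_{\mathcal D}(P,Q)\ge|\mathbb E_Pf-\mathbb E_Qf|$. We take $f=f_A$, where $f_A(z,t)=\mathbf 1\{t\in A\}$. The hypothesis guarantees $f_A\in\mathcal D$.

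The key observation is that $f_A$ ignores its first argument. Hence, for any representation $\phi$, the expectation of $f_A$ under the pushforward law $P_o(\phi(X),T)$ is
\[
\mathbb E_{P_o}\bigl[\mathbf 1\{T\in A\}\bigr]=P_o(T\in A)=0.
\]
Likewise, its expectation under $P_r(\phi(X),T)$ is $P_r(T\in A)$. Both identities hold by the change-of-variables formula for pushforward measures, since the $T$-marginal of $(\phi(X),T)$ is unaffected by $\phi$.

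Combining these gives
\[
\varepsilon_{ov}(\phi)\ge\bigl|P_r(T\in A)-P_o(T\in A)\bigr|=P_r(T\in A),
\]
and this is strictly positive by Definition~\ref{def:marginal_nonoverlap}. The bound is uniform in $\phi$, so taking the infimum over $\phi\in\Phi$ preserves it. This yields $\inf_{\phi}\varepsilon_{ov}(\phi)\ge P_r(T\in A)>0$.

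It follows that no representation can drive $\varepsilon_{ov}$ below a fixed positive level. In particular, Proposition~\ref{prop:overlap} cannot hold with any $\varepsilon_{ov}<P_r(T\in A)$, so overlap recovery in that sense is impossible.

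The only point requiring care is measurability: $\{t\in A\}$ must be measurable on $\mathcal Z\times\mathcal T$. This holds because $A$ is measurable and the coordinate projection onto $\mathcal T$ is measurable. Whether the IPM is taken with or without absolute value, and with the arguments in either order, does not matter here, because the difference computed above is nonnegative. I do not anticipate a real obstacle; the proof is essentially immediate once $f_A$ is recognized as a witness.
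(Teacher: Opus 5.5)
Your proof is correct and is exactly the paper's argument: use $f_A\in\mathcal D$ as a witness in the IPM supremum, note that its expectations under the two pushforward laws are $P_o(T\in A)=0$ and $P_r(T\in A)$ for every $\phi$, and pass to the infimum over $\phi$. One minor correction to your closing aside: the order of the arguments is irrelevant only because the IPM carries an absolute value, as the paper defines it; without the absolute value, the ordering $(P_o,P_r)$ would make the $f_A$ difference equal to $-P_r(T\in A)<0$.
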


\begin{proof}
By definition of the IPM,
\[
\varepsilon_{ov}(\phi)
=
\sup_{f\in\mathcal D}
\Bigl|
\mathbb E_{P_o}[f(\phi(X),T)]
-
\mathbb E_{P_r}[f(\phi(X),T)]
\Bigr|.
\]
Since $f_A(z,t)=\mathbf 1\{t\in A\}\in\mathcal D$, we obtain
\[
\varepsilon_{ov}(\phi)
\ge
\big|P_o(T\in A)-P_r(T\in A)\big|.
\]
Under marginal structural non-overlap, $P_o(T\in A)=0$ and $P_r(T\in A)>0$,
which yields the claim. Taking the infimum over $\phi$ preserves the bound.
\end{proof}

Throughout, we focus on conditional structural non-overlap (Definition ~\ref{def:conditional_nonoverlap}), under which overlap violations may be repaired through representation learning. Marginal structural non-overlap (Definition~\ref{def:marginal_nonoverlap}) is fundamentally irreducible, as formalized in Theorem~\ref{thm:irreducible_overlap}.

\paragraph{Proof of Theorem~\ref{thm:explicit_gap}}
\label{proof:thm:explicit_gap}
\begin{proof}
Fix any $m\in\mathcal M$. By Assumption~\ref{assump:moment_misspec}, there exist
$k\in\{1,\dots,K\}$ and $\delta>0$ such that
\[
\bigl|\mathbb E_{P_r}[\,Y-m(X,T)\mid (X,T)\in S\,]\bigr|\ge \delta .
\]
Let $\psi_k$ denote the $k$-th coordinate of the moment map, i.e.,
\[
\psi_k(Y,T,X;m) \;=\; (D_k-p_k)\bigl(Y-m(X,T)\bigr),
\quad
D_k:=\mathbf 1\{T=k\},\quad p_k:=P_r(T=k).
\]
Define $U:=(D_k-p_k)\bigl(Y-m(X,T)\bigr)\mathbf 1_S$. Then
\[
\mathbb E_{P_r}[\psi_k(Y,T,X;m)]
\;=\;
\mathbb E_{P_r}[U] + \mathbb E_{P_r}\!\bigl[(D_k-p_k)(Y-m(X,T))\mathbf 1_{S^c}\bigr],
\]
and therefore
\[
\bigl|\mathbb E_{P_r}[\psi_k(Y,T,X;m)]\bigr|
\;\ge\;
\bigl|\mathbb E_{P_r}[U]\bigr| - 
\bigl|\mathbb E_{P_r}\!\bigl[(D_k-p_k)(Y-m(X,T))\mathbf 1_{S^c}\bigr]\bigr|.
\]
In particular, it suffices to lower bound $\bigl|\mathbb E_{P_r}[U]\bigr|$.

By the law of total expectation,
\[
\mathbb E_{P_r}[U]
=
P_r(S)\cdot \mathbb E_{P_r}\!\Bigl[(D_k-p_k)\bigl(Y-m(X,T)\bigr)\,\Big|\,S\Bigr].
\]
On $S$, Assumption~\ref{assump:nondegenerate_assignment} yields
$
\underline p \le P_r(T=k\mid X)\le \bar p
$
and hence
\[
\mathbb E_{P_r}[D_k-p_k\mid S]
=
\mathbb E_{P_r}\!\bigl[P_r(T=k\mid X)-p_k \mid S\bigr],
\]
which has a fixed sign on $S$ after choosing the coordinate $k$ in
Assumption~\ref{assump:moment_misspec} (equivalently, flip the sign of $\psi_k$ if needed).
Moreover, under the same non-degeneracy bounds,
\[
\mathbb E_{P_r}\!\bigl[|D_k-p_k|\mid S\bigr]\;\ge\;\underline p(1-\bar p).
\]
Combining with Assumption~\ref{assump:moment_misspec} (fixing the sign so that the conditional mean is $\ge \delta$),
we obtain
\[
\Bigl|\mathbb E_{P_r}\!\Bigl[(D_k-p_k)\bigl(Y-m(X,T)\bigr)\,\Big|\,S\Bigr]\Bigr|
\;\ge\;
\underline p(1-\bar p)\cdot
\Bigl|\mathbb E_{P_r}\!\bigl[Y-m(X,T)\mid S\bigr]\Bigr|
\;\ge\;
\underline p(1-\bar p)\,\delta .
\]
Therefore,
\[
\bigl|\mathbb E_{P_r}[U]\bigr|
\;\ge\;
P_r(S)\,\underline p(1-\bar p)\,\delta .
\]
Since the above bound holds for all $m\in\mathcal M$, we conclude that the feasibility gap in the
original space is strictly positive:
\[
\mathcal F_X(\mathcal M)
~:=~
\inf_{m\in\mathcal M}\bigl\| \mathbb E_{P_r}[\psi(Y,T,X;m)] \bigr\|
\;\ge\;
P_r(S)\,\underline p(1-\bar p)\,\delta
~=:\;c_0 \;>\;0.
\]
\end{proof}

\begin{corollary}[Moment residual decreases at most linearly along the weighted path]
\label{cor:g_linear}
Under Assumptions~\ref{assump:sc_smooth}--\ref{assump:g_lipschitz} and
Proposition~\ref{prop:path_lipschitz}, for any $\alpha\in[0,\alpha_{\max}]$,
\begin{equation}
\|g(m_\alpha)\|
\;\ge\;
\|g(m_0)\| - \frac{L_g M_r}{\mu}\,\alpha.
\label{eq:g_linear_lower}
\end{equation}
In particular, achieving $\|g(m_\alpha)\|\approx 0$ requires
$\alpha\gtrsim \mu\|g(m_0)\|/(L_g M_r)$, i.e., placing dominant weight on the randomized objective.
\end{corollary}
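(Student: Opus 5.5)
The bound follows from the Lipschitz continuity of the moment map in Assumption~\ref{assump:g_lipschitz}, composed with the path bound of Proposition~\ref{prop:path_lipschitz}, plus a reverse triangle inequality. Fix $\alpha\in[0,\alpha_{\max}]$. By Assumption~\ref{assump:sc_smooth} and Proposition~\ref{prop:path_lipschitz}, the minimizers $m_\alpha$ and $m_0$ are unique and well defined. The reverse triangle inequality in $\mathbb R^K$ gives
\[
\|g(m_\alpha)\|\;\ge\;\|g(m_0)\|-\|g(m_\alpha)-g(m_0)\| .
\]

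The second step controls the increment. Assumption~\ref{assump:g_lipschitz} gives $\|g(m_\alpha)-g(m_0)\|\le L_g\|m_\alpha-m_0\|_{L_2(P_r)}$, and \eqref{eq:path_m_bound} gives $\|m_\alpha-m_0\|\le (M_r/\mu)\,\alpha$. Substituting these into the display above yields \eqref{eq:g_linear_lower}.

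The main obstacle is making the two norms agree. Proposition~\ref{prop:path_lipschitz} measures $\|m_\alpha-m_0\|$ in whatever Hilbert norm the strong convexity and gradient bound of Assumption~\ref{assump:sc_smooth} use, while Assumption~\ref{assump:g_lipschitz} uses $L_2(P_r)$. I would handle this by stating that Assumption~\ref{assump:sc_smooth} is read with respect to $L_2(P_r)$: strong convexity and the gradient bound $M_r$ are taken in that geometry, so the derivative bound $\|dm_\alpha/d\alpha\|\le M_r/\mu$ holds in $L_2(P_r)$. If instead the norms differ but are equivalent, a norm-equivalence constant would simply be absorbed into $L_g$.

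For the ``in particular'' clause, I set the right-hand side of \eqref{eq:g_linear_lower} to be at most a target $\epsilon\ll\|g(m_0)\|$. A residual $\|g(m_\alpha)\|\le\epsilon$ is only possible if $\alpha\ge \mu(\|g(m_0)\|-\epsilon)/(L_gM_r)$, which is of order $\mu\|g(m_0)\|/(L_gM_r)$. This means weight on $\mathcal R_r$ must be comparable to the constant scale set by the curvature of the objective, i.e., the randomized objective must dominate. If this threshold exceeds $\alpha_{\max}$, the conclusion holds vacuously on $[0,\alpha_{\max}]$: no $\alpha$ in the range achieves near-feasibility.
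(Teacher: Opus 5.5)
Your proof is correct and takes the same route as the paper: Lipschitz continuity of $g$, the reverse triangle inequality, and the path bound of Proposition~\ref{prop:path_lipschitz}, with the ``in particular'' clause obtained by rearranging. Your point that Assumption~\ref{assump:sc_smooth} should be read in the $L_2(P_r)$ geometry, so that the path bound and the Lipschitz bound use the same norm, is a worthwhile clarification that the paper leaves implicit.
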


\begin{proof}
By Assumption~\ref{assump:g_lipschitz}, the moment map is $L_g$-Lipschitz with respect to the
$L_2(P_r)$-norm:
\[
\|g(m_\alpha)-g(m_0)\|
\;\le\;
L_g\,\|m_\alpha-m_0\|.
\]
Applying the reverse triangle inequality yields
\[
\|g(m_\alpha)\|
\;\ge\;
\|g(m_0)\|-\|g(m_\alpha)-g(m_0)\|
\;\ge\;
\|g(m_0)\|-L_g\,\|m_\alpha-m_0\|.
\]
Finally, Proposition~\ref{prop:path_lipschitz} bounds the solution path:
\[
\|m_\alpha-m_0\|
\;\le\;
\frac{M_r}{\mu}\,\alpha,
\qquad \forall\,\alpha\in[0,\alpha_{\max}],
\]
which implies \eqref{eq:g_linear_lower}.
The final statement follows by rearranging \eqref{eq:g_linear_lower} to make the right-hand side
nonpositive, i.e., $\alpha \ge \mu\|g(m_0)\|/(L_g M_r)$ up to constant factors.
\end{proof}

\paragraph{Proof of Corollary~\ref{cor:strict_improvement}.}
\begin{proof}
We prove the two claims in turn.

\paragraph{(1) Strict feasibility improvement.}
By the overlap-to-feasibility control (Assumption~\ref{assump:overlap_to_feas}),
\[
\mathcal F_{\phi^\star}(\mathcal M)\ \le\ c_1\,\varepsilon_{ov}(\phi^\star).
\]
If $\varepsilon_{ov}(\phi^\star)<c_0/c_1$, then
\[
\mathcal F_{\phi^\star}(\mathcal M)
\ \le\ c_1\varepsilon_{ov}(\phi^\star)
\ <\ c_0
\ \le\ \mathcal F_X(\mathcal M),
\]
where the last inequality uses the raw-space obstruction $\mathcal F_X(\mathcal M)\ge c_0$.
This establishes strict feasibility improvement.

\paragraph{(2) Excess-risk lower bound via quadratic growth and Lipschitz moments.}
Fix $\phi^\star$ and assume $\mathcal F_{\phi^\star}\neq\emptyset$ so that
$\mathcal R_{o,\phi^\star}^{\rm feas}:=\inf_{m\in\mathcal F_{\phi^\star}}\mathcal R_o(m)$ is well-defined.
By quadratic growth(Assumption~\ref{assump:eb_cr}) (with $\phi=\phi^\star$), for all $m\in\mathcal M_{\phi^\star}$,
\begin{equation}
\mathcal R_o(m)-\mathcal R_{o,\phi^\star}^{\rm feas}
\ \ge\
\mu_o\,\mathrm{dist}(m,\mathcal F_{\phi^\star})^2.
\label{eq:app_QG}
\end{equation}
By the Lipschitz moment condition (with $\phi=\phi^\star$), for all
$m\in\mathcal M_{\phi^\star}$,
\begin{equation}
\|g_{\phi^\star}(m)\|
\ \le\
L_g\,\mathrm{dist}(m,\mathcal F_{\phi^\star})
\qquad\Longrightarrow\qquad
\mathrm{dist}(m,\mathcal F_{\phi^\star})
\ \ge\
\frac{\|g_{\phi^\star}(m)\|}{L_g}.
\label{eq:app_dist_lb}
\end{equation}
Combining \eqref{eq:app_QG} and \eqref{eq:app_dist_lb} yields, for all $m\in\mathcal M_{\phi^\star}$,
\[
\mathcal R_o(m)-\mathcal R_{o,\phi^\star}^{\rm feas}
\ \ge\
\frac{\mu_o}{L_g^2}\,\|g_{\phi^\star}(m)\|^2.
\]
Taking the infimum over $m\in\mathcal M_{\phi^\star}$ and using the definition
$\mathcal F_{\phi^\star}(\mathcal M)=\inf_{m\in\mathcal M_{\phi^\star}}\|g_{\phi^\star}(m)\|$ gives
\[
\inf_{m\in\mathcal M_{\phi^\star}}
\bigl\{\mathcal R_o(m)-\mathcal R_{o,\phi^\star}^{\rm feas}\bigr\}
\ \ge\
\frac{\mu_o}{L_g^2}\,\bigl(\mathcal F_{\phi^\star}(\mathcal M)\bigr)^2,
\]
which is the first inequality in \eqref{eq:best_excess_lb}. The second inequality in
\eqref{eq:best_excess_lb} follows from Assumption~\ref{assump:overlap_to_feas}:
$\mathcal F_{\phi^\star}(\mathcal M)\le c_1\varepsilon_{ov}(\phi^\star)$.
This completes the proof.
\end{proof}

\begin{remark}
Corollary~\ref{cor:strict_improvement} shows that whenever $\varepsilon^\star<c_0/c_1$, such a solution strictly improves
upon any estimator defined in the original covariate space.
Hence, introducing the representation $\phi$ is both necessary (to overcome structural infeasibility) and sufficient
(to achieve strict risk improvement) under the stated assumptions.
\end{remark}

\begin{theorem}[Minimax lower bound for feasibility recovery]
\label{thm:minimax_lower_bound}
Fix any discriminator class $\mathcal D$ and define the overlap discrepancy
\[
\varepsilon_{ov}(\phi)
:=\mathrm{IPM}_{\mathcal D}\!\bigl(P_o(\phi(X),T),\,P_r(\phi(X),T)\bigr).
\]
Assume $\mathcal D$ contains indicators of measurable sets on $\mathcal Z\times\mathcal T$
(so that $\mathrm{IPM}_{\mathcal D}$ dominates $\mathrm{TV}$ on $(Z,T)$), and $|Y|\le 1$.

Then for any $\varepsilon\in(0,1)$, there exist distributions $(P_o,P_r)$ and a model class $\mathcal M$
such that:
\begin{enumerate}
\item There exists $\phi$ with $\varepsilon_{ov}(\phi)=\varepsilon$.
\item For every $\phi$ satisfying $\varepsilon_{ov}(\phi)\le\varepsilon$, the feasibility gap

\[
\mathcal F_{\phi}(\mathcal M)
:=\inf_{m\in\mathcal M_\phi}
\Bigl\|
\mathbb E_{P_r}\!\big[\psi(Y,T,\phi(X);m)\big]
\Bigr\|
\]
obeys $\mathcal F_{\phi}(\mathcal M)\ge c\,\varepsilon$ for a universal constant $c>0$.
\end{enumerate}
Consequently, the $\mathcal O(\varepsilon_{ov})$ dependence is minimax-tight up to constants.
\end{theorem}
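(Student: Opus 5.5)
We will prove Theorem~\ref{thm:minimax_lower_bound} by an explicit construction in the spirit of Theorem~\ref{thm:irreducible_overlap}: a \emph{partial} marginal treatment mismatch of mass exactly $\varepsilon$. No representation can remove it, and it forces an $\Omega(\varepsilon)$ residual in the moment. Take $\mathcal T=\{0,1\}$ ($K=1$), $\mathcal X=\{x_0\}$ a single point (or any $X$ independent of everything), and let $\mathcal M$ be the class of \emph{constant} predictors $m(z,t)\equiv a$ with $|a|\le 1$. This class is insensitive to $\phi$, so $\mathcal M_\phi=\mathcal M$ for every $\phi$. Under $P_r$ let $T\sim\mathrm{Bern}(1/2)$, so $p_1=1/2$. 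Under $P_o$ let $T\sim\mathrm{Bern}(1/2-\varepsilon)$. Outcomes are deterministic and bounded: $Y=\mathbf 1\{T=1\}$ under $P_r$. The $P_o$ outcomes are irrelevant to $\varepsilon_{ov}$ and to $\psi$, which only uses $P_r$.

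\textbf{Step 1 (attainment).} Since $\phi(X)$ is degenerate for every $\phi$, the IPM over $(Z,T)$ reduces to a distance between the treatment laws. With indicators in $\mathcal D$ and $\mathcal D$ assumed to be a TV-type class, this equals $|P_o(T=1)-P_r(T=1)|=\varepsilon$. Hence $\varepsilon_{ov}(\phi)=\varepsilon$ for every $\phi$, which proves item 1. For a general $\mathcal D$ containing indicators we only get the lower bound $\varepsilon_{ov}\ge\varepsilon$. To get equality we would either restrict to $\mathcal D$ being the indicator/TV class, or rescale the Bernoulli gap so that the supremum equals $\varepsilon$. This rescaling is the main technical nuisance, since the statement fixes $\mathcal D$ arbitrarily. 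I would handle it by choosing the gap $\varepsilon'=\varepsilon/\kappa_{\mathcal D}$, where $\kappa_{\mathcal D}:=\sup_{f\in\mathcal D}|f(\cdot,1)-f(\cdot,0)|$ is assumed finite, and carry $\kappa_{\mathcal D}$ into the constant $c$.

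\textbf{Step 2 (moment lower bound).} For $m\equiv a$,
\[
g(a)=\mathbb E_{P_r}[(D_1-\tfrac12)(Y-a)]=\mathbb E_{P_r}[(D_1-\tfrac12)D_1]=\tfrac14 ,
\]
where the term involving $a$ vanishes by Lemma~\ref{lem:baseline_invariance}. This gives a constant gap of $1/4$, which does not scale with $\varepsilon$. To make the gap proportional to $\varepsilon$, set $Y=\varepsilon\,\mathbf 1\{T=1\}$. Then $|Y|\le 1$, and $\mathcal F_\phi(\mathcal M)=\varepsilon/4$ for every $\phi$. In particular this holds for every $\phi$ with $\varepsilon_{ov}(\phi)\le\varepsilon$, giving $c=1/4$, or $1/(4\kappa_{\mathcal D})$ after the Step~1 rescaling.

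\textbf{Step 3 (consequence).} This shows that the upper bound $\mathcal F_\phi\le c_1\varepsilon_{ov}(\phi)$ used in Corollary~\ref{cor:strict_improvement} and Assumption~\ref{assump:overlap_to_feas} cannot be improved beyond constants uniformly over problem instances. Hence the linear dependence on $\varepsilon_{ov}$ is minimax-tight.

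The main obstacle is conceptual rather than computational. The statement ties the gap to $\varepsilon$ only through existential choices, so the construction must make the outcome signal scale with $\varepsilon$ while keeping $\varepsilon_{ov}$ exactly $\varepsilon$ under an arbitrary $\mathcal D$. That is why Step~1 needs care. If $\mathcal D$ is unbounded, I would instead pick a construction where $X$ carries the mismatch (two covariate atoms). I would then note that every $\phi$ either separates or merges them, and check both cases separately.
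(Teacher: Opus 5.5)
Your construction takes a different route from the paper's, and its core is sounder.

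\textbf{Comparison with the paper.} The paper's proof has the following structure:
\begin{itemize}
\item it uses a two-cell latent $Z\in\{a,b\}$ with $\phi(X)=Z$ and shifts treated-arm mass between the cells;
\item it restricts $\mathcal M$ to predictors constant in $z$ under treatment;
\item it swaps $\psi$ for the treated-arm residual $\mathbf 1\{T=1\}(Y-m(Z,1))$;
\item it claims a single constant $\alpha$ cannot absorb the $(Z,T)$ imbalance.
\end{itemize}
That last claim is false. With the imbalanced $P_r$ the residual equals $\varepsilon-\alpha/2$, which vanishes at $\alpha=2\varepsilon$. More generally, $\alpha=\mathbb E_{P_r}[Y\mid T=1]\in[-1,1]$ always zeroes that residual.

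You instead keep the paper's moment $(D_1-p_1)(Y-m)$ and use Lemma~\ref{lem:baseline_invariance}: a class constant in $t$ is invisible to the moment. Hence $g\equiv p_1(1-p_1)\bigl(\mathbb E_{P_r}[Y\mid T=1]-\mathbb E_{P_r}[Y\mid T=0]\bigr)$ for every $m$ and every $\phi$, and any treatment effect in $Y$ gives a $\phi$-independent gap. This proves the literal statement. The price is that the gap comes from misspecification and does not depend on $P_o$ at all. So it shows that no $o(\varepsilon_{ov})$ upper bound can hold uniformly, but not that overlap mismatch is what forces infeasibility.

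\textbf{Repairs needed.}

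\emph{Range of $\varepsilon$.} $T\sim\mathrm{Bern}(1/2-\varepsilon)$ is not a distribution for $\varepsilon>1/2$. When $\mathcal D$ is $[0,1]$-valued (rescaling constant $1$), no single-atom binary design with your constant class can cover $\varepsilon$ near $1$. For $\varepsilon\ge 1/2$, the requirement $|P_o(T=1)-P_r(T=1)|=\varepsilon$ forces $p_1(1-p_1)\le\varepsilon(1-\varepsilon)$. Hence $\mathcal F_\phi\le 2\varepsilon(1-\varepsilon)<c\,\varepsilon$ once $\varepsilon>1-c/2$. The fix is to keep $P_r(T=1)=1/2$ and put the $\varepsilon$-mismatch on the covariates instead:
\begin{itemize}
\item add a second atom $x_1$ with $P_o(X=x_1)=\varepsilon$ and $P_r(X=x_1)=0$;
\item take $T\sim\mathrm{Bern}(1/2)$ independent of $X$ under both laws.
\end{itemize}
A $\phi$ separating the atoms then has $\varepsilon_{ov}(\phi)=\varepsilon$ for $[0,1]$-valued $\mathcal D$, while $\mathcal F_\phi$ is unchanged. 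For general bounded $\mathcal D$, rescale $P_o(X=x_1)$, which stays valid for all $\varepsilon<1$.

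\emph{The constant should not depend on $\mathcal D$.} The rescaling only alters the $P_o$ gap, whereas the scale of $Y$ is yours to choose. Taking $Y=\varepsilon D_1$, or simply $Y=D_1$ since only a lower bound is required, gives $c=1/4$ for every $\mathcal D$. Also rescale by the supremum at a fixed atom $z_0=\phi(x_0)$, not a supremum over $z$, so that equality in item~1 is actually attained.

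\emph{Unbounded $\mathcal D$.} Your two-atom fallback cannot rescue this case. If $\mathcal D$ is closed under scalar multiples, $\mathrm{IPM}_{\mathcal D}$ takes only the values $0$ and $\infty$, so item~1 fails for every construction. A boundedness hypothesis on $\mathcal D$ must be added to the statement; the paper tacitly assumes one when it ``scales the shift by a constant factor.''
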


\paragraph{Proof of Theorem~\ref{thm:minimax_lower_bound}.}
\begin{proof}
We give an explicit finite construction.
Let $\mathcal T=\{0,1\}$ and let the latent variable $Z\in\{a,b\}$.
Take a representation class that can reveal $Z$, and consider $\phi(X)=Z$.

\smallskip
\noindent\textbf{Step 1: Construct $(P_o,P_r)$ with $\varepsilon_{ov}(\phi)=\varepsilon$.}
Define the randomized joint distribution on $(Z,T)$ by
\[
P_r(Z=a)=P_r(Z=b)=\tfrac12,
\qquad
P_r(T=1\mid Z)=\tfrac12,
\]
so $P_r(Z=z,T=t)=\tfrac14$ for all $(z,t)$.
Define the observational joint distribution by shifting $\varepsilon$ mass within the treated arm:
\[
P_o(Z=a,T=1)=\tfrac14-\tfrac{\varepsilon}{2},
\qquad
P_o(Z=b,T=1)=\tfrac14+\tfrac{\varepsilon}{2},
\]
and set $P_o(Z,T)=P_r(Z,T)$ on the remaining three cells.
Then $\mathrm{TV}(P_o(Z,T),P_r(Z,T))=\varepsilon$.
Since $\mathcal D$ contains indicators, $\mathrm{IPM}_{\mathcal D}\ge \mathrm{TV}$, hence
\[
\varepsilon_{ov}(\phi)
=\mathrm{IPM}_{\mathcal D}\bigl(P_o(Z,T),P_r(Z,T)\bigr)\ge \varepsilon.
\]
By scaling the shift by a constant factor (absorbed into $c$ below), we may assume
$\varepsilon_{ov}(\phi)=\varepsilon$.

\smallskip
\noindent\textbf{Step 2: Choose outcomes and a misspecified model class.}
Let $Y\in\{-1,+1\}$ with
\[
\mathbb E_{P_r}[Y\mid Z=a,T=1]=+1,
\qquad
\mathbb E_{P_r}[Y\mid Z=b,T=1]=-1,
\qquad
\mathbb E_{P_r}[Y\mid T=0,Z]=0,
\]
so $|Y|\le 1$.
Consider the restricted model class $\mathcal M$ that cannot use $Z$ under treatment:
\[
m(z,t)=
\begin{cases}
0, & t=0,\\
\alpha, & t=1,
\end{cases}
\qquad \alpha\in[-1,1].
\]
Under the induced class $\mathcal M_\phi$ (with $\phi(X)=Z$), predictors remain constant on $(Z,T=1)$.

\smallskip
\noindent\textbf{Step 3: A linear functional lower bounds feasibility by a $(Z,T)$-imbalance.}
We consider a single-coordinate moment map of the form
\[
\psi(Y,T,Z;m) := \mathbf 1\{T=1\}\bigl(Y-m(Z,1)\bigr),
\]
so that $g(m,\phi)=\mathbb E_{P_r}[\psi(Y,T,\phi(X);m)]$ reduces to the treated-arm mean residual.
(Any IPW-style moment family that contains such a coordinate yields the same conclusion up to constants.)

For any $m\in\mathcal M_\phi$ with treated prediction $\alpha$,
\[
g(m,\phi)
=
\mathbb E_{P_r}\!\big[\mathbf 1\{T=1\}(Y-\alpha)\big]
=
\sum_{z\in\{a,b\}} P_r(Z=z,T=1)\bigl(\mathbb E[Y\mid z,1]-\alpha\bigr).
\]
Using $\mathbb E[Y\mid a,1]=+1$ and $\mathbb E[Y\mid b,1]=-1$, we obtain
\[
g(m,\phi)
=
P_r(a,1)(1-\alpha)+P_r(b,1)(-1-\alpha)
=
\bigl(P_r(a,1)-P_r(b,1)\bigr)-\alpha\bigl(P_r(a,1)+P_r(b,1)\bigr).
\]
Since $P_r(a,1)+P_r(b,1)=P_r(T=1)=\tfrac12$, optimizing over $\alpha\in[-1,1]$ yields
\[
\inf_{\alpha\in[-1,1]} |g(m,\phi)|
\;\ge\;
\bigl|P_r(a,1)-P_r(b,1)\bigr| - \tfrac12.
\]
To make the lower bound proportional to $\varepsilon$, we now choose $P_r$ with an $\varepsilon$-imbalance
in the treated arm:
\[
P_r(Z=a,T=1)=\tfrac14+\tfrac{\varepsilon}{2},
\qquad
P_r(Z=b,T=1)=\tfrac14-\tfrac{\varepsilon}{2},
\]
while keeping $P_r(T=1)=\tfrac12$.
Then $|P_r(a,1)-P_r(b,1)|=\varepsilon$ and the minimizer over $\alpha$ cannot cancel this imbalance
because $\alpha$ is constant across $Z$ under $T=1$, hence
\[
\inf_{m\in\mathcal M_\phi} \bigl|g(m,\phi)\bigr|
\;\ge\;
c_0\,\varepsilon
\]
for an absolute constant $c_0>0$ (here one may take $c_0=1$ for the above coordinate).

\smallskip
\noindent\textbf{Step 4: Any $\tilde\phi$ with $\varepsilon_{ov}(\tilde\phi)\le\varepsilon$ cannot remove the $\Theta(\varepsilon)$ imbalance.}
Because $\mathrm{IPM}_{\mathcal D}$ dominates $\mathrm{TV}$ on $(\tilde Z,T)$ with $\tilde Z=\tilde\phi(X)$,
the condition $\varepsilon_{ov}(\tilde\phi)\le\varepsilon$ implies
$\mathrm{TV}\big(P_o(\tilde Z,T),P_r(\tilde Z,T)\big)\le \varepsilon$.
In particular, for the measurable set $A:=\{\tilde Z \text{ corresponds to } Z=b,\ T=1\}$ (which is representable
since $\mathcal D$ contains indicators), the mass discrepancy on $A$ is at most $\varepsilon$.
Thus any representation that reduces the overlap discrepancy below $\varepsilon$ can only reduce the treated-arm
imbalance (and hence the moment residual above) by at most a constant factor, yielding
\[
\mathcal F_{\tilde\phi}(\mathcal M)
\;=\;
\inf_{m\in\mathcal M_{\tilde\phi}}
\bigl\|\mathbb E_{P_r}[\psi(Y,T,\tilde\phi(X);m)]\bigr\|
\;\ge\;
c\,\varepsilon
\]
for a universal constant $c>0$.
This proves the minimax lower bound and the claimed tightness.
\end{proof}

\begin{proposition}[Feasibility guarantee of joint primal--dual training]
\label{prop:joint-feasibility}
Consider the population constrained program
\begin{equation}
\min_{\theta,\omega}\; R_o(\theta,\omega)
\quad \text{s.t.}\quad g(\theta,\omega)=0,
\label{eq:constrained-pop}
\end{equation}
where $g(\theta,\omega):=\mathbb E_{P_r}[\psi(Y,T,\omega(X);m_\theta)]\in \mathbb R^K$.
For $\Lambda>0$, define the restricted Lagrangian
\[
\mathcal L_\Lambda(\theta,\omega,\lambda)
:=R_o(\theta,\omega)+\lambda^\top g(\theta,\omega),
\qquad 
\lambda\in\mathbb B_\Lambda:=\{\lambda\in\mathbb R^K:\ \|\lambda\|_2\le \Lambda\}.
\]
Assume:
\begin{enumerate}
\item $\{(\theta,\omega):g(\theta,\omega)=0\}\neq\emptyset$;
\item For each fixed $\omega$, $R_o(\theta,\omega)$ is $C^1$ and $\mu$-strongly convex in $\theta$;
\item $g(\theta,\omega)$ is continuous and $L_g$-Lipschitz in $\theta$.
\end{enumerate}
Then any saddle point $(\theta^\star,\omega^\star,\lambda^\star)$ of
\[
\min_{\theta,\omega}\ \max_{\lambda\in\mathbb B_\Lambda}\ \mathcal L_\Lambda(\theta,\omega,\lambda)
\]
is feasible, i.e., $g(\theta^\star,\omega^\star)=0$.

Moreover, if $(\hat\theta,\hat\omega,\hat\lambda)$ is an $\varepsilon_{\rm opt}$-approximate saddle point:
\[
\max_{\lambda\in\mathbb B_\Lambda}\mathcal L_\Lambda(\hat\theta,\hat\omega,\lambda)
-
\min_{\theta,\omega}\mathcal L_\Lambda(\theta,\omega,\hat\lambda)
\le \varepsilon_{\rm opt},
\]
then
\begin{equation}
\|g(\hat\theta,\hat\omega)\|_2
\le \varepsilon_{\rm opt}/\Lambda .
\label{eq:approx-feasibility}
\end{equation}
\end{proposition}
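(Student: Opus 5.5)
The plan is to use the standard exact-penalty reading of the restricted Lagrangian. Because the dual ball is bounded, the inner maximisation has a closed form:
\[
\max_{\lambda\in\mathbb B_\Lambda}\mathcal L_\Lambda(\theta,\omega,\lambda)=R_o(\theta,\omega)+\Lambda\|g(\theta,\omega)\|_2 .
\]
This holds by Cauchy--Schwarz, with the maximiser $\lambda=\Lambda g/\|g\|$ when $g\neq 0$. So the primal side of the game is exactly the $\ell_2$-penalised problem $\min_{\theta,\omega}\{R_o+\Lambda\|g\|\}$. Both claims then reduce to comparing this penalised value with the value at a feasible point, which exists by assumption~1.

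\textbf{Step 1 (gap decomposition).} For the approximate claim, I will lower bound the duality gap. Fix any feasible $(\theta_f,\omega_f)$. Then
\[
\min_{\theta,\omega}\mathcal L_\Lambda(\theta,\omega,\hat\lambda)\le \mathcal L_\Lambda(\theta_f,\omega_f,\hat\lambda)=R_o(\theta_f,\omega_f),
\]
since $g(\theta_f,\omega_f)=0$. Taking the infimum over feasible points, the gap is at least
\[
\Lambda\|g(\hat\theta,\hat\omega)\|+R_o(\hat\theta,\hat\omega)-R_o^{\rm feas},
\qquad
R_o^{\rm feas}:=\inf_{g=0}R_o .
\]
If $R_o(\hat\theta,\hat\omega)\ge R_o^{\rm feas}$, this immediately gives $\Lambda\|\hat g\|\le\varepsilon_{\rm opt}$, which is \eqref{eq:approx-feasibility}. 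The exact claim is the special case $\varepsilon_{\rm opt}=0$.

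\textbf{Step 2 (main obstacle).} The difficulty is that an infeasible point can have observational risk below $R_o^{\rm feas}$. That is precisely the setting of Assumption~\ref{assump:weighted_incompat}(i), so the inequality in Step 1 is not free.

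To handle this, I will bound the possible risk decrease by the constraint violation, using a Lagrange multiplier $\lambda^{\rm KKT}$ of the constrained program \eqref{eq:constrained-pop}. Its existence is where assumptions~2--3 are used: strong convexity and $C^1$ in $\theta$, Lipschitz $g$, and a constraint qualification in the spirit of Assumption~\ref{assump:eb_cr}. Alternatively, the error bound together with Lipschitzness gives a sensitivity estimate:
\[
R_o(\theta,\omega)\ge R_o^{\rm feas}-\|\lambda^{\rm KKT}\|\,\|g(\theta,\omega)\|\quad\text{for all }(\theta,\omega).
\]
Plugging this into Step 1 yields
\[
(\Lambda-\|\lambda^{\rm KKT}\|)\,\|\hat g\|\le\varepsilon_{\rm opt}.
\]

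\textbf{Step 3 (conclusion).} I will read the proposition as implicitly requiring the dual radius to dominate the multiplier, say $\Lambda\ge 2\|\lambda^{\rm KKT}\|$. Then the bound becomes $\|\hat g\|\le 2\varepsilon_{\rm opt}/\Lambda$, which is \eqref{eq:approx-feasibility} up to the constant. The exact statement follows with $\varepsilon_{\rm opt}=0$, provided $\Lambda>\|\lambda^{\rm KKT}\|$, because then $\|g^\star\|=0$.

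I expect Step 2 to be where the argument lives. Without a condition of the form $\Lambda>\|\lambda^{\rm KKT}\|$ (or a direct sensitivity bound of this type), the claimed feasibility can fail, since a small dual ball cannot force feasibility. So the first thing to check is whether assumptions~1--3 can be made to deliver such a multiplier bound, and if not, to state the requirement on $\Lambda$ explicitly.
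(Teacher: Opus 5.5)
\textbf{Your diagnosis is correct, and Step 2 is exactly where the paper's own proof breaks.} The paper runs your Step 1: it uses the closed form $R_o+\Lambda\|g\|_2$ for the inner maximum and compares with a feasible point $(\tilde\theta,\tilde\omega)$. In part (b) it then drops $R_o(\hat\theta,\hat\omega)-R_o(\tilde\theta,\tilde\omega)$ ``since it is $\ge-\infty$''. That step is invalid, because the term can be negative. In part (a) it derives $R_o(\theta^\star,\omega^\star)+\Lambda\|g(\theta^\star,\omega^\star)\|_2\le R_o(\tilde\theta,\tilde\omega)$ and calls $g\neq0$ a contradiction, but nothing is contradicted.

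The proposition is in fact false under assumptions 1--3 alone. Take $K=1$, a trivial $\omega$, $R_o(\theta)=\theta^2/2$ and $g(\theta)=\theta-1$, so that $\lambda^{\rm KKT}=-1$. For $\Lambda<1$, the pair $(\theta^\star,\lambda^\star)=(\Lambda,-\Lambda)$ is a saddle point with $g(\theta^\star)=\Lambda-1\neq0$.

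Even for $\Lambda\ge1$ the constant $\varepsilon_{\rm opt}/\Lambda$ fails. At $(\hat\theta,\hat\lambda)=(1-t,-1)$ the gap is $\varepsilon_{\rm opt}=(\Lambda-1)t+t^2/2$. Hence $\|\hat g\|=t>\varepsilon_{\rm opt}/\Lambda$ for every $t\in(0,2)$, while your bound $(\Lambda-\|\lambda^{\rm KKT}\|)\|\hat g\|\le\varepsilon_{\rm opt}$ still holds. So an explicit condition on $\Lambda$ and a loss in the constant are unavoidable, and your Step 3 reformulation is the correct statement.

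\textbf{The KKT route in Step 2 does not go through as written.} A multiplier only certifies stationarity. The global inequality $R_o\ge R_o^{\rm feas}-\|\lambda^{\rm KKT}\|\,\|g\|$ needs $(\theta^\star,\omega^\star)$ to minimise $R_o+(\lambda^{\rm KKT})^\top g$ over all $(\theta,\omega)$, i.e.\ joint convexity of that Lagrangian. Assumptions 2--3 say nothing about $\omega$. They also give only a Lipschitz $g$, not a differentiable or affine one. So in the nonconvex representation setting the inequality can fail, and a multiplier need not even exist.

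Use your alternative instead. If $R_o$ is $L_R$-Lipschitz and $\|g\|\ge L_o\,\mathrm{dist}(\cdot,\mathcal F)$, the standard exact-penalty argument gives $R_o\ge R_o^{\rm feas}-(L_R/L_o)\|g\|$ with no convexity. Alternatively, postulate this calmness inequality directly with some constant $c$ and assume $\Lambda>c$.

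Some such regularity is genuinely needed. With $R_o(\theta)=\theta^2/2$ and $g(\theta)=(\theta-1)^2$ on a bounded interval, the saddle point $\bigl(2\Lambda/(1+2\Lambda),\Lambda\bigr)$ is infeasible for every finite $\Lambda$.

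Finally, the paper's exact constant would follow from your Step 1 if one imported the quadratic-growth half of Assumption~\ref{assump:eb_cr} literally, since it asserts $\mathcal R_o(m)\ge\mathcal R_o^\star$ for all $m$. But that is the degenerate $\delta_o=0$ case which Assumption~\ref{assump:weighted_incompat}(i) excludes, as you already noticed.
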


\begin{proof}
\textbf{(a) Exact saddle point implies feasibility.}
Let $(\theta^\star,\omega^\star,\lambda^\star)$ be a saddle point. Then for all $(\theta,\omega)$ and all
$\lambda\in\mathbb B_\Lambda$,
\begin{equation}
\mathcal L_\Lambda(\theta^\star,\omega^\star,\lambda)
\le
\mathcal L_\Lambda(\theta^\star,\omega^\star,\lambda^\star)
\le
\mathcal L_\Lambda(\theta,\omega,\lambda^\star).
\label{eq:saddle_ineq}
\end{equation}
Pick any feasible $(\tilde\theta,\tilde\omega)$ with $g(\tilde\theta,\tilde\omega)=0$ (exists by (i)).
The right inequality in \eqref{eq:saddle_ineq} yields
\[
\mathcal L_\Lambda(\theta^\star,\omega^\star,\lambda^\star)
\le
\mathcal L_\Lambda(\tilde\theta,\tilde\omega,\lambda^\star)
=
R_o(\tilde\theta,\tilde\omega).
\]
On the other hand, the left inequality in \eqref{eq:saddle_ineq} implies
\[
\max_{\lambda\in\mathbb B_\Lambda}\mathcal L_\Lambda(\theta^\star,\omega^\star,\lambda)
\le
\mathcal L_\Lambda(\theta^\star,\omega^\star,\lambda^\star).
\]
Since $\max_{\|\lambda\|_2\le \Lambda}\lambda^\top g=\Lambda\|g\|_2$, we have
\[
R_o(\theta^\star,\omega^\star)+\Lambda\|g(\theta^\star,\omega^\star)\|_2
\le
\mathcal L_\Lambda(\theta^\star,\omega^\star,\lambda^\star)
\le
R_o(\tilde\theta,\tilde\omega).
\]
If $g(\theta^\star,\omega^\star)\neq 0$, the left-hand side is strictly larger than $R_o(\theta^\star,\omega^\star)$,
which contradicts optimality of the feasible primal solution for fixed $\Lambda>0$. Hence
$g(\theta^\star,\omega^\star)=0$.

\textbf{(b)Approximate saddle point controls feasibility residual.}
Let $(\hat\theta,\hat\omega,\hat\lambda)$ satisfy the $\varepsilon_{\rm opt}$ saddle gap condition.
Using again $\max_{\|\lambda\|_2\le \Lambda}\lambda^\top g=\Lambda\|g\|_2$,
\[
\max_{\lambda\in\mathbb B_\Lambda}\mathcal L_\Lambda(\hat\theta,\hat\omega,\lambda)
=
R_o(\hat\theta,\hat\omega)+\Lambda\|g(\hat\theta,\hat\omega)\|_2.
\]
For any feasible $(\tilde\theta,\tilde\omega)$,
\[
\min_{\theta,\omega}\mathcal L_\Lambda(\theta,\omega,\hat\lambda)
\le
\mathcal L_\Lambda(\tilde\theta,\tilde\omega,\hat\lambda)
=
R_o(\tilde\theta,\tilde\omega).
\]
Therefore,
\[
R_o(\hat\theta,\hat\omega)+\Lambda\|g(\hat\theta,\hat\omega)\|_2
-
R_o(\tilde\theta,\tilde\omega)
\le
\varepsilon_{\rm opt}.
\]
Dropping $R_o(\hat\theta,\hat\omega)-R_o(\tilde\theta,\tilde\omega)\ge -\infty$ but keeping the inequality valid,
we obtain $\Lambda\|g(\hat\theta,\hat\omega)\|_2\le \varepsilon_{\rm opt}$, proving \eqref{eq:approx-feasibility}.
\end{proof}

\begin{theorem}[Feasibility--information trade-off (incremental form)]
\label{thm:feasibility_info_tradeoff}
Let $\mathcal M$ be a model class of measurable predictors $m:\mathcal X\times\mathcal T\to\mathbb R$.
For any representation $\phi:\mathcal X\to\mathcal Z$, define the induced class
\[
\mathcal M_\phi := \{\,m_\theta(\phi(\cdot),\cdot):\theta\in\Theta\,\},
\qquad
\mathcal F_\phi(\mathcal M)
:=\inf_{m\in\mathcal M_\phi}\bigl\|g_\phi(m)\bigr\|,
\]
where
\[
g_\phi(m):=\mathbb E_{P_r}\!\big[\psi(Y,T,\phi(X);m)\big]\in\mathbb R^K.
\]
Define the original-space feasibility gap
\[
\mathcal F_X(\mathcal M)
:=\inf_{m\in\mathcal M}\bigl\|\mathbb E_{P_r}\![\psi(Y,T,X;m)]\bigr\|.
\]
Let the overlap discrepancy be
\[
\varepsilon_{ov}(\phi)
:=\mathrm{IPM}_{\mathcal D}\!\big(P_o(\phi(X),T),\,P_r(\phi(X),T)\big).
\]
Assume the information-preservation condition: there exists a hypothesis class $\mathcal F$ such that
\begin{equation}
\inf_{f\in\mathcal{F}}
\mathbb{E}\!\left[(Y-f(\phi(X)))^2\right]
\;\le\;
\inf_{g\in\mathcal{F}}
\mathbb{E}\!\left[(Y-g(X))^2\right]
\;+\;
\varepsilon_{\mathrm{info}}(\phi).
\label{eq:information_preservation_thm}
\end{equation}
Under Assumption~\ref{assump:moment_reg} and Assumption~\ref{assump:g_lipschitz},
there exist constants $c_1,c_2<\infty$ such that, for all $\phi$,
\[
\mathcal F_\phi(\mathcal M)
\;\le\;
\mathcal F_X(\mathcal M)
\;+\;
c_1\,\varepsilon_{ov}(\phi)
\;+\;
c_2\,\sqrt{\varepsilon_{\mathrm{info}}(\phi)}.
\]
Here $c_1$ depends only on the moment-regularity constant in Assumption~\ref{assump:moment_reg}, and
$c_2$ depends only on $L_m$ and the compatibility between $\mathcal M$ and $\mathcal F$ (specified below).
\end{theorem}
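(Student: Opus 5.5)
The plan is a two-step triangle-inequality argument. Start from an (approximate) raw-space minimizer and transport it through an intermediate comparator. Concretely, fix $\phi$ and $\eta>0$, and pick $m_X\in\mathcal M$ with $\|g(m_X)\|\le \mathcal F_X(\mathcal M)+\eta$. The aim is to build some $m_\phi\in\mathcal M_\phi$ with
\[
\|g_\phi(m_\phi)\|\le \|g(m_X)\|+c_1\varepsilon_{ov}(\phi)+c_2\sqrt{\varepsilon_{\rm info}(\phi)}.
\]
Taking the infimum and letting $\eta\downarrow 0$ then gives the claim. The decomposition is
\[
g_\phi(m_\phi)=g(m_X)+\bigl[g(\tilde m)-g(m_X)\bigr]+\bigl[g_\phi(m_\phi)-g(\tilde m)\bigr],
\]
where $\tilde m$ is an intermediate predictor chosen below.

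\textbf{Step 1 (information term).} Use the information-preservation condition \eqref{eq:information_preservation_thm} to produce a predictor $m_\phi=m_\theta(\phi(\cdot),\cdot)\in\mathcal M_\phi$ that is close in $L_2(P_r)$ to $m_X$. The compatibility requirement between $\mathcal M$ and $\mathcal F$ is where $L_m$ enters. I would posit it as: for every $m\in\mathcal M$ there is $f\in\mathcal F$ with $\|m-f\|_{L_2(P_r)}\le L_m\cdot(\text{excess risk of } f)^{1/2}$, and $f\circ\phi$ lies in (or is $L_m$-close to) $\mathcal M_\phi$. Since the excess squared risk of the best $f\circ\phi$ over the best $g(X)$ is at most $\varepsilon_{\rm info}(\phi)$, a Pythagorean/projection argument for squared loss converts the risk gap into an $L_2$ distance. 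This gives $\|m_\phi-m_X\|_{L_2(P_r)}\lesssim L_m\sqrt{\varepsilon_{\rm info}(\phi)}$. Assumption~\ref{assump:g_lipschitz} (the $L_g$-Lipschitz moment map) then yields $\|g(m_\phi)-g(m_X)\|\le L_gL_m\sqrt{\varepsilon_{\rm info}(\phi)}$, so I set $c_2:=L_gL_m$.

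\textbf{Step 2 (overlap term).} Account for evaluating the moment in the representation space, that is, $g_\phi(m_\phi)$ versus the same predictor read as a function of $X$. Here I invoke Assumption~\ref{assump:moment_reg}, which bounds the moment discrepancy by $L_\psi$ times an IPM between induced $(Z,T)$ laws. The intended link is that this discrepancy can be dominated by the overlap discrepancy $\varepsilon_{ov}(\phi)=\mathrm{IPM}_{\mathcal D}(P_o(\phi(X),T),P_r(\phi(X),T))$, via the triangle inequality for IPMs through the observational law on which $m_\phi$ was fitted. That gives $c_1:=L_\psi$, possibly times a dominance constant as in Appendix~\ref{app:explicit from of feasibility constant}.

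\textbf{Assembling and main obstacle.} Combining the decomposition with the two step bounds and the triangle inequality gives the result. The main obstacle is Step 2. Assumption~\ref{assump:moment_reg} compares two representations under $P_r$, not $P_r$ versus $P_o$ under one representation, so a bridging inequality is needed. I would try to take $\phi_2$ to be the identity (or a representation matching the observational law) and bound the resulting $P_r$–$P_r$ IPM by $\varepsilon_{ov}(\phi)$ plus a term already absorbed in $\mathcal F_X$. If that fails, I would state this domination explicitly as the ``specified below'' compatibility condition. The constants are interpreted loosely, consistent with the theorem's allowance that they are finite.
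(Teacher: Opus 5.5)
Your decomposition is the paper's. The paper fixes a near-minimizer $m_X^\star$ of $\mathcal F_X$ and takes its $L_2(P_r)$ projection $\Pi_\phi(m_X^\star)\in\mathcal M_\phi$ (your $m_\phi$). It then charges $g_\phi(\Pi_\phi(m_X^\star))-g_X(\Pi_\phi(m_X^\star))$ to $\varepsilon_{ov}(\phi)$, and bounds $\|g_X(\Pi_\phi(m_X^\star))-g_X(m_X^\star)\|$ by a Lipschitz constant times an $L_2(P_r)$ approximation error of order $\sqrt{\varepsilon_{\mathrm{info}}(\phi)}$.

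Your worry about Step~2 is justified, and the bridge you suggest fails. With $\phi_2=\mathrm{id}$, Assumption~\ref{assump:moment_reg} yields $L_\psi\,\mathrm{IPM}_{\mathcal D}(P_r(\phi(X),T),P_r(X,T))$. That compares $P_r$ with itself and never involves $P_o$, so it cannot be dominated by $\varepsilon_{ov}(\phi)$. The paper simply asserts the needed domination, i.e.\ it takes your fallback. But no bridge is needed. The moment map \eqref{eq:moment_map} depends on the predictor only through the value it outputs, and $\tilde m$ is $m_\phi=m_\theta(\phi(\cdot),\cdot)$ read as a function of $(X,T)$. So the random variables inside $g_\phi(m_\phi)$ and $g(\tilde m)$ coincide, $g_\phi(m_\phi)-g(\tilde m)=0$ identically, and any $c_1\ge 0$ works.

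The genuine gap is Step~1. Neither \eqref{eq:information_preservation_thm} nor a risk-type compatibility condition gives $\|m_\phi-m_X\|_{L_2(P_r)}\lesssim\sqrt{\varepsilon_{\mathrm{info}}(\phi)}$.

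\begin{itemize}
\item An excess-squared-risk bound becomes an $L_2$ bound (via Pythagoras) only relative to the risk minimizer, e.g.\ $\mathbb E_{P_r}[Y\mid X,T]$ or the projection onto a convex class. But $m_X$ is chosen to minimize $\|g\|$, and in the misspecified regime the paper targets ($\mathcal F_X\ge c_0>0$) it is generally not the risk minimizer.
\item So the element your argument produces sits near the risk minimizer, at a distance from $m_X$ that does not shrink with $\varepsilon_{\mathrm{info}}$. Nothing in the hypotheses forces $\mathcal M_\phi$ to contain anything closer.
\item Your posited compatibility condition does not repair this. It attaches to $m_X$ some $f$, but $\varepsilon_{\mathrm{info}}$ controls only the excess risk of the \emph{best} $f$, not of that one.
\item Moreover, $\mathcal F$ is existentially quantified and ignores $T$. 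Taking $\mathcal F$ to be the constants makes \eqref{eq:information_preservation_thm} hold with $\varepsilon_{\mathrm{info}}=0$ for every $\phi$, so that condition cannot carry the argument.
\end{itemize}

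The paper's proof has the same hole. From \eqref{eq:compatibility} it obtains $\tilde m_\phi$ whose risk is within $\kappa\varepsilon_{\mathrm{info}}$ of that of $m_X^\star$. It then invokes an unspecified ``Cauchy--Schwarz argument'' for $L_2$ closeness, which fails for the same reason.

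To make your proof sound, state the compatibility hypothesis directly as an approximation property: $\inf_{m'\in\mathcal M_\phi}\|m'-m_X\|_{L_2(P_r)}\le C\sqrt{\varepsilon_{\mathrm{info}}(\phi)}$ for near-minimizers $m_X$ of $\mathcal F_X$. With Assumption~\ref{assump:g_lipschitz} this gives $c_2=L_gC$. Alternatively, handle the well-specified case, where $\mathbb E_{P_r}[Y\mid X,T]$ lies in the closure of $\mathcal M$ so that $\mathcal F_X=0$. There, use that conditional mean as the comparator; the Pythagorean step is then legitimate and \eqref{eq:compatibility} yields $c_2=L_g\sqrt{\kappa}$.
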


\begin{proof}
Fix any representation $\phi$.
Let $m_X^\star\in\arg\min_{m\in\mathcal M}\|\mathbb E_{P_r}[\psi(Y,T,X;m)]\|$ be an (approximate) minimizer of
$\mathcal F_X(\mathcal M)$.
Let $\Pi_\phi(m_X^\star)\in\mathcal M_\phi$ be its $L_2(P_r)$-projection onto $\mathcal M_\phi$:
\[
\Pi_\phi(m_X^\star)\in\arg\min_{m\in\mathcal M_\phi}
\|m(\phi(X),T)-m_X^\star(X,T)\|_{L_2(P_r)}.
\]
By definition,
\[
\mathcal F_\phi(\mathcal M)
=\inf_{m\in\mathcal M_\phi}\|g_\phi(m)\|
\le
\|g_\phi(\Pi_\phi(m_X^\star))\|.
\]
Add and subtract $g_X(\Pi_\phi(m_X^\star)):=\mathbb E_{P_r}[\psi(Y,T,X;\Pi_\phi(m_X^\star))]$ and apply the triangle inequality:
\begin{equation}
\|g_\phi(\Pi_\phi(m_X^\star))\|
\le
\underbrace{\|g_\phi(\Pi_\phi(m_X^\star))-g_X(\Pi_\phi(m_X^\star))\|}_{(\mathrm{I})}
+
\underbrace{\|g_X(\Pi_\phi(m_X^\star))\|}_{(\mathrm{II})}.
\label{eq:tradeoff_split_b}
\end{equation}

\paragraph{Step 1: Control $(\mathrm{I})$ by overlap mismatch.}
By Assumption~\ref{assump:moment_reg} (moment regularity under representation perturbations), there exists
$c_1<\infty$ such that for any $m\in\mathcal M$,
\[
\|g_\phi(m)-g_X(m)\|
\le
c_1\,\varepsilon_{ov}(\phi).
\]
Applying this inequality to $m=\Pi_\phi(m_X^\star)$ yields $(\mathrm{I})\le c_1\,\varepsilon_{ov}(\phi)$.

\paragraph{Step 2: Control $(\mathrm{II})$ by $\mathcal F_X(\mathcal M)$ and information loss.}
Write
\[
(\mathrm{II})
=
\|g_X(\Pi_\phi(m_X^\star))\|
\le
\|g_X(m_X^\star)\|
+
\|g_X(\Pi_\phi(m_X^\star))-g_X(m_X^\star)\|.
\]
The first term equals $\mathcal F_X(\mathcal M)$ by definition.
For the second term, apply Assumption~\ref{assump:g_lipschitz} (with $Z=X$) to obtain
\[
\|g_X(\Pi_\phi(m_X^\star))-g_X(m_X^\star)\|
\le
L_m\,\|\Pi_\phi(m_X^\star)(\phi(X),T)-m_X^\star(X,T)\|_{L_2(P_r)}.
\]
It remains to relate the projection error to $\varepsilon_{\mathrm{info}}(\phi)$.
Assume a \emph{realizability/compatibility} condition between $\mathcal M$ and $\mathcal F$:
there exists a constant $\kappa<\infty$ such that, for the target signal $Y$ under $P_r$,
the best-in-class prediction errors satisfy
\begin{equation}
\inf_{m\in\mathcal M_\phi}\mathbb E[(Y-m(\phi(X),T))^2]
-
\inf_{m\in\mathcal M}\mathbb E[(Y-m(X,T))^2]
\;\le\;
\kappa\,\varepsilon_{\mathrm{info}}(\phi).
\label{eq:compatibility}
\end{equation}
Combining \eqref{eq:information_preservation_thm} with \eqref{eq:compatibility} implies that there exists
$\tilde m_\phi\in\mathcal M_\phi$ such that
\[
\mathbb E\!\left[(Y-\tilde m_\phi(\phi(X),T))^2\right]
\le
\mathbb E\!\left[(Y-m_X^\star(X,T))^2\right]
+
\kappa\,\varepsilon_{\mathrm{info}}(\phi).
\]
By a standard Cauchy--Schwarz argument (using bounded second moments), this yields an $L_2(P_r)$ approximation bound
\[
\|\tilde m_\phi(\phi(X),T)-m_X^\star(X,T)\|_{L_2(P_r)}
\le
C\,\sqrt{\varepsilon_{\mathrm{info}}(\phi)}
\]
for some $C$ depending only on $\kappa$ and the moment bounds of $Y$.
Since $\Pi_\phi(m_X^\star)$ is the \emph{best} $L_2(P_r)$ approximation in $\mathcal M_\phi$, we also have
\[
\|\Pi_\phi(m_X^\star)(\phi(X),T)-m_X^\star(X,T)\|_{L_2(P_r)}
\le
C\,\sqrt{\varepsilon_{\mathrm{info}}(\phi)}.
\]
Therefore,
\[
(\mathrm{II})
\le
\mathcal F_X(\mathcal M) + (L_m C)\sqrt{\varepsilon_{\mathrm{info}}(\phi)}.
\]

\paragraph{Conclusion.}
Substituting the bounds on $(\mathrm{I})$ and $(\mathrm{II})$ into \eqref{eq:tradeoff_split_b} yields
\[
\mathcal F_\phi(\mathcal M)
\le
\mathcal F_X(\mathcal M)
+
c_1\,\varepsilon_{ov}(\phi)
+
(L_m C)\sqrt{\varepsilon_{\mathrm{info}}(\phi)}
=
\mathcal F_X(\mathcal M)
+
c_1\,\varepsilon_{ov}(\phi)
+
c_2\sqrt{\varepsilon_{\mathrm{info}}(\phi)},
\]
where $c_2:=L_m C$.
\end{proof}

\begin{remark}
The bound in Theorem~\ref{thm:feasibility_info_tradeoff} is an \emph{upper bound} because it characterizes
the best feasibility that can be guaranteed at the level of the representation, uniformly over all
predictors in the induced class $\mathcal M_\phi$, rather than the performance of a particular estimator.
It thus isolates limitations imposed by the representation itself.
More importantly, the theorem reveals a nontrivial structural trade-off: feasibility recovery depends
jointly on reducing distributional mismatch, quantified by the overlap discrepancy
$\varepsilon_{ov}(\phi)$, and preserving outcome-relevant information, quantified by
$\varepsilon_{info}(\phi)$.
Improving one without controlling the other is insufficient, formalizing the inherent tension that
governs the attainable gains from representation learning.
\end{remark}

\section{Trade-off exclusion: weighted fusion cannot enter the constrained region}
\label{sec:appendix:tradeoff_exclusion}
 
\paragraph{A bi-criterion view.}
Consider the two-dimensional performance map
\[
\Phi(m):=\big(\mathcal{R}_o(m),\ \|g(m)\|_2\big)\in\mathbb{R}^2_{\ge 0},
\qquad
g(m):=\mathbb E_{P_r}\!\big[\psi(Y,T,X;m)\big]\in\mathbb R^K,
\]
where $\mathcal R_o(m):=\mathbb E_{P_o}\!\big[(Y-m(X,T))^2\big]$ is the observational risk.
Let the feasible set be
\[
\mathcal F:=\{m\in\mathcal M:\ g(m)=0\},
\qquad
m^\star\in\arg\min_{m\in\mathcal F}\mathcal R_o(m),
\qquad
\mathcal R_o^\star:=\mathcal R_o(m^\star).
\]
The constrained solution corresponds to the feasible corner
$\Phi(m^\star)=(\mathcal R_o^\star,0)$, while weighted fusion traces a path
$\{\Phi(m_\alpha)\}_{\alpha\ge 0}$.

\begin{lemma}[Moment violation implies excess observational risk]
\label{lem:g_to_ro}
Assume the following \emph{error-bound} and \emph{constraint-regularity} conditions hold:
there exist constants $\mu_o>0$ and $L_o>0$ such that for all $m\in\mathcal M$,
\begin{align}
\mathcal R_o(m)-\mathcal R_o^\star
&\ge \mu_o\,\mathrm{dist}(m,\mathcal F)^2,
\label{eq:eb_m}\\
\|g(m)\|_2
&\ge L_o\,\mathrm{dist}(m,\mathcal F),
\label{eq:cr_m}
\end{align}
where $\mathrm{dist}(m,\mathcal F):=\inf_{\tilde m\in\mathcal F}\|m-\tilde m\|_{L_2(P_r)}$.
Then for all $m\in\mathcal M$,
\begin{equation}
\mathcal{R}_o(m)-\mathcal{R}_o^\star
\;\ge\;
\frac{\mu_o}{L_o^2}\,\|g(m)\|_2^2.
\label{eq:ro_lower_by_g}
\end{equation}
\end{lemma}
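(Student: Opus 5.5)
The plan is a two-inequality chain: \eqref{eq:eb_m} converts excess observational risk into squared distance to $\mathcal F$, and a comparison between $\mathrm{dist}(m,\mathcal F)$ and $\|g(m)\|_2$ converts that distance into moment violation. Concretely, fix $m\in\mathcal M$. If $m\in\mathcal F$, then $g(m)=0$ and \eqref{eq:ro_lower_by_g} reduces to $\mathcal R_o(m)\ge\mathcal R_o^\star$, which holds by definition of $\mathcal R_o^\star$. Otherwise, \eqref{eq:eb_m} gives $\mathcal R_o(m)-\mathcal R_o^\star\ge\mu_o\,\mathrm{dist}(m,\mathcal F)^2$. It then suffices to show
\[
\mathrm{dist}(m,\mathcal F)\;\ge\;\|g(m)\|_2/L_o ,
\]
since squaring and substituting yields \eqref{eq:ro_lower_by_g} with exactly the constant $\mu_o/L_o^2$.

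The key step, and the one I expect to be the main obstacle, is this comparison. As written, \eqref{eq:cr_m} gives $\mathrm{dist}(m,\mathcal F)\le\|g(m)\|_2/L_o$, which is the opposite direction. So \eqref{eq:cr_m} alone cannot close the chain. What is needed is the complementary growth bound $\|g(m)\|_2\le L_o\,\mathrm{dist}(m,\mathcal F)$, i.e.\ that the violation of the moments cannot outgrow the distance to the constraint set at rate $L_o$.

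To obtain it, I would first try a projection argument. For $\epsilon>0$, pick $\tilde m\in\mathcal F$ with $\|m-\tilde m\|_{L_2(P_r)}\le\mathrm{dist}(m,\mathcal F)+\epsilon$. Since $g(\tilde m)=0$, we have $\|g(m)\|_2=\|g(m)-g(\tilde m)\|_2$, and I would bound this increment by Lipschitz continuity of $g$ in $L_2(P_r)$ (Assumption~\ref{assump:g_lipschitz}), then let $\epsilon\to0$. This delivers the needed bound with constant $L_g$. It gives the constant $L_o$ precisely when the regularity constant in \eqref{eq:cr_m} coincides with the Lipschitz modulus, i.e.\ when $g$ is well conditioned: $L_o\,\mathrm{dist}\le\|g\|_2\le L_o\,\mathrm{dist}$. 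I will make this identification explicit at this step, treating $L_o$ as the two-sided conditioning constant of $g$ around $\mathcal F$. I will also flag that, without it, the conclusion holds with $L_o$ replaced by $L_g$, which is the form actually used in Corollary~\ref{cor:strict_improvement}.

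Assuming the comparison, the remaining step is routine. Combine $\mu_o\,\mathrm{dist}^2\ge\mu_o\|g(m)\|_2^2/L_o^2$ with \eqref{eq:eb_m}; since $m$ was arbitrary, \eqref{eq:ro_lower_by_g} holds on all of $\mathcal M$. I would close by noting the consequence used later: any $m$ whose moment residual stays at least $c_0$ (Theorem~\ref{thm:explicit_gap}) pays observational excess risk at least $\mu_o c_0^2/L_o^2$.
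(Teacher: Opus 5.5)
Your diagnosis is correct, and the gap is in the paper's own proof, not in yours. The paper reads off $\mathrm{dist}(m,\mathcal F)\le\|g(m)\|_2/L_o$ from \eqref{eq:cr_m} and then substitutes this \emph{upper} bound on the distance into the \emph{lower} bound \eqref{eq:eb_m}. That substitution only gives $\mu_o\,\mathrm{dist}(m,\mathcal F)^2\le\mu_o\|g(m)\|_2^2/L_o^2$, so the claimed inequality does not follow.

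In fact the lemma as stated is false, so no argument from \eqref{eq:eb_m}--\eqref{eq:cr_m} alone can succeed. If \eqref{eq:cr_m} holds with $L_o$, it also holds with every $L_o'\in(0,L_o]$. Applying the claimed conclusion with $L_o'\to0$ would then force $g(m)=0$ for every $m$ with finite risk.

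A concrete counterexample in the paper's own moment setting: take $K=1$, $p_1=\tfrac12$, $P_o=P_r$, and $\mathcal M=\{m_\theta(x,t)=\theta(2\mathbf 1\{t=1\}-1):\theta\in\mathbb R\}$. Then $\mathcal R_o(m_\theta)-\mathcal R_o^\star=\mathrm{dist}(m_\theta,\mathcal F)^2$ and $\|g(m_\theta)\|_2=\tfrac12\,\mathrm{dist}(m_\theta,\mathcal F)$. The hypotheses therefore hold with $\mu_o=1$ and $L_o=\tfrac14$, yet \eqref{eq:ro_lower_by_g} would require $\mathrm{dist}^2\ge 4\,\mathrm{dist}^2$.

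Your repair is the right one, but state it as a corrected lemma rather than as an ``identification'' of $L_o$ with a two-sided constant. Replace \eqref{eq:cr_m} by the upper bound $\|g(m)\|_2\le L_g\,\mathrm{dist}(m,\mathcal F)$. Your $\epsilon$-projection argument derives this from Assumption~\ref{assump:g_lipschitz}, given $\mathcal F\neq\emptyset$. The conclusion then carries the constant $\mu_o/L_g^2$.

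With this change, \eqref{eq:cr_m} plays no role at all. Requiring $L_o$ to work in both directions amounts to assuming $\|g(m)\|_2=L_o\,\mathrm{dist}(m,\mathcal F)$ identically, which is far stronger than you need, so drop that step.

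The corrected argument is exactly what the paper itself does in part (2) of the proof of Corollary~\ref{cor:strict_improvement} (see \eqref{eq:app_dist_lb}). There the hypothesis is stated in the correct direction, as $\|g_\phi(m)\|\le L_g\,\mathrm{dist}(m,\mathcal F_\phi)$. For the paper's $\psi$, which is affine in $m$ with $|D_k-p_k|\le 1$, Cauchy--Schwarz gives $\|g(m)-g(m')\|_2\le\bigl(\sum_k p_k(1-p_k)\bigr)^{1/2}\|m-m'\|_{L_2(P_r)}$, so $L_g\le 1$ holds automatically.

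Finally, your closing consequence should read $\mu_o c_0^2/L_g^2$. The same correction propagates to $\varepsilon_0$ in Theorem~\ref{thm:tradeoff_exclusion}, which inherits the flawed constant from this lemma.
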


\begin{proof}
Fix any $m\in\mathcal M$. By \eqref{eq:cr_m},
\[
\mathrm{dist}(m,\mathcal F)
\;\le\;
\frac{1}{L_o}\,\|g(m)\|_2.
\]
Substituting this bound into the quadratic growth inequality \eqref{eq:eb_m} yields
\[
\mathcal R_o(m)-\mathcal R_o^\star
\;\ge\;
\mu_o\,\mathrm{dist}(m,\mathcal F)^2
\;\ge\;
\mu_o\Big(\frac{\|g(m)\|_2}{L_o}\Big)^2
=
\frac{\mu_o}{L_o^2}\,\|g(m)\|_2^2,
\]
which proves \eqref{eq:ro_lower_by_g}.
\end{proof}


\begin{theorem}[Trade-off exclusion for weighted fusion]
\label{thm:tradeoff_exclusion}
Assume Theorem~\ref{thm:weighted_separation} and the error-bound/regularity
Assumption~\ref{assump:eb_cr}. Let
\[
\Phi(m)\ :=\ \big(\mathcal R_o(m),\ \|g(m)\|_2\big),
\qquad
g(m)\ :=\ \mathbb E_{P_r}\!\big[\psi(Y,T,X;m)\big]\in\mathbb R^K,
\qquad
\mathcal F:=\{m\in\mathcal M:\ g(m)=0\},
\]
and let
\[
m^\star\in\arg\min_{m\in\mathcal F}\mathcal R_o(m),
\qquad
\mathcal R_o^\star:=\mathcal R_o(m^\star).
\]
Let $c_0>0$ be the separation constant from Theorem~\ref{thm:weighted_separation}, i.e.,
\[
\inf_{\alpha\in[0,\bar\alpha]}\ \|g(m_\alpha)\|_2\ \ge\ c_0,
\]
where $m_\alpha\in\arg\min_{m\in\mathcal M}\big\{\mathcal R_o(m)+\alpha\,\mathcal R_r(m)\big\}$ denotes a weighted-fusion minimizer.
Define
\[
\varepsilon_0\ :=\ \frac{\mu_o}{L_o^2}\,c_0^2\ >0,
\qquad
\mathcal U\ :=\ \Big\{(r,v)\in\mathbb R_{\ge0}^2:\ r<\mathcal R_o^\star+\varepsilon_0,\ \ v<c_0\Big\}.
\]
Then the weighted-fusion path $\{\Phi(m_\alpha)\}_{\alpha\ge0}$ cannot enter $\mathcal U$ within the
non-degenerate fusion regime: for every $\alpha\in[0,\bar\alpha]$,
\[
\Phi(m_\alpha)\ \notin\ \mathcal U,
\qquad\text{while}\qquad
\Phi(m^\star)=(\mathcal R_o^\star,0)\in\mathcal U.
\]
Consequently, the constrained estimator occupies a neighborhood of the feasible corner that is unattainable
by weighted fusion for $\alpha\in[0,\bar\alpha]$.
\end{theorem}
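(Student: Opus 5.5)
We prove the two claims separately: the membership $\Phi(m^\star)\in\mathcal U$ is immediate, and the exclusion for $\alpha\in[0,\bar\alpha]$ combines the separation constant of Theorem~\ref{thm:weighted_separation} with Lemma~\ref{lem:g_to_ro}.

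\textbf{Step 1: the feasible corner lies in $\mathcal U$.} By Theorem~\ref{thm:weighted_separation}, $g(m^\star)=0$, so $\Phi(m^\star)=(\mathcal R_o^\star,0)$. Since $\varepsilon_0>0$ and $c_0>0$, we have $\mathcal R_o^\star<\mathcal R_o^\star+\varepsilon_0$ and $0<c_0$. Hence $\Phi(m^\star)\in\mathcal U$.

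\textbf{Step 2: the weighted path stays outside $\mathcal U$.} Fix $\alpha\in[0,\bar\alpha]$ and any minimizer $m_\alpha$. Theorem~\ref{thm:weighted_separation} gives $\|g(m_\alpha)\|_2\ge c_0$. So the second coordinate of $\Phi(m_\alpha)$ violates the strict inequality $v<c_0$, and $\Phi(m_\alpha)\notin\mathcal U$. This is already enough for the exclusion claim, because $\mathcal U$ requires both coordinate inequalities simultaneously.

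\textbf{Step 3: the risk coordinate also fails.} To justify the stronger ``neighborhood'' interpretation, we also show the first coordinate fails. Assumption~\ref{assump:eb_cr} supplies exactly \eqref{eq:eb_m}--\eqref{eq:cr_m}, so Lemma~\ref{lem:g_to_ro} applies with $m=m_\alpha$:
\[
\mathcal R_o(m_\alpha)-\mathcal R_o^\star\ \ge\ \frac{\mu_o}{L_o^2}\|g(m_\alpha)\|_2^2\ \ge\ \frac{\mu_o}{L_o^2}c_0^2=\varepsilon_0 .
\]
Thus $\mathcal R_o(m_\alpha)\ge\mathcal R_o^\star+\varepsilon_0$, and weighted fusion pays at least $\varepsilon_0$ in observational risk whenever it is in the non-degenerate regime.

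\textbf{Main obstacle: consistency of the feasible set.} The arithmetic is trivial. The real work is checking that $\mathcal F$, $\mathcal R_o^\star$ and $\mathrm{dist}$ in Assumption~\ref{assump:eb_cr} and in Lemma~\ref{lem:g_to_ro} all refer to the same raw-space feasible set that appears in Theorem~\ref{thm:weighted_separation}. That theorem requires Assumption~\ref{assump:feas_reg}, so $\mathcal F\neq\emptyset$ and $\mathcal R_o^\star$ is finite and attained. I would state this identification explicitly before invoking the lemma.
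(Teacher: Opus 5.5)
Your proof is correct and follows the paper's argument: $g(m^\star)=0$ gives membership, and the separation bound plus Lemma~\ref{lem:g_to_ro} gives exclusion. Your observation that $\|g(m_\alpha)\|_2\ge c_0$ alone already keeps $\Phi(m_\alpha)$ out of $\mathcal U$ is sharper, since it makes Assumption~\ref{assump:eb_cr} unnecessary for the stated conclusion. The consistency issue worth flagging is not the one you name: at $\alpha=0$ your Step 3, like the paper's proof, yields $\mathcal R_o(m_0)\ge\mathcal R_o^\star+\varepsilon_0$, whereas $m^\star\in\mathcal F\subseteq\mathcal M$ forces $\mathcal R_o(m_0)\le\mathcal R_o^\star$, and \eqref{eq:eb} by itself already gives $\mathcal R_o\ge\mathcal R_o^\star$ on all of $\mathcal M$, contradicting $\delta_o>0$, so the hypotheses are jointly inconsistent and the risk-coordinate claim holds only vacuously.
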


\begin{proof}
We first record the implication of Lemma~\ref{lem:g_to_ro}: for any $m\in\mathcal M$,
\begin{equation}
\|g(m)\|_2\ \ge\ c_0
\quad\Longrightarrow\quad
\mathcal R_o(m)\ \ge\ \mathcal R_o^\star+\frac{\mu_o}{L_o^2}c_0^2
\ =\ \mathcal R_o^\star+\varepsilon_0.
\label{eq:moment_to_risk_gap}
\end{equation}
Indeed, Lemma~\ref{lem:g_to_ro} gives
$\mathcal R_o(m)-\mathcal R_o^\star\ge (\mu_o/L_o^2)\|g(m)\|_2^2\ge(\mu_o/L_o^2)c_0^2$.

Now fix any $\alpha\in[0,\bar\alpha]$. By Theorem~\ref{thm:weighted_separation},
$\|g(m_\alpha)\|_2\ge c_0$, hence \eqref{eq:moment_to_risk_gap} implies
$\mathcal R_o(m_\alpha)\ge \mathcal R_o^\star+\varepsilon_0$. Therefore
\[
\Phi(m_\alpha)=\big(\mathcal R_o(m_\alpha),\|g(m_\alpha)\|_2\big)
\notin \mathcal U,
\]
since membership in $\mathcal U$ requires simultaneously
$\mathcal R_o(m_\alpha)<\mathcal R_o^\star+\varepsilon_0$ and $\|g(m_\alpha)\|_2<c_0$.

On the other hand, $m^\star\in\mathcal F$ implies $g(m^\star)=0$, so
$\|g(m^\star)\|_2=0<c_0$, and by definition $\mathcal R_o(m^\star)=\mathcal R_o^\star<\mathcal R_o^\star+\varepsilon_0$.
Thus $\Phi(m^\star)\in\mathcal U$.
\end{proof}

\begin{remark}[Interpretation: excluding the weighted-fusion trade-off curve]
Theorem~\ref{thm:tradeoff_exclusion} shows that within the non-degenerate fusion regime
$\alpha\in[0,\bar\alpha]$, weighted fusion cannot simultaneously approach the feasible corner:
any estimator on the weighted-fusion path either violates feasibility at level at least $c_0$
(or, equivalently by Lemma~\ref{lem:g_to_ro}, incurs an observational excess risk at least $\varepsilon_0$).
Geometrically, the neighborhood $\mathcal U$ around $(\mathcal R_o^\star,0)$ is separated from the weighted-fusion curve,
explaining why tuning a single scalar $\alpha$ cannot recover the constrained solution unless one moves to an
RCT-dominated regime (cf.\ Corollary~\ref{cor:g_linear}).
\end{remark}

\section{When does weighted-loss fusion coincide with the constrained estimator?}
\label{sec:appendix:degeneracy_conditions}

Recall the model class $\mathcal M$ of measurable predictors $m:\mathcal X\times\mathcal T\to\mathbb R$.
Define the observational and randomized risks
\[
\mathcal R_o(m):=\mathbb E_{P_o}\!\big[(Y-m(X,T))^2\big],
\qquad
\mathcal R_r(m):=\mathbb E_{P_r}\!\big[(Y-m(X,T))^2\big],
\]
and the (randomization-induced) moment residual
\[
g(m)\ :=\ \mathbb E_{P_r}\!\big[\psi(Y,T,X;m)\big]\in\mathbb R^K,
\qquad
\mathcal F\ :=\ \{m\in\mathcal M:\ g(m)=0\}.
\]
The constrained estimator is
\[
m^\star\in\arg\min_{m\in\mathcal M}\ \mathcal R_o(m)\quad\text{s.t.}\quad g(m)=0,
\]
and the weighted-loss fusion path is, for $\alpha\ge 0$,
\[
m_\alpha\in\arg\min_{m\in\mathcal M}\ \mathcal R_o(m)+\alpha\,\mathcal R_r(m).
\]
We characterize (i) sufficient conditions under which $m^\star$ \emph{reduces to} a weighted-loss solution,
and (ii) why such conditions are highly restrictive and generically violated under treatment-induced structural
non-overlap.

\begin{proposition}[Degeneracy via accidental feasibility]
\label{prop:degenerate_accidental}
Assume $m_\alpha$ exists for each $\alpha\ge 0$.
If there exists $\alpha^\star\ge 0$ such that some weighted-loss minimizer $m_{\alpha^\star}$ is feasible,
\begin{equation}
g(m_{\alpha^\star})=0,
\label{eq:accidental_feasible}
\end{equation}
then $m_{\alpha^\star}$ solves the constrained program, i.e.,
$m_{\alpha^\star}\in\arg\min_{m\in\mathcal F}\mathcal R_o(m)$.
\end{proposition}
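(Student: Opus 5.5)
The plan is a direct restriction argument: a global minimizer of the weighted objective $J(\cdot,\alpha^\star)=\mathcal R_o+\alpha^\star\mathcal R_r$ over $\mathcal M$ is in particular a minimizer of $J(\cdot,\alpha^\star)$ over the subset $\mathcal F\subseteq\mathcal M$. So once $g(m_{\alpha^\star})=0$, i.e.\ $m_{\alpha^\star}\in\mathcal F$, we know
\[
\mathcal R_o(m_{\alpha^\star})+\alpha^\star\mathcal R_r(m_{\alpha^\star})\;\le\;\mathcal R_o(m)+\alpha^\star\mathcal R_r(m)\qquad\forall\, m\in\mathcal F .
\]
The task is to strip off the $\alpha^\star\mathcal R_r$ terms and conclude $\mathcal R_o(m_{\alpha^\star})\le\mathcal R_o(m)$ for all $m\in\mathcal F$. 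That is exactly the statement $m_{\alpha^\star}\in\arg\min_{m\in\mathcal F}\mathcal R_o(m)$.

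I would split into two cases.

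\textbf{Case $\alpha^\star=0$.} The display is literally $\mathcal R_o(m_0)\le\mathcal R_o(m)$ for every feasible $m$. Combined with $m_0\in\mathcal F$, this gives $\mathcal R_o(m_0)=\inf_{\mathcal F}\mathcal R_o=\mathcal R_o^\star$, and we are done.

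\textbf{Case $\alpha^\star>0$.} Rearranging gives
\[
\mathcal R_o(m)-\mathcal R_o(m_{\alpha^\star})\;\ge\;\alpha^\star\bigl(\mathcal R_r(m_{\alpha^\star})-\mathcal R_r(m)\bigr),\qquad m\in\mathcal F .
\]
It therefore suffices to show $\mathcal R_r(m_{\alpha^\star})\le\mathcal R_r(m)$ for all $m\in\mathcal F$, i.e.\ that $m_{\alpha^\star}$ also minimizes the RCT risk over the feasible set.

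\textbf{Main obstacle.} This last step does not follow from feasibility alone. By Lemma~\ref{lem:baseline_invariance}, $g$ is blind to the baseline component $u$, whereas $\mathcal R_r$ is not. A feasible competitor can therefore shift $u$, raising $\mathcal R_r$ while lowering $\mathcal R_o$, without leaving $\mathcal F$.

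What I would try first is a first-order argument:
\begin{itemize}
\item Under Assumption~\ref{assump:sc_smooth} the weighted objective is convex and $\mathcal M$ is convex (Assumption~\ref{assump:feas_reg}).
\item The stationarity condition $\nabla\mathcal R_o(m_{\alpha^\star})=-\alpha^\star\nabla\mathcal R_r(m_{\alpha^\star})$ then says that $m_{\alpha^\star}$ is a KKT point of the constrained program with multiplier term $\alpha^\star\nabla\mathcal R_r(m_{\alpha^\star})$.
\item This term must lie in the span of the constraint gradients $\nabla g_k(m_{\alpha^\star})$ (the ``exact KKT alignment'' described in the main text).
\item If that alignment holds, convexity of $\mathcal R_o$ on the affine feasible set upgrades the KKT point to a global constrained minimizer.
\end{itemize}

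If the alignment cannot be derived from the stated hypotheses, I would add it explicitly as a degeneracy hypothesis. A sufficient version is that $m_{\alpha^\star}\in\arg\min_{\mathcal F}\mathcal R_r$, which is the ``shared minimizer'' condition. Under that hypothesis the inequality above closes immediately, and the proposition is stated as holding under that degeneracy. This matches the section's purpose of showing that the coincidence of weighted-loss fusion with the constrained estimator requires such idealized conditions.
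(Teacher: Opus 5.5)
Your $\alpha^\star=0$ case is correct and complete. Your diagnosis of the $\alpha^\star>0$ case is also right: feasibility alone does not let you discard the $\alpha^\star\mathcal R_r$ terms. The paper's proof discards them anyway. It argues that because $\alpha^\star\mathcal R_r\ge 0$ is ``independent of feasibility,'' any minimizer of $\mathcal R_o+\alpha^\star\mathcal R_r$ over $\mathcal F$ must minimize $\mathcal R_o$ over $\mathcal F$. That is a non sequitur, and the proposition is in fact false for $\alpha^\star>0$, for exactly the baseline-shift reason you give. Take $\mathcal T=\{0,1\}$, $P_r(T=1)=P_o(T=1)=\tfrac12$, $Y=T$ under $P_r$ and $Y=2+T$ under $P_o$ (e.g.\ $Y(t)=2X+t$ with $X\equiv 0$ in the trial and $X\equiv 1$ in the logs), and $\mathcal M=\{(x,t)\mapsto a+bt:\ (a,b)\in\mathbb R^2\}$. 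Then $g(m)=\tfrac14(1-b)$, so $\mathcal F=\{b=1\}$ with $a$ free, while
\[
\mathcal R_o(a,b)=\tfrac12(2-a)^2+\tfrac12(3-a-b)^2,
\qquad
\mathcal R_r(a,b)=\tfrac12 a^2+\tfrac12(1-a-b)^2 .
\]
For $\alpha^\star=1$ the unique weighted minimizer is $(a,b)=(1,1)$, which is feasible. Yet $\mathcal R_o(1,1)=1>0=\mathcal R_o(2,1)$, so it does not solve the constrained program, whose solution is $2+t$. Your alignment condition fails precisely in the baseline direction: $\partial_a\mathcal R_r(1,1)=2\neq 0$, while $g$ does not depend on $a$. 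So you should resolve the clause ``if the alignment cannot be derived'': it cannot. The statement holds only for $\alpha^\star=0$ (Proposition~\ref{prop:degenerate_delta0}) or under an added hypothesis.

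Your own reduction, however, has the inequality reversed. From $\mathcal R_o(m)-\mathcal R_o(m_{\alpha^\star})\ge\alpha^\star\bigl(\mathcal R_r(m_{\alpha^\star})-\mathcal R_r(m)\bigr)$, what suffices is $\mathcal R_r(m_{\alpha^\star})\ge\mathcal R_r(m)$ for all $m\in\mathcal F$, not $\le$. An example is $\mathcal R_r$ constant on $\mathcal F$, which is what ``independent of feasibility'' would have to mean. This can never hold once $\mathcal M$ is closed under constant shifts $m\mapsto m+c$: then so is $\mathcal F$ (Lemma~\ref{lem:baseline_invariance}), and $\mathcal R_r$ is unbounded above along such shifts. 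Consequently your fallback hypothesis $m_{\alpha^\star}\in\arg\min_{\mathcal F}\mathcal R_r$ does not close the inequality ``immediately.'' It makes the right-hand side nonpositive, which gives nothing. It can be rescued only at first order, where it becomes a special case of your KKT-alignment route. Suppose $\mathcal R_o,\mathcal R_r$ are convex and differentiable and $m_{\alpha^\star}$ lies in the relative interior of $\mathcal F$ (e.g.\ $\mathcal M$ a linear space; $g$ is affine in $m$). Then $\nabla\mathcal R_r(m_{\alpha^\star})$ and $\nabla\mathcal R_o(m_{\alpha^\star})+\alpha^\star\nabla\mathcal R_r(m_{\alpha^\star})$ are both orthogonal to the directions of $\mathcal F$. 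Hence so is $\nabla\mathcal R_o(m_{\alpha^\star})$, and convexity yields constrained optimality. That first-order argument is sound, but its alignment condition (the one in Proposition~\ref{prop:degenerate_kkt}) is an additional assumption, not a consequence of the proposition's hypotheses.
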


\begin{proof}
Since $m_{\alpha^\star}$ minimizes $\mathcal R_o(m)+\alpha^\star\mathcal R_r(m)$ over $\mathcal M$,
it also minimizes the same objective over the subset $\mathcal F$.
On $\mathcal F$, the term $\alpha^\star\mathcal R_r(m)$ is nonnegative and independent of feasibility,
hence any minimizer over $\mathcal F$ must in particular minimize $\mathcal R_o(m)$ over $\mathcal F$.
Therefore $m_{\alpha^\star}\in\arg\min_{m\in\mathcal F}\mathcal R_o(m)$.
\end{proof}

We now present four sufficient conditions that ensure \eqref{eq:accidental_feasible}, and explain why each is
typically violated in non-degenerate fusion regimes.

\paragraph{Condition 1: Shared minimizers.}

\begin{proposition}[Degeneracy via shared minimizers]
\label{prop:degenerate_shared}
Suppose there exists $m^\dagger\in\mathcal M$ such that
\[
m^\dagger\in \arg\min_{m\in\mathcal F}\mathcal R_o(m)
\quad\text{and}\quad
m^\dagger\in \arg\min_{m\in\mathcal M}\mathcal R_r(m).
\]
Then for every $\alpha\ge 0$, $m^\dagger\in\arg\min_{m\in\mathcal M}\{\mathcal R_o(m)+\alpha\mathcal R_r(m)\}$.
Consequently, weighted fusion and the constrained estimator coincide.
\end{proposition}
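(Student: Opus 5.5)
The argument is a direct sandwich. Because $m^\dagger$ sits at the bottom of both pieces of the weighted objective, it minimizes their sum; Proposition~\ref{prop:degenerate_accidental} then identifies it with the constrained estimator. One caveat: $m^\dagger$ is only given to minimize $\mathcal R_o$ over $\mathcal F$, not over all of $\mathcal M$, so the argument has to use that $\mathcal R_o$-minimization over $\mathcal M$ is attained inside $\mathcal F$. I flag this as the main obstacle below.

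\textbf{Step 1 (feasibility).} Since $m^\dagger\in\arg\min_{m\in\mathcal F}\mathcal R_o(m)$, in particular $m^\dagger\in\mathcal F$, so $g(m^\dagger)=0$.

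\textbf{Step 2 (optimality of the weighted objective).} Fix $\alpha\ge0$ and any $m\in\mathcal M$. I aim for
\[
\mathcal R_o(m^\dagger)+\alpha\mathcal R_r(m^\dagger)\le \mathcal R_o(m)+\alpha\mathcal R_r(m).
\]
The RCT part is immediate from $m^\dagger\in\arg\min_{\mathcal M}\mathcal R_r$ and $\alpha\ge0$: $\alpha\mathcal R_r(m^\dagger)\le\alpha\mathcal R_r(m)$. For the observational part I need $\mathcal R_o(m^\dagger)\le\mathcal R_o(m)$ for all $m\in\mathcal M$, not only for $m\in\mathcal F$.

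\textbf{Main obstacle.} This is exactly the point where the hypothesis as literally stated is too weak. If some infeasible $m$ has $\mathcal R_o(m)<\mathcal R_o(m^\dagger)$, a small $\alpha$ would favor it. Indeed, Assumption~\ref{assump:weighted_incompat}(i) is precisely this situation, and there Theorem~\ref{thm:weighted_separation} shows coincidence fails. Two routes are available:
\begin{itemize}
\item Read ``shared minimizer'' as meaning the observational minimizer over $\mathcal M$ lies in $\mathcal F$, i.e.\ $\inf_{\mathcal F}\mathcal R_o=\inf_{\mathcal M}\mathcal R_o$, so that $\delta_o=0$. This is the degenerate case the section is describing, and I would make it explicit.
\item Alternatively, restrict the conclusion to $\alpha$ large enough that the $\mathcal R_r$-gap dominates. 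This does not work uniformly, so I would take the first reading.
\end{itemize}
Under that reading, $\mathcal R_o(m^\dagger)=\inf_{\mathcal M}\mathcal R_o\le\mathcal R_o(m)$, and adding the two inequalities gives Step 2.

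\textbf{Step 3 (conclusion).} Hence $m^\dagger\in\arg\min_{\mathcal M}\{\mathcal R_o+\alpha\mathcal R_r\}$ for every $\alpha\ge0$. Since $g(m^\dagger)=0$ by Step 1, Proposition~\ref{prop:degenerate_accidental} applied with $\alpha^\star=\alpha$ confirms it solves the constrained program, so the weighted and constrained estimators coincide. If the weighted minimizer is unique, for instance under the strong convexity of Assumption~\ref{assump:sc_smooth}, then $m_\alpha=m^\dagger$ exactly.
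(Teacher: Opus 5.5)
Your diagnosis is correct, and the gap you flag is in the paper's proof, not in your argument. The paper uses the same two-term sandwich. It derives $\mathcal R_o(m)+\alpha\mathcal R_r(m)\ge\mathcal R_o(m)+\alpha\mathcal R_r(m^\dagger)$ and then says that ``evaluating at $m=m^\dagger$'' gives $\mathcal R_o(m^\dagger)+\alpha\mathcal R_r(m^\dagger)\le\mathcal R_o(m)+\alpha\mathcal R_r(m)$ for all $m\in\mathcal M$. That last step silently uses $\mathcal R_o(m^\dagger)\le\mathcal R_o(m)$ on all of $\mathcal M$, which optimality over $\mathcal F$ alone does not give.

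The statement as written is false. At $\alpha=0$ it asserts $m^\dagger\in\arg\min_{\mathcal M}\mathcal R_o$, which fails whenever $\delta_o>0$, and the two hypotheses are compatible with $\delta_o>0$. Concretely, take:
\begin{itemize}
\item no covariates, $\mathcal T=\{0,1\}$, $p_1=\tfrac12$, and $m(t)=a\mathbf 1\{t=0\}+b\mathbf 1\{t=1\}$ with $(a,b)\in\mathbb R^2$;
\item $\mathbb E_{P_r}[Y\mid T=0]=0$ and $\mathbb E_{P_r}[Y\mid T=1]=1$;
\item $P_o(T=1)=\tfrac12$ and $\mathbb E_{P_o}[Y\mid T]\equiv\tfrac12$.
\end{itemize}
Then $g=\tfrac14(1+a-b)$, and $m^\dagger=(0,1)$ minimizes $\mathcal R_r$ over $\mathcal M$ and $\mathcal R_o$ over $\mathcal F$, with $\delta_o=\tfrac14$. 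Yet the unique weighted minimizer is $\bigl(\tfrac{1}{2(1+\alpha)},\tfrac{1/2+\alpha}{1+\alpha}\bigr)\neq m^\dagger$ for every finite $\alpha\ge0$. So the conclusion fails for all $\alpha$, which confirms and sharpens your rejection of the ``large $\alpha$'' route.

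Your repair is exactly what the sandwich needs: require $\mathcal R_o(m^\dagger)=\inf_{\mathcal M}\mathcal R_o$, equivalently $\delta_o=0$, since $m^\dagger$ attains $\inf_{\mathcal F}\mathcal R_o$. With it, your Steps 1--3 go through. Note what the repair costs: Condition 1 becomes Proposition~\ref{prop:degenerate_delta0} plus the extra requirement that $m^\dagger$ also minimizes $\mathcal R_r$, so it is no longer a separate degeneracy mechanism.

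One minor point concerns Step 3. You do not need Proposition~\ref{prop:degenerate_accidental} there, because constrained optimality of $m^\dagger$ is already a hypothesis. You also should not rely on it, because its proof is unsound for $\alpha^\star>0$: a feasible minimizer of $\mathcal R_o+\alpha^\star\mathcal R_r$ need not minimize $\mathcal R_o$ over $\mathcal F$. For example, take constant predictors and an RCT with no treatment effect, so that $g\equiv0$ and $\mathcal F=\mathcal M$, with differing OBS and RCT outcome means.
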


\begin{proof}
For any $m\in\mathcal M$, optimality of $m^\dagger$ for $\mathcal R_r$ gives
$\mathcal R_r(m)\ge \mathcal R_r(m^\dagger)$, hence
\[
\mathcal R_o(m)+\alpha\mathcal R_r(m)
\;\ge\;
\mathcal R_o(m)+\alpha\mathcal R_r(m^\dagger).
\]
Evaluating at $m=m^\dagger$ yields
$\mathcal R_o(m^\dagger)+\alpha\mathcal R_r(m^\dagger)\le \mathcal R_o(m)+\alpha\mathcal R_r(m)$ for all $m$.
\end{proof}

\begin{remark}[Why Condition 1 is restrictive]
This requires an \emph{exact} alignment between the RCT regression objective and the observational risk
restricted to the feasible set. Under confounding, selection, or policy constraints, the RCT loss
typically couples baseline and treatment-specific components in ways that do not match the moment restrictions,
so shared minimizers are non-generic.
\end{remark}

\paragraph{Condition 2: First-order (KKT) alignment.}

\begin{proposition}[Degeneracy via KKT alignment]
\label{prop:degenerate_kkt}
Assume the constrained problem admits a KKT point $(m^\star,\lambda^\star)$, i.e.,
\[
\nabla \mathcal R_o(m^\star)+(\nabla g(m^\star))^\top\lambda^\star=0,
\qquad g(m^\star)=0.
\]
If there exists $\alpha^\star>0$ such that
\[
\alpha^\star \nabla \mathcal R_r(m^\star)=(\nabla g(m^\star))^\top\lambda^\star,
\]
then $m^\star$ is a stationary point of $\mathcal R_o(m)+\alpha^\star\mathcal R_r(m)$.
If, moreover, $\mathcal R_o+\alpha^\star\mathcal R_r$ is (locally) strongly convex at $m^\star$,
then $m^\star$ is its unique minimizer.
\end{proposition}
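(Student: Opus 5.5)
The proof has two parts: a first-order (stationarity) claim and a uniqueness claim.

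For stationarity, I would add the two displayed identities. The KKT condition gives $\nabla\mathcal R_o(m^\star) = -(\nabla g(m^\star))^\top\lambda^\star$, and the alignment hypothesis gives $(\nabla g(m^\star))^\top\lambda^\star = \alpha^\star\nabla\mathcal R_r(m^\star)$. Substituting the second into the first yields
\[
\nabla\mathcal R_o(m^\star) + \alpha^\star\nabla\mathcal R_r(m^\star) = 0 .
\]
This says exactly that $\nabla_m J(m^\star,\alpha^\star)=0$ for $J(m,\alpha)=\mathcal R_o(m)+\alpha\mathcal R_r(m)$. Note the sign convention: the alignment condition must be read so that the multiplier term cancels against the gradient of $\mathcal R_o$. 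I will state explicitly that $\lambda^\star$ is the multiplier appearing in the KKT display, so no sign ambiguity remains. Because $\mathcal M$ may be a constrained set, stationarity should be understood relative to it. If $\mathcal M$ is convex and closed (Assumption~\ref{assump:feas_reg}), I would phrase it as the variational inequality $\langle \nabla J(m^\star,\alpha^\star), m-m^\star\rangle \ge 0$ for all $m\in\mathcal M$, which the zero gradient trivially satisfies.

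For uniqueness, assume $J(\cdot,\alpha^\star)$ is $\mu$-strongly convex, either globally on the convex set $\mathcal M$ as in Assumption~\ref{assump:sc_smooth} or on a neighborhood. For any $m$, strong convexity gives
\[
J(m,\alpha^\star) \ge J(m^\star,\alpha^\star) + \langle\nabla J(m^\star,\alpha^\star),m-m^\star\rangle + \tfrac{\mu}{2}\|m-m^\star\|^2 = J(m^\star,\alpha^\star)+\tfrac{\mu}{2}\|m-m^\star\|^2 .
\]
Hence $m^\star$ is the strict, and therefore unique, minimizer. In the purely local case the same inequality makes $m^\star$ the unique minimizer within that neighborhood. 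I would then observe that when $\mathcal M$ is convex, a local minimizer of a convex function is global, so convexity of $J$ on $\mathcal M$ upgrades the conclusion to global uniqueness.

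Finally, I would connect the result to Proposition~\ref{prop:degenerate_accidental}. Since $m^\star=m_{\alpha^\star}$ and $g(m^\star)=0$, the weighted estimator coincides with the constrained one. The main obstacle is interpretive rather than technical. One must make precise in which space the gradients live: the $L_2(P_r)$ geometry used for $g$ versus that of $\mathcal R_o$. One must also state that "local strong convexity" only gives local uniqueness unless convexity of $J$ on $\mathcal M$ is additionally invoked. I would state both versions explicitly.
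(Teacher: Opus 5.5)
Your argument is correct and follows the paper's route. Stationarity is exactly the substitution of the alignment identity into the KKT equation. Your strong-convexity inequality, with the convexity-based upgrade from local to global uniqueness (automatic for convex $\mathcal M$, since both squared risks are convex in $m$), supplies the uniqueness step that the paper's one-line proof leaves implicit.

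One caution on your closing remark: do not derive the coincidence with the constrained estimator from Proposition~\ref{prop:degenerate_accidental}. Its argument does not hold in general, because a feasible minimizer of $\mathcal R_o+\alpha\mathcal R_r$ need not minimize $\mathcal R_o$ over $\mathcal F$. The coincidence instead follows directly from the fact that $m^\star$ is a KKT point of a convex program with $g$ affine in $m$. The Lagrangian $\mathcal R_o+(\lambda^\star)^\top g$ is convex with zero gradient at $m^\star$, so $m^\star$ is itself a constrained minimizer.
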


\begin{proof}
Stationarity of the weighted objective at $m^\star$ requires
$\nabla\mathcal R_o(m^\star)+\alpha^\star\nabla\mathcal R_r(m^\star)=0$.
Substituting the assumed alignment recovers the KKT stationarity condition.
\end{proof}

\begin{example}[Exact quadratic-penalty RCT loss]
\label{exa:kkt_alignment}
If $\mathcal R_r(m)=c+\beta\|A\,g(m)\|_2^2$ for some $A\in\mathbb R^{K\times K}$, $\beta>0$, $c\in\mathbb R$,
then $\nabla \mathcal R_r(m)$ lies in the range of $(\nabla g(m))^\top$ for all $m$,
and weighted fusion acts as an exact quadratic penalty on feasibility.
\end{example}

\begin{remark}[Why Condition 2 is restrictive]
The requirement $\nabla \mathcal R_r(m^\star)\in \mathrm{range}\big((\nabla g(m^\star))^\top\big)$ is a strong
geometric coincidence: $\mathcal R_r$ generally encourages fitting the full outcome regression, while $g(m)=0$
imposes only the identifying restrictions. Such alignment is exceptional with multiple treatments and rich models.
\end{remark}

\paragraph{Condition 3: Linear constraints and quadratic objectives.}

\begin{proposition}[Degeneracy in linear--quadratic settings]
\label{prop:degenerate_quadratic}
Assume $\mathcal M$ is a finite-dimensional linear class parameterized by $\theta\in\mathbb R^d$,
$\mathcal R_o$ is strictly convex quadratic in $\theta$, the constraint is linear $g(\theta)=A\theta-b$,
and the RCT risk satisfies $\mathcal R_r(\theta)=c+\beta\|A\theta-b\|_2^2$ for some $\beta>0$.
Then any accumulation point of $\theta_\alpha$ as $\alpha\to\infty$ is a constrained solution.
Moreover, if $\theta_{\alpha^\star}$ is feasible for some $\alpha^\star$, then $\theta_{\alpha^\star}$ equals
the constrained estimator.
\end{proposition}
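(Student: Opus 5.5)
The proof treats weighted fusion here as an exact quadratic-penalty method for the linear constraint $A\theta=b$, in the spirit of Example~\ref{exa:kkt_alignment}. Write $\mathcal R_o(\theta)=\tfrac12\theta^\top H\theta-q^\top\theta+r$ with $H\succ 0$. Since $\mathcal R_r(\theta)=c+\beta\|A\theta-b\|_2^2$, the weighted objective is
\[
J(\theta,\alpha)=\mathcal R_o(\theta)+\alpha c+\alpha\beta\|A\theta-b\|_2^2 .
\]
For each $\alpha\ge0$ this is a strictly convex quadratic, so $\theta_\alpha$ exists and is unique. I will assume the feasible set $\{\theta:A\theta=b\}$ is nonempty, as in Assumption~\ref{assump:feas_reg}; in the second claim this is automatic. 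A strictly convex quadratic restricted to a nonempty affine subspace is again coercive and strictly convex, so the constrained minimizer $\theta^\star$ exists and is unique.

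\textbf{Second claim.} Suppose $A\theta_{\alpha^\star}=b$. For any feasible $\theta$, optimality of $\theta_{\alpha^\star}$ gives
\[
\mathcal R_o(\theta_{\alpha^\star})+\alpha^\star c=J(\theta_{\alpha^\star},\alpha^\star)\le J(\theta,\alpha^\star)=\mathcal R_o(\theta)+\alpha^\star c .
\]
Hence $\mathcal R_o(\theta_{\alpha^\star})\le\mathcal R_o(\theta)$ for every feasible $\theta$, so $\theta_{\alpha^\star}$ minimizes $\mathcal R_o$ on the feasible set. Uniqueness of the constrained minimizer then gives $\theta_{\alpha^\star}=\theta^\star$. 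This is the argument of Proposition~\ref{prop:degenerate_accidental}, made exact by the penalty structure.

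\textbf{First claim.} I follow the classical penalty-method argument in three steps.
\begin{enumerate}
\item Compare $\theta_\alpha$ with $\theta^\star$: $\mathcal R_o(\theta_\alpha)+\alpha\beta\|A\theta_\alpha-b\|^2\le\mathcal R_o(\theta^\star)$.
\item Since $H\succ0$, $\mathcal R_o$ is bounded below by $\underline R:=\min_\theta\mathcal R_o(\theta)$. This yields
\[
\|A\theta_\alpha-b\|^2\le\frac{\mathcal R_o(\theta^\star)-\underline R}{\alpha\beta}\;\longrightarrow\;0
\quad\text{as }\alpha\to\infty,
\]
together with $\mathcal R_o(\theta_\alpha)\le\mathcal R_o(\theta^\star)$.
\item Because strictly convex quadratics have compact sublevel sets, the path $\{\theta_\alpha\}$ is bounded, so accumulation points exist. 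If $\bar\theta$ is one, continuity gives $A\bar\theta=b$ and $\mathcal R_o(\bar\theta)\le\mathcal R_o(\theta^\star)$. Hence $\bar\theta$ is a constrained solution, and $\bar\theta=\theta^\star$ by uniqueness. In fact this shows the whole path converges, $\theta_\alpha\to\theta^\star$.
\end{enumerate}

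The only step needing care is the boundedness of the path, which is what guarantees accumulation points exist. I handle it through the sublevel-set compactness supplied by strict convexity, rather than through any explicit formula for $\theta_\alpha$. Everything else is a direct inequality chase.
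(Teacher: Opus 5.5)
Your proof is correct and follows the paper's argument for the first claim: compare $\theta_\alpha$ with $\theta^\star$, bound $\|A\theta_\alpha-b\|_2^2$ by $(\mathcal R_o(\theta^\star)-\inf_\theta\mathcal R_o(\theta))/(\alpha\beta)$, and pass to the limit. The paper's short proof leaves several points implicit that you make explicit: boundedness of the path (so accumulation points exist), uniqueness giving full convergence $\theta_\alpha\to\theta^\star$, and the second claim, which uses $\mathcal R_r\equiv c$ on the feasible set.
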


\begin{proof}
The weighted objective is $\mathcal R_o(\theta)+\alpha\beta\|A\theta-b\|_2^2$.
Let $\theta^\star$ be any constrained minimizer. Optimality of $\theta_\alpha$ implies
\[
\mathcal R_o(\theta_\alpha)+\alpha\beta\|A\theta_\alpha-b\|_2^2
\le
\mathcal R_o(\theta^\star).
\]
Rearranging yields $\|A\theta_\alpha-b\|_2^2\le (\mathcal R_o(\theta^\star)-\inf_\theta\mathcal R_o(\theta))/(\alpha\beta)=O(1/\alpha)$,
so any limit point is feasible. Passing to the limit in $\mathcal R_o$ gives constrained optimality.
\end{proof}

\begin{remark}[Why Condition 3 is restrictive]
This equivalence requires that the RCT loss be \emph{exactly} a quadratic penalty on the moment constraints.
In our setting, $g(m)$ is a nonlinear expectation depending on learned representations, and $\mathcal R_r$ is a
regression loss that does not isolate $g(m)$; thus the linear--quadratic coincidence is highly idealized.
\end{remark}

\paragraph{Condition 4: Benign transportability ($\delta_o=0$).}

\begin{proposition}[Degeneracy under benign transportability]
\label{prop:degenerate_delta0}
Let $m_o\in\arg\min_{m\in\mathcal M}\mathcal R_o(m)$. If $g(m_o)=0$, then $m_o$ is also a constrained solution.
In particular, the constrained estimator coincides with the observational estimator, and fusion is redundant.
\end{proposition}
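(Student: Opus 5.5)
The plan is to observe that the statement is essentially a specialization of Proposition~\ref{prop:degenerate_accidental} with $\alpha^\star=0$. For $\alpha=0$ the weighted objective $\mathcal R_o(m)+\alpha\,\mathcal R_r(m)$ reduces to $\mathcal R_o(m)$, so $m_o$ is a weighted-loss minimizer $m_{0}$. The hypothesis $g(m_o)=0$ is exactly the accidental-feasibility condition \eqref{eq:accidental_feasible}. For transparency, however, I would give the direct two-line argument rather than only citing that proposition.

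The direct argument has two steps. First, since $g(m_o)=0$, we have $m_o\in\mathcal F$. Second, $\mathcal F\subseteq\mathcal M$ gives
\[
\inf_{m\in\mathcal F}\mathcal R_o(m)\;\ge\;\inf_{m\in\mathcal M}\mathcal R_o(m)\;=\;\mathcal R_o(m_o)\;\ge\;\inf_{m\in\mathcal F}\mathcal R_o(m),
\]
where the last inequality holds because $m_o\in\mathcal F$. Hence all three quantities are equal, and $m_o\in\arg\min_{m\in\mathcal F}\mathcal R_o(m)$; that is, $m_o$ solves the constrained program \eqref{eq:constrained_program}. In particular $\mathcal R_o^\star=\inf_{m\in\mathcal M}\mathcal R_o(m)$.

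For the ``in particular'' clause, I would first note that the constrained optimum coincides with the unconstrained observational optimum. Consequently, the observational fit gap $\delta_o$ in \eqref{eq:obs_gap} equals zero, so Assumption~\ref{assump:weighted_incompat}(i) fails. This is the sense in which the regime is degenerate and fusion is redundant: enforcing the RCT moments costs nothing. If one wants the two estimators to literally coincide as functions rather than merely share the optimal value, I would add uniqueness. For example, strict convexity of $\mathcal R_o$ on the convex set $\mathcal M$ (Assumption~\ref{assump:feas_reg}) gives a unique unconstrained minimizer $m_o$, and then any constrained minimizer, having the same optimal value, must equal $m_o$.

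The only step requiring care is this uniqueness caveat. Without it, the statement should be read as ``$m_o$ is \emph{a} constrained solution,'' and that reading needs no further assumptions.
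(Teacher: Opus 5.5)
Your proof is correct and is exactly the paper's: $g(m_o)=0$ places $m_o$ in $\mathcal F\subseteq\mathcal M$, and a global minimizer of $\mathcal R_o$ over $\mathcal M$ that lies in $\mathcal F$ is automatically a minimizer over $\mathcal F$. Giving this direct argument rather than only citing Proposition~\ref{prop:degenerate_accidental} is the right call, since that proposition reduces to this inclusion argument only in its $\alpha^\star=0$ case, and your uniqueness caveat for the ``coincides'' clause is a sensible sharpening the paper leaves implicit.
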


\begin{proof}
If $g(m_o)=0$, then $m_o\in\mathcal F$. Since $m_o$ minimizes $\mathcal R_o$ over $\mathcal M$, it also minimizes
$\mathcal R_o$ over the subset $\mathcal F$.
\end{proof}

\begin{remark}[Why Condition 4 is restrictive]
This corresponds to $\delta_o=0$: enforcing causal validity incurs no observational-risk penalty.
Under confounding, selection, or structural non-overlap, $\delta_o>0$ is typical, so this degeneracy fails.
\end{remark}

\paragraph{Summary.}
All four conditions impose strong forms of alignment between $\mathcal R_r$, the identifying moments $g(\cdot)$,
and the observational objective $\mathcal R_o$. Such coincidences are non-generic under treatment-induced structural
non-overlap, explaining why weighted-loss fusion typically cannot replicate the constrained estimator and motivating
explicit feasibility enforcement (e.g., primal--dual optimization).

\section{Why primal--dual (Augmented Lagrangian) is more stable than penalty tuning at scale}
\label{sec:appendix:pd_stability}

\begin{algorithm}[t]
\caption{Stochastic Primal--Dual Augmented-Lagrangian Joint Estimation}
\label{alg:pd_joint}
\begin{algorithmic}[1]
\STATE \textbf{Input:} observational sample $\mathcal D_o$, randomized sample $\mathcal D_r$;
predictor $m_\theta$, representation $\phi$; discriminator class $\mathcal D$ (critic $d_\omega$);
stepsizes $\{\eta_s\}_{s=1}^S,\{\eta_{\nu,s}\}_{s=1}^S$; penalty $\rho>0$, overlap weight $\lambda\ge 0$.
\STATE \textbf{Initialize:} $(\theta_1,\phi_1)$, dual variable $\nu_1\gets 0$, critic parameters $\omega_1$.
\FOR{$s=1,2,\dots,S$}
    \STATE Draw minibatches $\mathcal B_o\sim\mathcal D_o$ and $\mathcal B_r\sim\mathcal D_r$.
    
    \STATE \textbf{Overlap (critic) step:} update $\omega_s$ to (approximately) maximize the empirical IPM objective,
    yielding $\widehat{\varepsilon}_{ov}(\phi_s)=\widehat{\mathrm{IPM}}_{\mathcal D}\!\big(P_o(\phi_s(X),T),P_r(\phi_s(X),T)\big)$.
    
    \STATE \textbf{Stochastic oracle evaluation:}
    compute
    $\widehat{\mathcal R}_o(\theta_s,\phi_s)$ on $\mathcal B_o$ and
    $\widehat g(\theta_s,\phi_s):=\widehat{\mathbb E}_{\mathcal B_r}\!\big[\psi(Y,T,\phi_s(X);m_{\theta_s})\big]\in\mathbb R^K$ on $\mathcal B_r$.
    
    \STATE \textbf{Primal (ALM) descent:}
    \[
    (\theta_{s+1},\phi_{s+1})
    \leftarrow
    (\theta_s,\phi_s)
    -\eta_s\,
    \nabla_{\theta,\phi}\!\Big(
    \widehat{\mathcal R}_o
    +\lambda\,\widehat{\varepsilon}_{ov}
    +\langle \nu_s,\widehat g\rangle
    +\tfrac{\rho}{2}\|\widehat g\|_2^2
    \Big).
    \]
    
    \STATE \textbf{Dual ascent (multiplier update):}
    \[
    \nu_{s+1}\leftarrow \nu_s+\eta_{\nu,s}\,\widehat g(\theta_{s+1},\phi_{s+1}).
    \]
\ENDFOR
\STATE \textbf{Output:} $(\hat\theta,\hat\phi)=(\theta_{S+1},\phi_{S+1})$.
\end{algorithmic}
\end{algorithm}

\paragraph{Primal--dual stability versus penalty tuning.}
We strengthen the message: \emph{among generic solvers of the constrained program}
\[
m^\star \in \arg\min_{m\in\mathcal M}\ \mathcal R_o(m)\quad \text{s.t.}\quad g(m)=0,
\]
primal--dual / augmented Lagrangian (ALM) updates are structurally \emph{more stable} than
(i) pure penalty methods and (ii) generic hyperparameter-tuned objectives, especially under
large-scale stochastic optimization.

\paragraph{Penalty method.}
A standard approach is the quadratic penalty objective
\begin{equation}
m_\rho \in \arg\min_{m\in\mathcal M}
\Big\{
\mathcal R_o(m) + \tfrac{\rho}{2}\,\|g(m)\|_2^2
\Big\},
\qquad \rho>0,
\label{eq:penalty_method_m}
\end{equation}
which corresponds to the augmented Lagrangian with the dual variable fixed at $0$.

\paragraph{Augmented Lagrangian and primal--dual updates.}
Define the (population) augmented Lagrangian
\begin{equation}
\mathcal L_\rho(m,\lambda)
:=\mathcal R_o(m)+\lambda^\top g(m)+\tfrac{\rho}{2}\|g(m)\|_2^2,
\qquad \lambda\in\mathbb R^K.
\label{eq:alm_m}
\end{equation}
A basic (deterministic) primal--dual gradient scheme performs descent/ascent on $(m,\lambda)$:
\begin{equation}
m^{s+1}=m^s-\eta_m \nabla_m \mathcal L_\rho(m^s,\lambda^s),
\qquad
\lambda^{s+1}=\lambda^s+\eta_\lambda\, g(m^{s+1}).
\label{eq:pd_updates_m}
\end{equation}

\begin{assumption}[Strong convexity, smoothness, and constraint regularity]
\label{assump:sc_smooth_reg_m}
Let $\mathcal M$ be a finite-dimensional convex parameterization (for exposition) with $m\equiv m_\theta$ and $\Theta\subseteq\mathbb R^d$ convex.
Assume:
\begin{enumerate}
\item (\textbf{Primal strong convexity/smoothness})
$\mathcal R_o(\theta)$ is $\mu$-strongly convex and $L$-smooth on $\Theta$:
\[
\mu I \preceq \nabla^2 \mathcal R_o(\theta)\preceq L I,\qquad \forall \theta\in\Theta.
\]
\item (\textbf{Constraint Jacobian bound})
$g(\theta)$ is differentiable and $\|\nabla g(\theta)\|_{\mathrm{op}}\le G$ for all $\theta\in\Theta$.
\item (\textbf{Constraint qualification / metric regularity at optimum})
Let $(\theta^\star,\lambda^\star)$ be a KKT pair. The Jacobian $\nabla g(\theta^\star)$ has full row rank with
$\sigma_{\min}(\nabla g(\theta^\star))\ge \sigma_{\min}>0$.
\end{enumerate}
\end{assumption}

\paragraph{Why pure penalty is ill-conditioned (hyperparameter instability).}
Consider
\[
F_\rho(\theta):=\mathcal R_o(\theta)+\tfrac{\rho}{2}\|g(\theta)\|_2^2.
\]
Its Hessian is
\begin{equation}
\nabla^2 F_\rho(\theta)
=
\nabla^2 \mathcal R_o(\theta)
+
\rho(\nabla g(\theta))^\top \nabla g(\theta)
+
\rho \sum_{k=1}^K g_k(\theta)\nabla^2 g_k(\theta).
\label{eq:penalty_hessian}
\end{equation}
Ignoring the last (curvature) term, the smoothness constant necessarily scales as
\begin{equation}
L_\rho \ \ge\  L + \rho \cdot \sup_{\theta\in\Theta}\lambda_{\max}\!\big((\nabla g(\theta))^\top \nabla g(\theta)\big)
\ \ge\  L+\rho G^2.
\label{eq:penalty_smoothness_m}
\end{equation}
Thus gradient-based optimization of $F_\rho$ requires a stepsize $\eta_m=O(1/L_\rho)$, which shrinks with $\rho$,
while near-feasibility typically requires $\rho$ large. This creates an intrinsic \emph{ill-conditioning trade-off}.

\begin{proposition}[Penalty conditioning deteriorates with $\rho$]
\label{prop:penalty_ill_conditioned_m}
Under Assumption~\ref{assump:sc_smooth_reg_m}, suppose additionally that the third term in
\eqref{eq:penalty_hessian} is negligible in a neighborhood of the minimizer (e.g., $g(\theta)\approx 0$ locally,
or $g$ is affine). Then $F_\rho$ is $\mu_\rho$-strongly convex with $\mu_\rho\ge \mu$, and $L_\rho$-smooth with
$L_\rho\ge L+\rho G^2$. Consequently, the condition number satisfies
\[
\kappa(F_\rho):=\frac{L_\rho}{\mu_\rho}
\ \ge\ \frac{L+\rho G^2}{\mu},
\]
which grows at least linearly in $\rho$.
\end{proposition}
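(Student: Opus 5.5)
Working under Assumption~\ref{assump:sc_smooth_reg_m}, I will bound $\mu_\rho$ and $L_\rho$ separately by Hessian eigenvalues and then take their ratio. By hypothesis the curvature term $\rho\sum_k g_k(\theta)\nabla^2 g_k(\theta)$ in \eqref{eq:penalty_hessian} is negligible near the minimizer (exactly zero if $g$ is affine). On that neighborhood the Hessian is therefore
\[
\nabla^2 F_\rho(\theta)=\nabla^2\mathcal R_o(\theta)+\rho\,(\nabla g(\theta))^\top\nabla g(\theta).
\]
The second summand is positive semidefinite.

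For the lower bound, the PSD term can only increase eigenvalues, so $\nabla^2F_\rho(\theta)\succeq\nabla^2\mathcal R_o(\theta)\succeq\mu I$. This gives strong convexity of $F_\rho$ with $\mu_\rho\ge\mu$. Convexity of $\Theta$ lets me pass from the Hessian bound to the strong convexity inequality via the integral form of Taylor's theorem.

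For the smoothness lower bound, I take $L_\rho$ to be the smallest valid smoothness constant, $L_\rho=\sup_\theta\lambda_{\max}(\nabla^2F_\rho(\theta))$. Weyl's inequality for the sum of two symmetric matrices gives $\lambda_{\max}(A+B)\ge\lambda_{\max}(B)+\lambda_{\min}(A)$. Applying it directly would yield only $\mu+\rho\,\lambda_{\max}(\nabla g^\top\nabla g)$, which is weaker than the stated $L+\rho G^2$.

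This is the main obstacle: the displayed bound \eqref{eq:penalty_smoothness_m} is really an "in the worst case / necessarily scales as" statement. To make it rigorous I would read $L$ and $G$ as tight constants. That means $L=\sup_\theta\lambda_{\max}(\nabla^2\mathcal R_o)$ and $G^2=\sup_\theta\lambda_{\max}(\nabla g^\top\nabla g)$, both attained at a common direction and point. Alternatively, when the constraint directions align with the top curvature directions of $\mathcal R_o$, the Rayleigh quotient at a common unit vector $v$ gives $v^\top\nabla^2F_\rho v\ge L+\rho G^2$.

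If no such alignment is available, I would fall back to the Weyl bound $L_\rho\ge\mu+\rho G_\star^2$, with $G_\star^2:=\sup_\theta\lambda_{\max}(\nabla g^\top\nabla g)$. By full row rank at $\theta^\star$ (part 3 of the assumption), $G_\star^2\ge\sigma_{\min}^2>0$. This alternative bound still grows linearly in $\rho$, which is the substantive claim.

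Finally, I combine the two bounds. With $\mu_\rho\ge\mu$ fixed, the condition number is $\kappa(F_\rho)=L_\rho/\mu_\rho$. To bound it below I also need an upper bound on $\mu_\rho$. Taking $\mu_\rho$ as the best (largest) strong convexity constant, $\mu_\rho\le\mu+\rho\,\lambda_{\min}(\nabla g^\top\nabla g)$. When $K<d$, this minimum eigenvalue is zero, because $\nabla g^\top\nabla g$ has rank at most $K$. Hence $\mu_\rho$ stays at $\mu$ along the null space of $\nabla g$, and $\kappa(F_\rho)\ge(L+\rho G^2)/\mu$, which grows linearly in $\rho$.
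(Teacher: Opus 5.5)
Your bound $\mu_\rho\ge\mu$ is exactly the paper's first line. You are also right that $L_\rho\ge L+\rho G^2$ does not follow from the constants as given. The paper's proof is only the Hessian sandwich $\mu I\preceq\nabla^2F_\rho\preceq(L+\rho G^2)I$; it invokes $\rho(\nabla g)^\top\nabla g\preceq\rho G^2 I$, which is an upper bound. So it certifies $\mu$-strong convexity and $(L+\rho G^2)$-smoothness, and implicitly reads $\kappa$ as the ratio of these certified constants. Its ``combining'' step does not give a lower bound on the true ratio, since $\mu_\rho\ge\mu$ points the wrong way.

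The genuine gap is your last step. The inequality $\mu_\rho\le\mu+\rho\,\lambda_{\min}((\nabla g)^\top\nabla g)$ is not a Weyl inequality; the valid ones are $\lambda_{\min}(A+B)\le\lambda_{\max}(A)+\lambda_{\min}(B)$ and $\lambda_{\min}(A+B)\le\lambda_{\min}(A)+\lambda_{\max}(B)$. Testing on $\ker\nabla g$ only gives $\mu_\rho\le\min\{v^\top\nabla^2\mathcal R_o v:\ v\in\ker\nabla g,\ \|v\|=1\}$. This equals $\mu$ only if a minimal-curvature direction of $\mathcal R_o$ lies in $\ker\nabla g$, which is a second alignment hypothesis, separate from the one you imposed for $L_\rho$.

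For a concrete failure, take $d=2$, $K=1$, $\mathcal R_o(\theta)=\tfrac12(L\theta_1^2+\mu\theta_2^2)$ with $\mu<L$, and $g(\theta)=G\theta_2$ (affine, full row rank). Then $\nabla^2F_\rho=\mathrm{diag}(L,\mu+\rho G^2)$. The true condition number $\max(L,\mu+\rho G^2)/\min(L,\mu+\rho G^2)$ is strictly below $(L+\rho G^2)/\mu$ for every $\rho>0$; switching to $g(\theta)=G\theta_1$ attains the bound.

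Likewise, $K<d$ is a needed extra hypothesis, not a case you can quietly restrict to. For $g(\theta)=\theta-b$ the condition number is at most $(L+\rho)/(\mu+\rho)\le L/\mu$, so it does not grow in $\rho$.

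What does follow rigorously when $K<d$ is this. At $\theta^\star$ the curvature term vanishes exactly, because $g(\theta^\star)=0$. Weyl plus full row rank gives $\lambda_{\max}(\nabla^2F_\rho(\theta^\star))\ge\mu+\rho\sigma_{\min}^2$. Any unit $v\in\ker\nabla g(\theta^\star)$ gives $\lambda_{\min}(\nabla^2F_\rho(\theta^\star))\le L$. Since any valid constants satisfy $L_\rho\ge\lambda_{\max}$ and $\mu_\rho\le\lambda_{\min}$ of the Hessian at interior points, you get $\kappa(F_\rho)\ge(\mu+\rho\sigma_{\min}^2)/L$. That is linear growth in $\rho$, but not the stated constant $(L+\rho G^2)/\mu$, which needs both alignments.
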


\begin{proof}
Strong convexity follows from $\nabla^2 F_\rho(\theta)\succeq \nabla^2\mathcal R_o(\theta)\succeq \mu I$.
For smoothness, under the stated locality/affinity condition, the dominant contribution is
$\rho(\nabla g)^\top\nabla g\preceq \rho G^2 I$, yielding \eqref{eq:penalty_smoothness_m}.
Combining gives the lower bound on $\kappa(F_\rho)$.
\end{proof}

\paragraph{Why augmented Lagrangian / PD is stable: feasibility without $\rho\to\infty$.}
The ALM introduces a \emph{dual control channel} $\lambda$ that accumulates constraint violations,
so feasibility can be enforced by dual ascent rather than by taking $\rho\uparrow\infty$.
To formalize, define the KKT mapping for the equality-constrained problem:
\begin{equation}
\mathcal T_\rho(\theta,\lambda)
:=
\begin{pmatrix}
\nabla \mathcal R_o(\theta)+(\nabla g(\theta))^\top\lambda+\rho(\nabla g(\theta))^\top g(\theta)\\
g(\theta)
\end{pmatrix}.
\label{eq:kkt_map}
\end{equation}
A KKT point $(\theta^\star,\lambda^\star)$ satisfies $\mathcal T_\rho(\theta^\star,\lambda^\star)=0$
for any $\rho\ge 0$.

\begin{proposition}[Local linear convergence of PD with bounded $\rho$]
\label{prop:pd_bounded_rho_m}
Under Assumption~\ref{assump:sc_smooth_reg_m}, there exist stepsizes $(\eta_m,\eta_\lambda)$ and a finite $\rho>0$
such that, for initialization in a neighborhood of $(\theta^\star,\lambda^\star)$, the PD iterates
\eqref{eq:pd_updates_m} converge linearly:
\[
\|\theta^{s}-\theta^\star\|_2+\|\lambda^{s}-\lambda^\star\|_2
\ \le\ C q^s,
\qquad \text{for some } q\in(0,1),
\]
and simultaneously $\|g(\theta^s)\|_2\le C' q^s$.
\end{proposition}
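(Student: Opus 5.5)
The plan is the standard local analysis of primal--dual methods for equality-constrained strongly convex problems: linearize the iteration map \eqref{eq:pd_updates_m} at the KKT pair $(\theta^\star,\lambda^\star)$, show that its Jacobian has spectral radius strictly below one for a suitable finite $\rho$ and small stepsizes, and then invoke the Ostrowski/contraction-mapping principle to get local linear convergence. The feasibility rate then follows from $g(\theta^\star)=0$ and the Jacobian bound $\|\nabla g\|_{\mathrm{op}}\le G$ in Assumption~\ref{assump:sc_smooth_reg_m}. This gives $\|g(\theta^s)\|_2=\|g(\theta^s)-g(\theta^\star)\|_2\le G\|\theta^s-\theta^\star\|_2\le GCq^s$, so one may take $C'=GC$.

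\textbf{Step 1: fixed point.} The pair $(\theta^\star,\lambda^\star)$ is a fixed point of the update map, because $\nabla_\theta\mathcal L_\rho(\theta^\star,\lambda^\star)=0$ by KKT stationarity together with $g(\theta^\star)=0$, and the dual step is likewise zero.

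\textbf{Step 2: linearization.} Write $H:=\nabla^2_{\theta\theta}\mathcal L_\rho(\theta^\star,\lambda^\star)=\nabla^2\mathcal R_o(\theta^\star)+\sum_k\lambda^\star_k\nabla^2 g_k(\theta^\star)+\rho A^\top A$ and $A:=\nabla g(\theta^\star)$. The linearized primal step is $\delta\theta^+=(I-\eta_m H)\delta\theta-\eta_m A^\top\delta\lambda$. Since the dual step uses $\theta^{s+1}$, the linearized dual step is $\delta\lambda^+=\delta\lambda+\eta_\lambda A\delta\theta^+$.

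The key observation is that with $\rho$ large but \emph{finite}, the augmented Hessian $H$ is positive definite, with smallest eigenvalue at least some $\mu'>0$. On the null space of $A$ this comes from $\mu I\preceq\nabla^2\mathcal R_o$, provided the $\lambda^\star$-curvature term is dominated or $g$ is affine. On the range of $A^\top$ it comes from $\rho\sigma_{\min}^2$. This is the classical Debreu/Finsler lemma.

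\textbf{Step 3: contraction (main obstacle).} The hard part is bounding the spectral radius of this non-symmetric block matrix. I would first follow the standard Arrow--Hurwicz/Uzawa route: take $\eta_\lambda=\eta_m=\eta$ and look at the continuous-time limit $\dot z=-Mz$ with
\[
M=\begin{pmatrix}H & A^\top\\ -A & 0\end{pmatrix}.
\]
For $z=(u,v)$, one has $\mathrm{Re}\,\langle z,Mz\rangle=\langle u,Hu\rangle\ge0$. Combining this with $H\succ0$ and full row rank of $A$ (which gives $\sigma_{\min}>0$) shows that every eigenvalue of $M$ has real part at least some $c(\mu',\sigma_{\min},\|H\|,G)>0$. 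A short argument on the eigenvector equations handles this, since a purely imaginary eigenvalue would force $u=0$ and then $A^\top v=0$, hence $v=0$.

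Because $I-\eta M+O(\eta^2)$ is the exact linearization of the Gauss--Seidel update, choosing $\eta$ small enough gives spectral radius $q_0=1-\eta c+O(\eta^2)<1$. Picking a norm in which the linear map contracts by some $q\in(q_0,1)$, the $C^2$ smoothness of $\mathcal R_o$ and $g$ then ensures that the nonlinear remainder is dominated in a small enough neighborhood. This yields $\|\theta^s-\theta^\star\|+\|\lambda^s-\lambda^\star\|\le Cq^s$.

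If the eigenvalue argument proves awkward, my fallback is a Lyapunov function $V=\|\delta\theta\|^2+\|\delta\lambda\|^2+\gamma\langle\delta\theta,A^\top\delta\lambda\rangle$ with small $\gamma$. This is the usual trick for recovering dual contraction from the full-rank $A$.
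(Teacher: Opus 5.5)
Your argument is correct, granted one extra hypothesis that you flag yourself and that the paper also needs (see below). It follows the same local strategy as the paper: linearize at the KKT pair, and use positive definiteness of the augmented Hessian $H$ together with full row rank of $A=\nabla g(\theta^\star)$. The step that carries the weight, however, is different.

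The paper shows that the \emph{symmetric} KKT Jacobian $\begin{pmatrix}H & A^\top\\ A & 0\end{pmatrix}$ is nonsingular by a Schur-complement argument and infers strong metric regularity. It then delegates the contraction to ``classical local convergence results'' for the PD update viewed as a preconditioned forward step. Nonsingularity alone does not give contraction. That matrix is indefinite: its Schur complement $-AH^{-1}A^\top\prec 0$ forces $K$ negative eigenvalues. What actually governs descent--ascent is the sign-flipped matrix $M=\begin{pmatrix}H & A^\top\\ -A & 0\end{pmatrix}$, whose spectrum must lie in the open right half-plane.

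Your identity $\mathrm{Re}\langle z,Mz\rangle=u^{*}Hu$, together with the full-rank argument, proves exactly this fact. You also account for the Gauss--Seidel dual step via the $O(\eta^2)$ correction. Finally, you get the explicit rate $C'=GC$ for $\|g(\theta^s)\|_2$, whereas the paper only concludes $g(\theta^s)\to 0$. So the paper's route is shorter but rests on an unstated spectral property; yours is longer but self-contained.

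Two technical remarks.

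First, $H\succ 0$ does not follow from Assumption~\ref{assump:sc_smooth_reg_m} when $g$ is nonlinear, because of the term $\sum_k\lambda^\star_k\nabla^2 g_k(\theta^\star)$. You need a second-order sufficient condition on $\ker A$; Finsler's lemma then gives a finite $\rho$. The paper silently drops this term from its Jacobian, so it needs the same condition. The condition is genuinely necessary. Take $\min\tfrac12\|\theta\|_2^2$ subject to $\theta_2-1+c\theta_1^2=0$ with $c>1/2$. The pair $\theta^\star=(0,1)$, $\lambda^\star=-1$ satisfies the assumption on any bounded convex $\Theta$ containing it. Yet $(e_1,0)$ is an eigenvector of $M$ with eigenvalue $1-2c<0$ for every $\rho$, so PD is repelled from this pair. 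Both arguments likewise need $g$ twice differentiable near $\theta^\star$.

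Second, $M$ need not be diagonalizable, so the expansion $q_0=1-\eta c+O(\eta^2)$ is not literally justified. Instead, write the iteration matrix as $I-\eta(M-\eta N)$ with $N=\begin{pmatrix}0&0\\-AH&-AA^\top\end{pmatrix}$, and use continuity of the eigenvalues $\nu_i(\eta)$ of $M-\eta N$. For small $\eta$ you have $\mathrm{Re}\,\nu_i(\eta)\ge c/2$ and $|\nu_i(\eta)|\le\|M\|+1$. Hence $|1-\eta\nu_i(\eta)|^2\le 1-\eta c+\eta^2(\|M\|+1)^2<1$. In finite dimensions you may simply take $c=\min_i\mathrm{Re}\,\lambda_i(M)>0$; no quantitative dependence on $(\mu',\sigma_{\min},\|H\|,G)$ is needed.
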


\begin{proof}
Consider the Jacobian of $\mathcal T_\rho$ at $(\theta^\star,\lambda^\star)$:
\[
\nabla \mathcal T_\rho(\theta^\star,\lambda^\star)
=
\begin{pmatrix}
\nabla^2 \mathcal R_o(\theta^\star)+\rho(\nabla g(\theta^\star))^\top\nabla g(\theta^\star)
&
(\nabla g(\theta^\star))^\top\\[2pt]
\nabla g(\theta^\star) & 0
\end{pmatrix}
\quad
(\text{since } g(\theta^\star)=0).
\]
By Assumption~\ref{assump:sc_smooth_reg_m}(i), the top-left block is positive definite.
By Assumption~\ref{assump:sc_smooth_reg_m}(iii), $\nabla g(\theta^\star)$ has full row rank.
Hence the saddle-point Jacobian above is nonsingular (standard Schur complement argument), implying that
$\mathcal T_\rho$ is \emph{locally strongly metrically regular} around $(\theta^\star,\lambda^\star)$.
Moreover, Assumption~\ref{assump:sc_smooth_reg_m}(i)--(ii) implies $\mathcal T_\rho$ is locally Lipschitz.

Now view \eqref{eq:pd_updates_m} as a (preconditioned) forward step on the KKT system $\mathcal T_\rho(\theta,\lambda)=0$.
Since $\mathcal T_\rho$ is locally Lipschitz and its Jacobian is nonsingular at the root, classical local convergence
results for fixed-step splitting/gradient schemes imply a contraction in a sufficiently small neighborhood for
appropriate $(\eta_m,\eta_\lambda)$, yielding linear convergence.
Finally, $g(\theta^s)\to 0$ follows because the second block of $\mathcal T_\rho$ is exactly $g(\theta)$, and the iterates
converge to a root of $\mathcal T_\rho$.
\end{proof}

\begin{remark}[Stability advantage over penalty tuning]
Unlike pure penalty methods, which typically require $\rho$ large to make $\|g(\theta)\|$ small---thereby worsening conditioning
(Proposition~\ref{prop:penalty_ill_conditioned_m})---the ALM/PD dynamics can enforce feasibility through the dual variable
$\lambda$ while keeping $\rho$ moderate. This decouples \emph{numerical conditioning} from \emph{constraint enforcement},
making PD/ALM more stable under minibatch noise and stepsize sensitivity in large-scale training.
\end{remark}


\section{Explicit form of the feasibility constant.}
\label{app:explicit from of feasibility constant}
\paragraph{Embedding the moment-sensitive test class into the IPM.}
Recall that overlap is measured by the integral probability metric (IPM)
\[
\mathrm{IPM}_{\mathcal D}(P,Q):=\sup_{d\in\mathcal D}\big|\mathbb E_P[d]-\mathbb E_Q[d]\big|
\]
induced by a discriminator class $\mathcal D$ on $(Z,T)$, where $Z=\omega(X)$ is the learned representation.
Let the identifying moment map be
\[
g(m,\omega)\;:=\;\mathbb E_{P_r}\!\big[\psi(Y,T,\omega(X);m)\big]\in\mathbb R^K,
\]
with coordinates $\psi_k$.
For any $a\in\mathbb R^K$ with $\|a\|_2\le 1$, define the \emph{moment-induced test functions}
\[
f_{a,m}(z,t)\;:=\;\mathbb E\!\left[\left\langle a,\psi(Y,t,z;m)\right\rangle\ \middle|\ Z=z,\ T=t\right],
\]
and the associated function class
\begin{equation}
\mathcal F_\psi
\;:=\;
\bigl\{\,f_{a,m}:\ \|a\|_2\le 1,\ m\in\mathcal M\,\bigr\}.
\label{eq:Fpsi_def}
\end{equation}
(When $\psi$ is already a function of $(z,t)$ only, the conditional expectation can be dropped and one
may take $f_{a,m}(z,t)=\langle a,\psi(y,t,z;m)\rangle$ directly.)

\begin{assumption}[IPM domination of moment-sensitive directions]
\label{assump:moment_ipm_embedding}
There exists a constant $C_{\mathrm{embed}}\ge 1$ such that for any probability measures $P,Q$ on $(Z,T)$,
\begin{equation}
\sup_{f\in\mathcal F_\psi}\bigl|\mathbb E_P[f]-\mathbb E_Q[f]\bigr|
\;\le\;
C_{\mathrm{embed}}\ \mathrm{IPM}_{\mathcal D}(P,Q).
\label{eq:moment-ipm-embedding}
\end{equation}
\end{assumption}

\begin{remark}[Sufficient conditions and interpretation of $C_{\mathrm{embed}}$]
Condition~\eqref{eq:moment-ipm-embedding} holds, for instance, if every $f\in\mathcal F_\psi$ belongs to
$\mathrm{span}(\mathcal D)$ and the IPM norm dominates the coefficient norm in that span, i.e.,
whenever $f=\sum_{j=1}^J \beta_j d_j$ with $d_j\in\mathcal D$, we have
$\|f\|_{\mathrm{IPM}(\mathcal D)}\le C_{\mathrm{embed}}\|\beta\|_1$.
More generally, it holds if $\mathcal F_\psi$ lies in the closure of $\mathrm{conv}(\mathcal D)$ under the
seminorm $\|h\|_{\mathrm{IPM}(\mathcal D)}:=\sup_{P\neq Q}\frac{|\mathbb E_P[h]-\mathbb E_Q[h]|}{\mathrm{IPM}_{\mathcal D}(P,Q)}$.
If $\mathcal D$ explicitly contains the moment-sensitive directions (e.g., functions proportional to the
basis appearing inside $\psi$), then typically $C_{\mathrm{embed}}\approx 1$; conversely, a weak $\mathcal D$
yields large $C_{\mathrm{embed}}$, reflecting that distribution alignment in $\mathrm{IPM}_{\mathcal D}$ may not
control the identifying moments.
\end{remark}

\paragraph{A clean bound linking overlap to moment discrepancy.}
Fix $\omega$ and a predictor $m\in\mathcal M$. Define the joint laws on $(Z,T)$
\[
P_r^{\omega}:=\mathcal L_{P_r}(\omega(X),T),
\qquad
P_o^{\omega}:=\mathcal L_{P_o}(\omega(X),T),
\]
and write the moment residuals as $g(m,\omega)=\mathbb E_{P_r}[\psi(Y,T,\omega(X);m)]$.
Assume additionally that $\psi$ is $L_\psi$-Lipschitz in the prediction argument in the sense that, for any
$m_1,m_2\in\mathcal M$,
\[
\Big\|
\mathbb E_{P_r}\!\big[\psi(Y,T,\omega(X);m_1)\big]
-
\mathbb E_{P_r}\!\big[\psi(Y,T,\omega(X);m_2)\big]
\Big\|
\ \le\
L_\psi\ \|m_1(\omega(X),T)-m_2(\omega(X),T)\|_{L_2(P_r)}.
\]
Then, under Assumption~\ref{assump:moment_ipm_embedding}, the constant controlling the overlap-to-moment term
in Corollary~\ref{cor:strict_improvement} can be chosen as
\begin{equation}
c_1 \;=\; C_{\mathrm{embed}}\cdot L_\psi.
\label{eq:c1_def}
\end{equation}

\begin{proof}[Proof (derivation of \eqref{eq:c1_def})]
Fix $\omega$. Let $m_\omega^\star\in\arg\min_{m\in\mathcal M}\|g(m,\omega)\|_2$ be a minimizer of the moment residual
under representation $\omega$. For any unit vector $a\in\mathbb R^K$ with $\|a\|_2\le 1$, consider the scalarized moment
\[
\langle a, g(m,\omega)\rangle
=
\mathbb E_{P_r}\!\Big[\big\langle a,\psi(Y,T,\omega(X);m)\big\rangle\Big].
\]
Define $f_{a,m}$ as in \eqref{eq:Fpsi_def}. By the tower property,
\[
\langle a, g(m,\omega)\rangle
=
\mathbb E_{(Z,T)\sim P_r^\omega}\!\big[f_{a,m}(Z,T)\big].
\]
Similarly, the same function evaluated under $P_o^\omega$ yields
$\mathbb E_{(Z,T)\sim P_o^\omega}[f_{a,m}(Z,T)]$.
Therefore,
\[
\big|\langle a,g(m,\omega)\rangle - \mathbb E_{(Z,T)\sim P_o^\omega}[f_{a,m}(Z,T)]\big|
=
\big|\mathbb E_{P_r^\omega}[f_{a,m}] - \mathbb E_{P_o^\omega}[f_{a,m}]\big|.
\]
Taking the supremum over $\|a\|_2\le 1$ and $m\in\mathcal M$, and applying
Assumption~\ref{assump:moment_ipm_embedding}, we obtain
\[
\sup_{\|a\|_2\le 1}\ \sup_{m\in\mathcal M}
\big|\mathbb E_{P_r^\omega}[f_{a,m}] - \mathbb E_{P_o^\omega}[f_{a,m}]\big|
\ \le\
C_{\mathrm{embed}}\ \mathrm{IPM}_{\mathcal D}(P_r^\omega,P_o^\omega)
=
C_{\mathrm{embed}}\ \varepsilon_{ov}(\omega).
\]
Finally, the Lipschitz-in-prediction regularity of $\psi$ allows the scalarized residual to be controlled by the
size of prediction perturbations, so the overall overlap-to-moment control inherits a multiplicative factor $L_\psi$.
Thus one may take $c_1=C_{\mathrm{embed}}L_\psi$, as claimed.
\end{proof}

\section{Synthetic data generation}
\label{app:dgp}
We construct a high-dimensional semi-synthetic dataset that combines a randomized controlled trial (RCT) sample with an observational (OBS) study sample. The design aims to emulate common practical difficulties in causal inference, including confounding, support mismatch across populations, and mixed high-dimensional covariates. In total, we generate $n_{\text{total}}=50{,}000$ observations, partitioned into an RCT subsample of size $n_{\text{rct}}=10{,}000$ (randomized treatment assignment) and an OBS subsample of size $n_{\text{obs}}=40{,}000$ (confounded treatment assignment).

Each unit is described by $p=160$ covariates, consisting of $n_{\text{cont}}=120$ continuous features and $n_{\text{cat}}=40$ categorical features. Each categorical feature takes $L=4$ levels, i.e., $X_{k,ij}\in\{0,1,\ldots,L-1\}$.

\subsection{Latent structure and support mismatch}
To induce support mismatch between the RCT and OBS populations, we first generate two-dimensional latent coordinates $\mathbf{Z}=(Z_1,Z_2)$ for each unit:
\begin{align}
\mathbf{Z}_{\text{rct}} &\sim \mathcal{N}\!\left(\mathbf{0},\sigma_{\text{rct}}^2\mathbf{I}_2\right), \\
\mathbf{Z}_{\text{obs}} &\sim \mathcal{N}\!\left(\mathbf{0},\sigma_{\text{obs}}^2\mathbf{I}_2\right),
\end{align}
where $\mathbf{I}_2$ is the $2\times 2$ identity matrix. We set $\sigma_{\text{rct}}=3.0$ and $\sigma_{\text{obs}}=1.0$, so that the RCT population has broader latent support than the OBS population, yielding a realistic covariate-space coverage gap across the two data sources.

The continuous covariates $\mathbf{X}_c\in\mathbb{R}^{n\times n_{\text{cont}}}$ are generated as noisy linear combinations of $\mathbf{Z}$:
\begin{equation}
\mathbf{X}_c=\mathbf{Z}\mathbf{A}+\boldsymbol{\epsilon}_c,
\qquad
\boldsymbol{\epsilon}_c\sim \mathcal{N}\!\left(\mathbf{0},\mathbf{I}_{n_{\text{cont}}}\right),
\end{equation}
where $\mathbf{A}\in\mathbb{R}^{2\times n_{\text{cont}}}$ is a fixed random matrix with i.i.d.\ $\mathcal{N}(0,1)$ entries. This construction propagates the latent support mismatch into the observed continuous features.

The categorical covariates $\mathbf{X}_k\in\{0,1,\ldots,L-1\}^{n\times n_{\text{cat}}}$ are generated via a multinomial logit model parameterized by $\mathbf{Z}$. For each categorical feature $j\in\{1,\ldots,n_{\text{cat}}\}$ and level $\ell\in\{0,\ldots,L-1\}$,
\begin{equation}
\log\frac{\mathbb{P}(X_{k,ij}=\ell)}{\mathbb{P}(X_{k,ij}=0)}
=
\mathbf{Z}_i^\top \mathbf{W}_{z,j,\ell}+b_{j,\ell},
\end{equation}
where $\mathbf{W}_{z,j,\ell}\in\mathbb{R}^2$ and $b_{j,\ell}\in\mathbb{R}$ are fixed random parameters with i.i.d.\ $\mathcal{N}(0,1)$ entries. We then sample $X_{k,ij}$ from the implied multinomial distribution. This ensures that discrete covariates also reflect the underlying latent support structure.

\subsection{Treatment assignment and confounding}
We introduce an unobserved confounder $U$ independently for each unit:
\begin{equation}
U_{\text{rct}}\sim\mathcal{N}(0,1),
\qquad
U_{\text{obs}}\sim\mathcal{N}(0,1).
\end{equation}
In the RCT subsample, treatment is randomized:
\begin{equation}
W_{\text{rct}}\sim \mathrm{Bernoulli}(0.5),
\end{equation}
so $W_{\text{rct}}$ is independent of $(\mathbf{X},\mathbf{Z},U)$.

In the OBS subsample, treatment follows a confounded propensity model depending on observed covariates, latent factors, and $U$:
\begin{align}
\mathrm{Score}_{\text{obs}}
&= 0.6\,\tanh(X_{c,0}) + 0.4\,\sin(X_{c,1})
- 0.3\,\frac{X_{c,2}^2}{1+|X_{c,2}|}
+ 0.5\,Z_1 - 0.2\,Z_2 + 0.9\,U_{\text{obs}}, \nonumber\\
\pi_{\text{obs}}(\mathbf{X},U)
&=\frac{1}{1+\exp\!\left(-\mathrm{Score}_{\text{obs}}\right)}, \\
W_{\text{obs}}
&\sim \mathrm{Bernoulli}\!\left(\pi_{\text{obs}}(\mathbf{X},U)\right).
\end{align}
The nonlinear score together with the inclusion of $U$ yields confounding that cannot be removed by conditioning on observed covariates alone.

\subsection{Outcomes and heterogeneous effects}
We define the baseline (control) outcome as a nonlinear function of $(\mathbf{X},\mathbf{Z},U)$:
\begin{align}
\mu_0(\mathbf{X},U)
&= 1.2\,\tanh(X_{c,0}) + 0.8\,\sin(X_{c,1})
+ 0.5\,\frac{X_{c,2}X_{c,3}}{1+|X_{c,3}|}
- 0.7\,\log(1+|X_{c,4}|) \nonumber\\
&\quad + 0.3\,\frac{Z_1^2}{1+Z_1^2} - 0.2\,\frac{Z_2^2}{1+Z_2^2}
+ 0.6\,U
+ 0.4\,\frac{\mathrm{cat\_contrib}(\mathbf{X}_k,\mathbf{W}_\mu)}{\sqrt{n_{\text{cat}}}},
\end{align}
where the categorical contribution is
\begin{equation}
\mathrm{cat\_contrib}(\mathbf{X}_k,\mathbf{W})
:=\sum_{j=1}^{n_{\text{cat}}} W_{j,\,X_{k,j}},
\end{equation}
and $\mathbf{W}_\mu\in\mathbb{R}^{n_{\text{cat}}\times L}$ is a fixed random weight table with i.i.d.\ $\mathcal{N}(0,1)$ entries.

The individual treatment effect (ITE) depends on both observed and latent features:
\begin{align}
\tau(\mathbf{X})
&= 0.5
+ 0.7\,\tanh\!\left(0.7X_{c,5}+0.3X_{c,6}\right)
- 0.5\,\sin\!\left(0.5X_{c,7}\right)
+ 0.4\,\frac{Z_1}{1+|Z_1|}
- 0.3\,\frac{Z_2}{1+|Z_2|} \nonumber\\
&\quad + 0.25\,\tanh\!\left(\frac{X_{c,8}X_{c,9}}{1+|X_{c,9}|}\right)
+ 0.5\,\frac{\mathrm{cat\_contrib}(\mathbf{X}_k,\mathbf{W}_\tau)}{\sqrt{n_{\text{cat}}}},
\end{align}
with $\mathbf{W}_\tau\in\mathbb{R}^{n_{\text{cat}}\times L}$ another fixed random table. This specification yields substantial heterogeneity through nonlinearities and interactions.

Observed outcomes follow the standard potential-outcomes form:
\begin{equation}
Y=\mu_0(\mathbf{X},U)+W\cdot\tau(\mathbf{X})+\varepsilon,
\end{equation}
where the noise is heteroskedastic:
\begin{equation}
\varepsilon\sim\mathcal{N}\!\left(0,\sigma^2(\mathbf{X})\right),
\qquad
\sigma(\mathbf{X})=0.8+0.2\,\frac{|X_{c,0}|}{1+|X_{c,0}|}.
\end{equation}

\begin{remark}[Salient challenges encoded by the design]
This generator simultaneously captures several obstacles encountered in practice: (i) \emph{high dimensionality} (160 raw features, and 280 after one-hot encoding categorical variables), (ii) \emph{population support mismatch} induced by $\sigma_{\text{rct}}>\sigma_{\text{obs}}$, (iii) \emph{unobserved confounding} in the OBS assignment through $U$, (iv) \emph{nonlinear propensity and outcome mechanisms} that stress model flexibility, (v) \emph{heterogeneous treatment effects} $\tau(\mathbf{X})$ varying across the covariate space, and (vi) \emph{mixed data types} via the combination of continuous and categorical covariates.
\end{remark}

\subsection{Joint Mismatch Visualization In Synthetic Dataset}
\label{app:joint_mismatch}
\begin{figure*}[!t]  
    \centering
    \subfloat[Joint Support Mismatch Across Different Overlap Levels]{
        \includegraphics[width=0.75\linewidth]{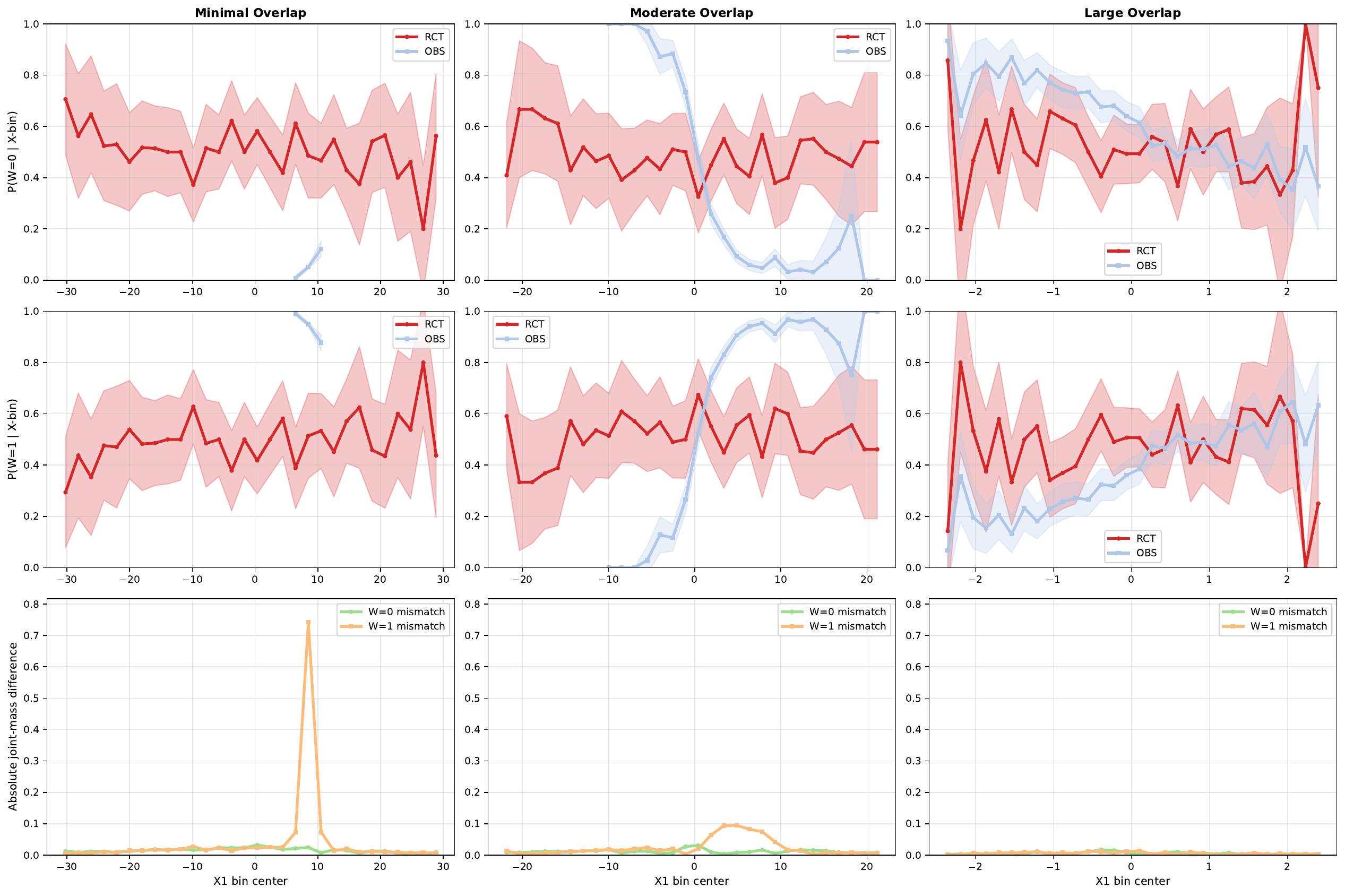}
        \label{fig:single_non-overlap}
    }
    \hfill
    \subfloat[Marginal Support Mismatch Across Different Overlap Levels]{
        \includegraphics[width=0.75\linewidth]{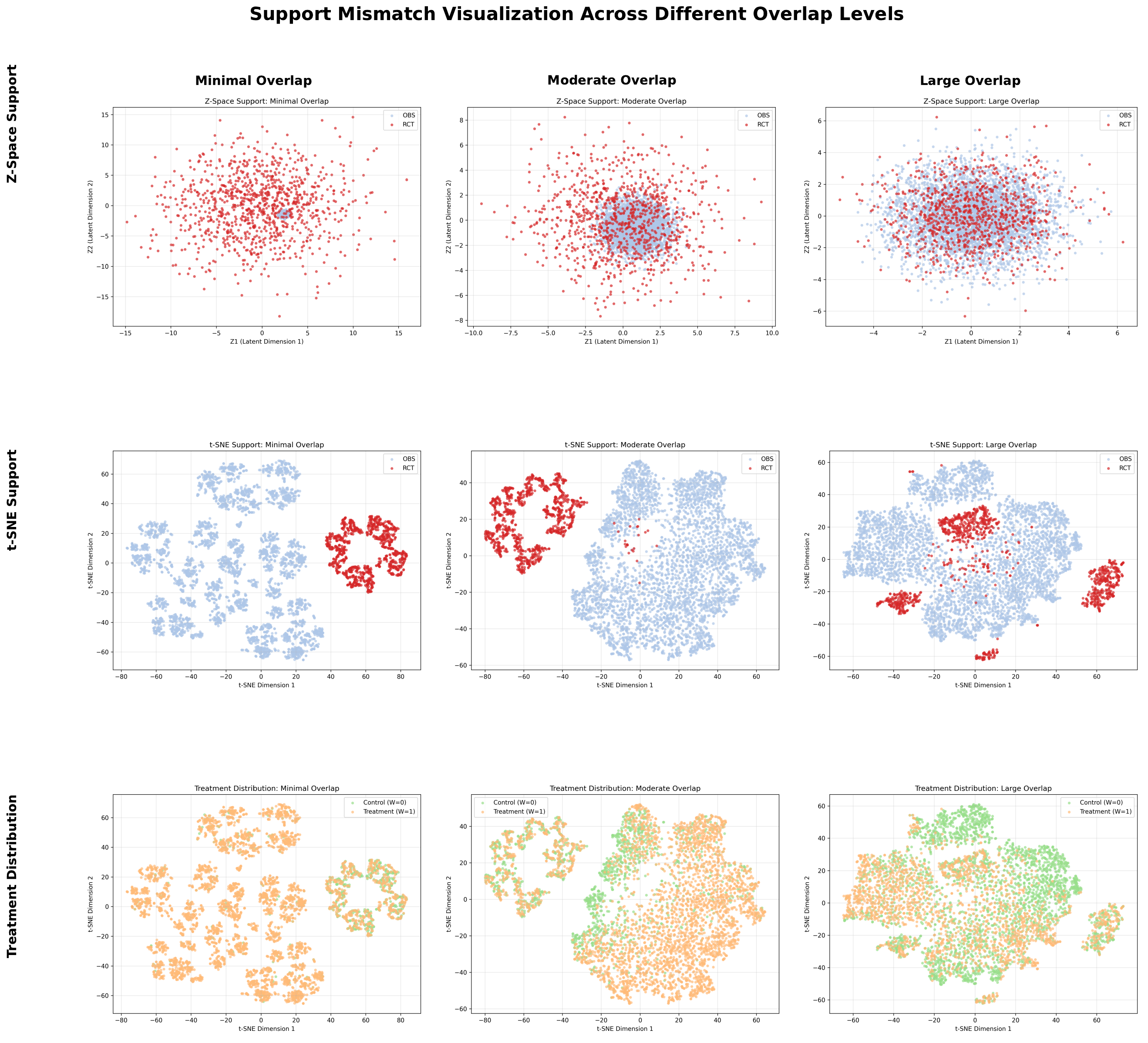}
        \label{fig:joint_non-overlap}
    }
    \caption{Support Mismatch Across Different Overlap Levels (Two Scenarios)}
    \label{fig:all_non-overlap}
\end{figure*}

In the first two rows of Figure~\ref{fig:single_non-overlap}, we explicitly control the support of the OBS and RCT samples through the latent variable $Z$(See Appendix~\ref{app:dgp}), generating datasets with increasing degrees of non-overlap. While the covariate support differs across data sources, the treatment assignments in both RCT and OBS remain well-populated, ensuring that marginal treatment non-overlap (Definition~\ref{def:marginal_nonoverlap}) does not arise. Figure~\ref{fig:joint_non-overlap} instead characterizes \emph{joint} distribution mismatch by comparing the empirical conditional treatment probabilities $P(W=w\mid X\text{-bin})$ for $w\in\{0,1\}$ between RCT and OBS, and quantifying their absolute mass differences across covariate bins. As the latent representations of the two data sources become increasingly aligned, the overlap of their covariate support expands ,from a joint perspective, the discrepancy between conditional treatment probabilities given $X$ diminishes.

\subsection{Baseline Introduction}
\label{app:baseline_intro}
\paragraph{Structurally grounded fusion methods.}
Early approaches to RCT--OBS integration treat the RCT as the primary source of identification and incorporate observational data through explicit structural assumptions on confounding bias.
\begin{itemize}
    \item \textbf{Experimental Grounding.}  
    Corrects observational estimates using RCTs as an external grounding signal, typically assuming that confounding bias admits a low-complexity parametric representation.
    While such methods can achieve consistency under limited overlap and correct specification, they rely on strong structural assumptions and become fragile under pronounced covariate shift, where extrapolation is unavoidable.
\end{itemize}

\paragraph{Statistical integrative estimators with explicit bias decomposition.}
Subsequent work formalizes data fusion by explicitly decomposing observed effects into a true heterogeneous treatment effect (HTE) and a confounding component, establishing identifiability and efficiency gains under transportability and parametric modeling assumptions.
\begin{itemize}
    \item \textbf{Integrative R-Learner.}  
    Extends the R-learner framework to jointly estimate HTEs and confounding functions using both RCT and OBS data.
    This approach improves efficiency over RCT-only estimation but remains fundamentally limited to regions supported by the RCT and does not address structural or conditional non-overlap.
    
    \item \textbf{Yang2022 Elastic Integrative Estimator.}  
    Introduces adaptive or elastic combinations of RCT and OBS information to guard against severe observational bias.
    While improving robustness to misspecification, it focuses on low-dimensional, pre-specified effect modifiers and may discard substantial observational information under strong covariate shift.
    
    \item \textbf{Yang2024 Integrative HTE Estimator.}  
    Further refines integrative estimation through improved modeling of confounding and efficiency trade-offs.
    Despite empirical gains, the method continues to rely on overlap assumptions and does not fundamentally resolve conditional support mismatch.
\end{itemize}

\paragraph{Orthogonal and RCT-centric estimators.}
\begin{itemize}
    \item \textbf{RCT-only R-Learner.}  
    A Neyman-orthogonal R-learner estimated exclusively on RCT data.
    This approach guarantees causal identification but ignores observational information, resulting in high variance and limited efficiency when randomized samples are scarce.
\end{itemize}

\paragraph{Non-structural neural estimators applied to mixed RCT--OBS data.}
More recent practice often applies flexible neural network--based CATE estimators directly to combined RCT and OBS data, relying on representation learning or balancing regularization to mitigate confounding.
\begin{itemize}
    \item \textbf{T-Learner.}  
    Estimates potential outcomes separately for each treatment using flexible predictors.
    Although expressive, it is highly sensitive to covariate imbalance and lacks mechanisms to ensure valid data fusion.
    
    \item \textbf{TARNet.}  
    Learns a shared representation with treatment-specific heads to improve generalization across treatment groups.
    This approach implicitly assumes observational ignorability and offers no protection against conditional non-overlap.
    
    \item \textbf{CFRNet.}  
    Augments TARNet with an IPM-based regularizer to align covariate distributions across treatment groups in the latent space.
    While effective for marginal imbalance, such alignment is insufficient when joint support mismatch arises conditionally on covariates.
    
    \item \textbf{DragonNet.}  
    Further extends TARNet-style architectures with auxiliary propensity prediction to stabilize training.
    Despite empirical improvements in some settings, it remains vulnerable to structural non-overlap and lacks identification-based guarantees for RCT--OBS integration.
\end{itemize}

\paragraph{Ablations of the proposed method.}
To isolate the contribution of each component, we additionally report two ablated variants of the proposed approach:
\begin{itemize}
    \item \textbf{Dual Loss Only.}  
    Enforces moment-based feasibility without explicit overlap-aware alignment, limiting robustness under severe conditional non-overlap.
    
    \item \textbf{Wasserstein Only.}  
    Performs distribution alignment without enforcing feasibility constraints, leading to inadequate identification when support mismatch is structural.
\end{itemize}

Overall, these baselines illustrate the progression from structurally grounded fusion methods to flexible but largely non-structural neural estimators.

\section{Main Results}

\begin{table}[!h]
		\captionsetup{skip=0pt, font=scriptsize}
		\centering
		\caption{Performance comparison under different overlap conditions.}
        \label{tab:pc}
		\resizebox{\textwidth}{!}{
			\begin{tabular}{lcccccc}
				\toprule
				\multirow{2}{*}{\textbf{Method}} & 
				\multicolumn{2}{c}{\textbf{large\_overlap}} & 
				\multicolumn{2}{c}{\textbf{moderate\_overlap}} & 
				\multicolumn{2}{c}{\textbf{minimal\_overlap}} \\
				\cmidrule{2-3} \cmidrule{4-5} \cmidrule{6-7}
				& \textbf{Qini (Mean±Std)} & \textbf{MSE (Mean±Std)} & 
				\textbf{Qini (Mean±Std)} & \textbf{MSE (Mean±Std)} & 
				\textbf{Qini (Mean±Std)} & \textbf{MSE (Mean±Std)} \\
				\midrule
				
				\rowcolor{gray!20}
				\textbf{Own Method (Dual Loss + Wasserstein)} & 
				0.5284±0.1336 & 0.9474±0.1551 & 
				0.5499±0.1202 & 1.0221±0.1116 & 
				0.5137±0.0744 & 0.7961±0.0945 \\
				
				\textbf{Own Method (Dual Loss Only)} & 
				0.1670±0.2077 & 0.6906±0.2788 & 
				0.0942±0.2773 & 0.7517±0.3043 & 
				0.0254±0.1076 & 0.6295±0.2144 \\
				
				\textbf{Own Method (Wasserstein Only)} & 
				0.1224±0.2217 & 0.7433±0.3030 & 
				0.1290±0.2336 & 0.7554±0.2517 & 
				0.0057±0.1450 & 0.5944±0.1601 \\
				
				\textbf{T-Learner} & 
				0.4919±0.1815 & 0.8417±0.3851 & 
				0.5227±0.1810 & 0.7393±0.3602 & 
				0.0951±0.1818 & 0.9844±0.4109 \\
				
				\textbf{TARNet} & 
				0.0370±0.4010 & 1.1621±0.5693 & 
				0.1620±0.3294 & 0.8011±0.2787 & 
				0.0196±0.1728 & 0.8409±0.3920 \\
				
				\textbf{CFRNet} & 
				0.0688±0.3555 & 1.0916±0.6351 & 
				0.1783±0.3664 & 0.8858±0.4217 & 
				-0.0229±0.1741 & 0.7580±0.2997 \\
				
				\textbf{DragonNet} & 
				0.1194±0.4018 & 0.9785±0.5940 & 
				0.1701±0.3106 & 1.0110±0.6626 & 
				0.0379±0.1966 & 0.7777±0.3210 \\
				
				\textbf{Experimental Grounding} & 
				0.2697±0.1900 & 6.1155±1.6609 & 
				0.2282±0.1596 & 7.9774±2.2844 & 
				0.1386±0.1009 & 18.5910±11.5983 \\
				
				\textbf{Integrative R-Learner} & 
				0.3104±0.1916 & 2.0738±0.5419 & 
				0.3037±0.1703 & 2.7009±0.5453 & 
				0.2478±0.0910 & 4.1239±1.6499 \\
				
				\textbf{RCT-only R-Learner} & 
				0.2516±0.2027 & 3.1449±0.7694 & 
				0.2719±0.1756 & 3.6738±0.7417 & 
				0.2094±0.0935 & 5.8326±2.8207 \\
				
				\textbf{Yang2024 Integrative HTE} & 
				0.2762±0.2727 & 0.8671±0.3350 & 
				0.1821±0.3125 & 0.8716±0.3333 & 
				0.1554±0.3307 & 0.7317±0.2409 \\
				
				\textbf{Yang2022 Elastic Integrative} & 
				0.4342±0.2596 & 0.6679±0.2892 & 
				0.1594±0.3327 & 0.9108±0.4571 & 
				0.3414±0.1716 & 0.4254±0.0875 \\
				
				\bottomrule
			\end{tabular}
		}
        
	\end{table}

\begin{table}[!h]
    \captionsetup{skip=0pt, font=scriptsize, justification=centering}
    \centering
    \caption{Performance comparison across datasets.}
    \label{tab:realdata_results}
    \resizebox{\textwidth}{!}{
        \begin{tabular}{lcccccccc}
            \toprule
            \multirow{2}{*}{\textbf{Method}} &
            \multicolumn{2}{c}{\textbf{Dataset 1}} &
            \multicolumn{2}{c}{\textbf{Dataset 2}} &
            \multicolumn{2}{c}{\textbf{Dataset 3}} &
            \multicolumn{2}{c}{\textbf{Dataset 4}} \\
            \cmidrule{2-3} \cmidrule{4-5} \cmidrule{6-7} \cmidrule{8-9}
            & \textbf{QINI} & \textbf{MAPE}
            & \textbf{QINI} & \textbf{MAPE}
            & \textbf{QINI} & \textbf{MAPE}
            & \textbf{QINI} & \textbf{MAPE} \\
            \midrule

            \textbf{Baseline} &
            0.64079 & 0.150023 &  
            0.645225 & 0.145629 &
            0.632018 & 0.094185 &
            0.654108 & 0.09166 \\

            \textbf{OBS-Only} &
            0.606678 & 0.19851 &
            0.619615 & 0.186076 &
            0.605966 & 0.175343 &
            0.639509 & 0.158317 \\

            \textbf{RCT-Only} &
            0.643411 & 0.087209 &
            0.65172 & 0.051301 &
            0.653134 & 0.061832 &
            0.656666 & 0.140676 \\

            \rowcolor{gray!20}
            \textbf{Own-method} &
            0.650734 & 0.07772 &
            0.654392 & 0.028834 &
            0.660753 & 0.061657 &
            0.661447 & 0.048266 \\

            \textbf{Own-method (Joint-only)} &
            0.648422 & 0.09018 &
            0.647558 & 0.044938 &
            0.605987 & 0.114192 &
            0.65322 & 0.062069 \\

            \textbf{Own-method (IPM-only)} &
            0.636355 & 0.066392 &
            0.651405 & 0.097443 &
            0.649587 & 0.06438 &
            0.654507 & 0.060238 \\

            \bottomrule
        \end{tabular}
    }
    \scriptsize \textit{Note}: All data has been desensitized.
\end{table}

\begin{figure*}[!h]
    \centering
    \includegraphics[width=1\linewidth]{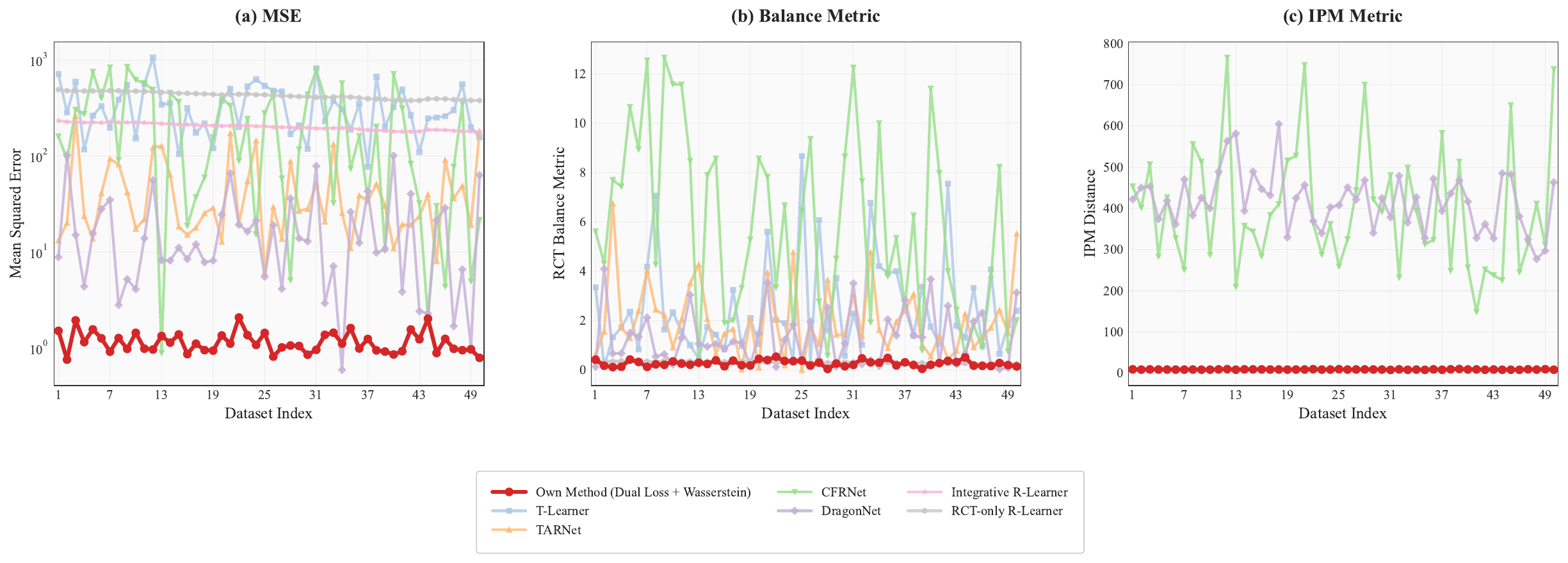}
    \caption{Robustness under Severe Conditional Non-overlap}
    \label{fig:eval_server}
\end{figure*}

\end{document}